\newcommand{\corn}{C-CoRn}
\newcommand{\demph}{\textbf}
\newcommand{\ask}{\operatorname{?}}
\newcommand{\answer}{\operatorname{!}}
\newcommand{\cprop}{\texttt{Prop}}
\newcommand{\bool}{\mathbb{B}}
\newcommand{\true}{\mathrm{true}}
\newcommand{\concat}{\ensuremath{{}+ \!\! +{}}}
\newcommand{\fst}{\operatorname{fst}} 
\newcommand{\snd}{\operatorname{snd}}
\newcommand{\inl}{\operatorname{inl}}
\newcommand{\inr}{\operatorname{inr}}
\newcommand{\option}{\operatorname{opt}}
\newcommand{\some}{\operatorname{Some}}
\newcommand{\none}{\operatorname{None}}
\newcommand{\seq}{\operatorname{seq}}
\newcommand{\A}{A}
\renewcommand{\AA}{\mathcal A}
\newcommand{\Q}{Q}
\newcommand{\Sp}{\mathbb S}
\newcommand{\Os}{\mathcal O}
\newcommand{\QQ}{\mathbb Q}
\newcommand{\B}{\mathcal B}
\newcommand{\X}{\mathbf X}
\newcommand{\Y}{\mathbf Y}
\newcommand{\M}{\mathbf M}
\newcommand{\Z}{\mathbf Z}
\newcommand{\NN}{\mathbb N}
\newcommand{\RR}{\mathbb R}
\newcommand{\ZZ}{\mathbb Z}
\newcommand{\mto}{\rightrightarrows}
\newcommand{\dom}{\operatorname{dom}}
\newcommand{\size}[1]{|#1|}
\newcommand{\str}{\mathbf}
\newcommand{\FTMF}{\texttt{F2MF}}
\newcommand{\Metric}{\textsc{Metric}}
\newcommand{\Coq}{\textsc{Coq}}
\newcommand{\Incone}{\textsc{Incone}}
\newcommand{\Rlzrs}{\textsc{Rlzrs}}
\newcommand{\Mf}{\textsc{mf}}
\keywords{Computable Analysis, Formal proofs, Coq proof assistant, Closed Choice}
\title{Computable analysis and notions of continuity in Coq}
\author[F.~Steinberg]{Florian Steinberg\rsuper{a}}
\address{\lsuper{a}INRIA Saclay, Bât 650, Rue Noetzlin, 91190 Gif-sur-Yvette, France}
\author[L.~Th\'ery]{Laurent Th\'ery\rsuper{b}}
\address{\lsuper{b}INRIA Sophia Antipolis, 2004 route des Lucioles, BP 93, 06902, Sophia Antipolis Cedex, France}
\author[H.~Thies]{Holger Thies\rsuper{c}}
\address{\lsuper{c}Kyushu University, 744 Motooka, Nishi-ku, Fukuoka 819-0395, Japan}
\begin{document}
\bibliographystyle{alpha}
\maketitle
\begin{abstract}
  We give a number of formal proofs of theorems from computable analysis.
  Many of our results specify executable algorithms that work on infinite inputs by means of operating on finite approximations.
  The proofs that these algorithms are correct in the sense of computable analysis are verified in  the proof assistant \Coq{} heavily relying on the \Incone{} library for information theoretic continuity.
  This library is developed by one of the authors and the paper can be used as an introduction to it.
  \Incone{} formulates the continuity-theoretic aspects of computable analysis.
  It is designed in such a way that it can be combined with \Coq{}'s Type/Prop distinction to provide a general purpose interface for algorithmic reasoning on continuous structures and many of our results provide complete computational content.
  
  The results that provide complete computational content include that the algebraic operations and the efficient limit operator on the reals are computable, that the space of infinite sequences is isomorphic to a space of functions, compatibility of the enumeration representation of subsets of natural numbers with the abstract definition as space of functions to Sierpinski space and that continuous realizability implies sequential continuity.
  We also formalize proofs of non-computational results to support the correctness of the definitions:
  We show that the information theoretic notion of continuity used in \Incone{} is equivalent to the metric notion of continuity on Baire space.
  We give a complete comparison of the different concepts of continuity that arise from metric and represented-space structures.
  Finally we prove discontinuity of the tasks of finding the limit of a converging sequence of real numbers and selecting an element from a closed subset of the natural numbers.
\end{abstract}

\section{Introduction}\label{sec: introduction}

Computable analysis is the theory of computing on continuous structures.
Its roots are often cited as going back to Turing’s fundamental paper from 1936 in which he introduced his mathematical model of computation later known as Turing machine \cite{turing_computable_1936}.
Turing's original definitions relied on the binary representation and was flawed.
In his 1937 correction he adapted it to use a signed binary representation that is still common today \cite{turing1938computable}.
For the idea behind the corrected definition he pointed to earlier work from constructive analysis by Brouwer \cite{MR0532661}.
The theory of computable functions on the real numbers was further developed in the 1950s by Grzegorczyk and Lacombe in parallel \cite{MR0089809,MR0105357}.
Finally, Kreitz and Weihrauch extended the results to apply to more general spaces and introduced the formal framework of representations that is standard today \cite{kreitz1985theory}.

The basic idea behind computable analysis is fairly simple:
To make uncountable structures available to computation, one encodes them by infinitary objects that can still be operated on mechanically.
Most commonly infinite strings are used, but more conveniently one may use functions between discrete structures.
An example of a reasonable encoding of real numbers is to describe them by functions that provide arbitrarily accurate approximations.
The inputs and outputs of such functions can be rational numbers and thus be described by finite means.
To compute functions on the real numbers, one operates on these encodings and algorithms use a model of computation that can handle infinite inputs while remaining realistic in the sense of being implementable.

Operating on continuous data such as real numbers is necessary in many applications.
The use of software based on computable analysis for solving practically relevant problems is fairly uncommon.
Partly this is due to the strict correctness assumptions that are imposed by computable analysis.
These involve high-level mathematical concepts and require algorithms to scale to arbitrary precision.
On one hand, due to the conceptual complexity, checking correctness of programs in the sense of computable analysis can be subtle and mathematical simplicity of algorithms is valued highly.
On the other hand, for practical applications performance is a central topic.
Optimizations that rely on the specifics of the hardware are not uncommon and may fail to scale.
Algorithms from computable analysis are typically difficult to optimize for performance and thus their main area of impact are applications that require utmost reliability.

Software packages based on computable analysis require their users to keep track of the details of the semantics to maintain the high reliability that computable analysis provides in principle.
This task becomes increasingly challenging to do by hand when programs become more complicated.
For instance, when dividing a program into components to be optimized separately, each subtask has to be carefully annotated with the properties that are needed for correctness of the combined procedure.
These problems are not specific to computable analysis, however, in computable analysis understanding annotations typically requires a fair amount of mathematical background knowledge.
As a consequence, many of the developers of the more popular software packages for computable analysis have started to look into possibilities to link their software to some sort of formal verification.
Ideally, this should make it possible for mathematicians to implement their ideas without having to deal with the low level details of programming languages and relying on components that computer scientists can develop and optimize.
Formal specifications guarantee that the overall programs do not suddenly become incorrect due to an optimization that changes the semantics in a subtle way and has not been properly communicated.

This paper describes a formal development where we formulated some well-known facts from computable analysis and give computer verified correctness proofs.
For the more basic operations we proceed by first obtaining a target specification from the theoretical background provided by computable analysis and then specifying a fully executable algorithm and prove it to fulfill the specification.
For more advanced operations we argue that the desired algorithm can be constructed from any set of solutions of more basic operations and finally obtain an executable program by combining what we, or other formal developments, provided earlier.
The algorithms we produce are far from being competitive in terms of speed or memory consumption.
Their relevance is more to witness theoretical implementability or prove impossibility of such an implementation.
This means that the developed theory can safely be used as a basis for verification of software packages that operate more efficiently \cite{zbMATH01746043,konecny2008aern,balluchi2006ariadne}.
Due to recent developments in verified numerics it may also be possible to optimize the algorithms underlying our proofs quite a bit without sacrificing the additional safety provided by exclusively working inside of a proof assistant \cite{BolMel17}.

The standard references to find basic facts from computable analysis are \cite{pour-el,MR1137517,weihrauch_computable_2000}.
In \cite{Bauer:2000:RAC:933370,schroeder_phd,pauly2016topological} similar topics are presented in a form that is somewhat closer to how this paper proceeds.
This paper restates many of the basic definitions from computable analysis to outline how our formal development diverges from a traditional treatment.
However, we do not attempt to give exhaustive explanations or justify more than the differences to the traditional definitions.
We assume the reader to have enough background in computable analysis to see the reasonability of the definitions used there.
For readers without background in computable analysis let us quickly point out that the model of computation used in computable analysis is by far not the only popular model for operating on functional inputs and that differences to other models can be quite subtle.
To avoid confusion, readers from adjacent fields may want to consult sources that give a systematical comparison \cite{avigad_brattka_2014,longley2015higher}.

\subsection{\Coq{} and proofs about continuous structures and computation}

As a framework for formal verification of proofs from computable analysis, we use the proof assistant \Coq{}.
This proof assistant is type theory based, but the content of this paper should be understandable with little to no background in type theory or experience with \Coq{} for that matter.
We hope that the basic facts that do make an appearance are comprehensible from the brief description of the basic workings of \Coq{} provided in this section.
For it to be possible to link the results from this paper to the formal development we provide some additional information in Appendix~\ref{app: incone}.

\Coq{} is a proof assistant that supports mathematicians in giving fully formal proofs of their results.
Ideally, a \Coq{} development looks very similar to a mathematical paper.
It consists mostly of definitions and lemmas with some explanations and some documentation in between.
The definitions specify objects that the developer is interested in and properties he wants these objects to have.
Lemmas provide evidence that some of the objects actually have some of the properties.
In practice, formal developments fall short in readability due to the need to specify every last detail and the tendency to optimize proofs for brevity instead of readability.
However, any person that has access to a computer with a \Coq{} installation can verify the correctness of a \Coq{} development without the need to understand all of its details.
Furthermore, even without understanding the details of the proofs of lemmas defined by others, it is possible to reuse them without being afraid of making errors as \Coq{} will check all the details and prevent the user from mixing up definitions or missing details.

There is a wide variety of proof assistants to choose from.
The \Coq{} system is particularly appropriate for the purpose of this paper.
This is because it is designed to support a mathematician in maintaining and extracting computational meaning while doing mathematical proofs.
The most common way of doing so is by resorting to exclusively constructive reasoning and the mathematician who chooses to do so is rewarded by programs that can be executed directly in \Coq{}.
This provides a very high standard of reliability.
For achieving this executability \Coq{} relies on the Curry-Howard correspondence to identify proofs and programs.
Indeed, proving a lemma in \Coq{} can be seen as using a high level programming language to specify how evidence of its truth can be obtained from the given inputs.
Some of the inputs here may not be traditional input data but instead evidence that the assumptions of the lemma are fulfilled.
Thus, proving a lemma may be understood as a special case of a defining a function.
One may also invert this and follow a \Coq{} definition with a proof providing the value in the high level language instead of specifying it manually.

However, it is important to keep up the distinction between providing evidence of a property and producing a value.
Just like in mathematics, the details of a value can be important while a lemma should be chosen such that all information is contained in its statement so that it can be used independently of why it is true:
It should be such that it is never necessary to unfold the proof and the exact form of the evidence provided is irrelevant.
This means that the corresponding lemmas do not have computational content and are purely for specification purposes and is independent from whether one chooses to work constructively or not.
Such lemmas can be marked as correct and the details of their proofs can be hidden.
In \Coq{} a collection \cprop\ of values that should be considered properties is provided and meant to be used to mark parts that do not have computational content in the above sense.
The distinguishing feature is that a definition of a function cannot depend on details (i.e.\ the proof) of inputs that are marked properties.
This rule is what allows the code extraction machinery of the \Coq{} system to safely disregard these parts as non-computational.

A mathematician working with \Coq{} will quickly run into statements that should intuitively be true but cannot be proven.
An example that often causes mathematicians to be confused is functional extensionality:
For functions $f$ and $g$ of the same type mathematicians would assume that the statement \verb$(forall a, f(a) = g(a)) -> f = g$ is true, but this is not provable in \Coq{}.
Depending on the upbringing of the mathematician, another example may be the law of excluded middle.
\Coq{} allows to assume the truth of properties such as the two above by stating them as axioms.
Many mathematical developments force the truth of functional extensionality by assuming it in this way.
Other popular axioms include choice principles.
Of course, one has to make sure that the axioms are compatible with \Coq{}'s internal logic and compatible with each other.
\Coq{}'s official webpages\footnote{\url{https://github.com/coq/coq/wiki/The-Logic-of-Coq\#axioms}} list some known facts about consistencies of axioms that are often used.

\subsection{Computational content in presence of axioms}\label{sec: comp-cont}
In the previous paragraph we spoke about ``axioms'' as assumptions about the logical background theory.
This should be distinguished from local axioms such as those for algebraic structures that only apply to a local context.
The best practice would be to never assert axioms as global assumptions and always prove implications that their truth has certain consequences.
However, if certain assumptions about the logical background theory are consistently used throughout a development, stating them as axioms leads to considerably cleaner overall appearance.
In our development we do this for the law of excluded middle, for functional extensionality and for some choice principles.

In principle, assuming any consistent set of axioms that state properties is an acceptable practice, but one should be aware that the use of axioms can make preserving and extracting computational content challenging.
Depending on the nature of the axioms, the loss in executability is gradual.
More specifically, axioms that assert properties typically prevent execution inside of the \Coq{} proof assistant, but an executable program in a functional programming language can still be obtained using \Coq{}'s code extraction capabilities.
Note that it is thus a plausible assumption that any function that can be defined in \Coq{} is computable even if it relies on this kind of axiom.
This remains true if the set of axioms is inconsistent:
extracted programs are still executable but due to the inconsistency it can be proven that an executable program exhibits incomputable behavior as any specification can be proven in this case.

However, \Coq{} also allows to use axioms to introduce objects that cannot realistically be constructed.
In this case the extracted algorithm can be understood to make use of a subroutine that is not provided and if it is provably impossible to implement this missing part appropriately, the produced code should be considered useless.
As example let us consider the real numbers that are axiomatized as an archimedean closed field in \Coq{}'s standard library.
This axiomatization asserts a function $\verb$up$\colon \RR \to \ZZ$ to be available of which it is assumed that it returns the least integer bigger than its input.
No computable such function exists and as a consequence, code extracted from statements about the real numbers in the standard library is rarely of practical use.
At best one obtains algorithms whose correctness relies on availability of an algorithm implementing the $\verb$up$$ function and other assumptions such as exact operations on real numbers.
A replacement of the reals by any realistically implementable type leads to an almost guaranteed loss of correctness.
This is not to say that the axiomatization is useless for specification of algorithms.

While the \verb$up$ function may not realistically implementable, a mathematician may take the position that it still exists and having direct access to such objects can be desirable.
The existence of the $\verb$up$$ function could also be stated as a property of the real numbers by hiding it behind an existential quantifier.
For the mathematical development this would result in a lot of inconveniences where an existential quantifier has to be resolved and a uniqueness lemma has to be used.
Moreover, the very mechanism that the \Coq{} system uses to support its users in keeping the distinction between computational and non-computational content will disallow the use of the \verb$up$ function definitionally if this is done.
This system can in more general cases appear as a hurdle to users who do classical mathematics, where more liberal definitional thinking is common practice.
For instance, it generally disallows branching over properties and as a result a definition by cases is often not possible even if one can prove that one of the cases must always be true.
To do branching one needs to first find an algorithm that decides the truth by returning a Boolean and such an algorithm is at times hard to come by with.
More recent developments in \Coq{}'s community for formalization of results from analysis take an even clearer stance on these topics and assume the full strength $\varepsilon$-axiom to make the development of abstract mathematics more convenient \cite{JFR8124}.

\subsection{Related work}
Verification of real and numerical analysis is an active field of research and several projects in \Coq{} and other proof assistants exist.
We do not attempt to give an overview here but instead refer to Boldo, Lelay and Melquiond's survey paper \cite{BLM16}.
Let us still briefly mention the projects that are most relevant for our formalization.

As described in the previous section, the definition of the reals in the standard library makes it difficult to strictly separate computational and non-computational content.
A solution for the real numbers that neither gives up on convenience nor interferes with the code extraction capabilities is to treat non-computational assumptions such as the $\verb$up$$ function as parameters of a real number structure and is currently under development as ``mathcomp analysis'' \cite{affeldt:hal-02463336}.
However, there exists a vast body of work building on the classical axiomatization of the reals in the standard library.
A good example for this is the Coquelicot library \cite{boldo2015coquelicot}, a widely used library for real analysis that is conservative over this axiomatization.
To make it possible to reuse previous results of this kind, we rely on the axiomatization from the standard library for the present work and manually avoid improper use of the logical strength provided by this axiomatization.

A quite different approach is taken by the \corn{} library \cite{cruz2004c}.
The library defines a fully computational formalization of real numbers.
It provides a wide range of results about functions on real numbers and some about operators on function spaces and includes an exhaustive treatment of metric spaces and uniformly continuous functions between metric spaces \cite{conorPhD}.
Its design follows the development of constructive analysis by Bishop and Bridges \cite{bishop2012constructive}.
    Our treatment of the real numbers rarely goes beyond what can already be found in \corn\ and many parts are inspired by it.
    This said, it should also be noted that the constructive nature may make the \corn\ library and the publications related to it difficult to access for some classically trained mathematicians.

    In contrast to that, the \Incone{} library follows the usual approach that is taken in computable analysis:
    The mathematical background theory is developed classically and the algorithmic content is considered extra information about data representation that should follow the mathematical understanding.
That is, it distinguishes between computational results in the form of algorithms and their correctness proofs. 
A similar approach is taken in recent work in verified numerics where an abstract mathematical theory is developed in a first step and computational content provided in an additional step by using the mathematical libraries to prove floating-point algorithms correct \cite{BolMel17}.
We hope that the ``backward approach'' taken by the \Incone{} library allows for synergy with such developments and complement the ``forwards approach'' of working completely constructively as is done in developments such as \corn.
\subsection{Structure of the paper and main results} 

The main contributions of this paper are described in Section \ref{sec: metric spaces} with some exceptions that appear earlier in Section \ref{sec: represented spaces}.
All theorems, propositions and lemmas in this paper have been formally proven in \Coq{} and have explicit pointers to their name in the \Incone{} library.
However, before we get to the core topics, some discussion of background is necessary.
Section \ref{sec: introduction} contains the introduction, pointers to literature for reading up on computable analysis and a short description of some aspects of \Coq{} that are relevant for understanding some of the more important remarks that directly address the formal development.

In Section \ref{sec: operators} we introduce the concept of continuity of partial operators on Baire space.
As a preparation for a proper treatment of partiality in \Coq{}, we introduce basic concepts from the theory of multivalued functions, a formalization of which is provided by \Incone{}'s sublibrary \Mf{} for specification of functions through relations.
We then discuss how to capture partial computable functions and operators in \Coq{} by relying on a modification of the fuel based approach to diverging computation in type theories.
This construction is one of the core concepts that is revisited many times throughout the rest of the paper.
The second part gives an information theoretic description of continuity on Baire space and an overview over the formalization of this notion in the \Incone{} library.
The third part presents the universal that the \Incone{} library uses to implement the function space construction from computable analysis.

Section \ref{sec: represented spaces} deals with the basic concepts from computable analysis, explains how they are realized in \Incone{} and introduces the real numbers as an example that is used through the rest of the section.
The first part of the section explicitly describes how a few of the simple type constructions like products are automated in the library.
Arithmetic operations on the real numbers are used as a basic example to demonstrate the workings of these constructions.
The second part describes how spaces of sequences can be constructed and considers point-wise operations on spaces of sequences and the limit operator on the real numbers as concrete examples.
From a category theoretical point of view the space of sequences is a countably infinite product of a space with itself.
As such it is of particular interest as the existence of countably infinite products is only guaranteed in the case where all continuous functions are considered as morphisms and may fail to exist if one restricts to computable ones.
The final third part builds exponentials using a construction that is known to work for both these categories.
It presents a formal proof that the space of sequences can be recovered as an exponential.

The final section (Section \ref{sec: metric spaces}) starts with a brief description of the metric library and a comparison to other formalizations that have a similar purpose.
The first part presents a formal proof that information theoretic notion of continuity that the \Incone{} library uses internally is equivalent to the more traditional approach of equipping Baire space with an appropriate metric.
The second part presents formal proofs about the relation of different concepts of continuity in metric spaces and represented spaces that can be constructed from such.
The final part introduces Sierpinski space as a space that can be used to abstractly reason about open and closed subsets of represented spaces.
For the concrete case of the natural numbers, the spaces of open and closed sets can be given more concrete representation that encode such sets by enumeration of themselves or their complement.
These concrete representations are proven equivalent to the abstract representations and the equivalence is used to prove that the task of selecting an element of a closed set does not have a continuous solution.

The formal proofs that we consider related to this publication and part of its main contributions are that the space of infinite sequences in a space isomorphic to a space of functions (Theorem \ref{resu: sig_iso_fun}), that the algebraic operations and a limit operator for fast-converging Cauchy sequences on the reals are computable (Examples \ref{resu: R operations} and \ref{resu: R limits}), compatibility of the enumeration representation of subsets of natural numbers with the abstract definition of the space of open subsets of the natural numbers (Theorem~\ref{resu: ON_iso_Onat}), and that continuous realizability implies sequential continuity (Theorem \ref{resu: cont_scnt}).
The previous results are fully algorithmic, but we also describe many non-computational theorems.
These include numerous specification results for the constructions in the \Incone{} library (in particular Theorems \ref{resu: FU_cont}, \ref{resu: D_spec}, \ref{resu: eval_cont} and Proposition \ref{resu: cprd}), a proof that the information theoretic notion of continuity used in the library is equivalent to the metric notion of continuity on Baire space (Theorem \ref{resu: cont_cont}), a complete comparison of the different concepts of continuity that arise from metric and represented-space structures (Theorems \ref{resu: scnt_mscnt} and \ref{resu: cont_mcont}) and the discontinuity of the unrestricted limit operator on the real numbers (Example \ref{resu: R limits}) and the task of selecting an element of a closed subset of the natural numbers (Theorem \ref{resu: CN_not_cont}).

\section{Multifunctions and partial operators on Baire space}\label{sec: operators}
Baire space $\NN^\NN$ is the space of all total functions from natural numbers to natural numbers.
Baire space comes with the structure of a topological space and a notion of computability of its elements.
Computable analysis transfers the computability and topological structure of Baire space to more general spaces by means of encodings that are called representations.
Before we go into detail about how this can be done, this chapter describes the structure on Baire space that we need.
We will be interested in functions on Baire space and for such the natural numbers appear in a number of different roles.
To avoid confusion, one may be tempted to name the distinct copies of the natural numbers differently.
We will do so and name the set of inputs $\Q$ for questions and the outputs $\A$ for answers so that Baire space takes the form $A^Q$.
This naming is chosen such that it matches the use of Baire space in computable analysis and to avoid coding it can often be convenient to chose $\Q$ and $\A$ to not be the natural numbers but something that can sufficiently concretely be encoded by natural numbers.

Motivated by the above we call a space of the form $\B = \A^\Q$ a \demph{naming space} if $\Q$ and $\A$ are countable and non-empty.
Here, the letter $\B$ is chosen because these spaces take the role that is traditionally taken by Baire space in computable analysis.
The phrase ``naming space'' reflects how Baire space, and thus in our treatment also any of these spaces, is used in computable analysis.
Classically the assumptions about $\Q$ and $\A$ imply that these sets are either finite or bijectively related to the natural numbers.
Constructively this need not be true, or at least depends on the notion of countability that is used.
Indeed, if computability considerations come in, that is, if the surjection whose existence is guaranteed by the countability is considered an encoding of the elements of the set by natural numbers, more care has to be taken.
The critical reader may in the following replace any occurrence of $\Q$ and $\A$ and their dashed variants by $\NN$.
In the applications that we look at, these substitutions can be carried out by hand.

We will mainly be interested in partial functions between naming spaces.
While \Coq{} comes with a native concept of a function, for \Coq{} to accept a definition of a function, the function must be total.
Thus, for our purposes, simply relying on this concept is not an option.
There are several ways to model partial functions in \Coq{}.
The most common one is to replace the target space of a function by a different space that has an additional value that stands for being undefined.
Doing this naively works well in a purely mathematical setting, but quickly leads to problems when computability considerations come in.
Another way to model partiality in \Coq{} is to rely on the dependent type system.
That is, an input of a partial function is a pair consisting of the input together with a proof that the input is from the domain of the partial function.
Such a treatment is the most natural one, but heavy use of \Coq{}'s dependent type system brings its own disadvantages.
For our purposes we decided it to be best to use aspects of the theory of multivalued functions to capture partiality.

Multifunctions are a very popular tool for specification and classification of problems in computable analysis \cite{brattka_et_al:DR:2016:5686,brattka2017weihrauch,brattka2011effective, brattka2012closed,Pauly2018}.
Within this field, multifunctions form a topic of research of their own \cite{pauly2012multi,Pauly2013RelativeCA}.
This is not to say that this concept was invented for computable analysis, multifunctions also have applications in computational complexity, in particular the theory of promise problems and non-deterministic computation \cite{SELMAN1994357,10.1007/978-3-642-39053-1_1}, and even in the treatment of non-smooth and non-linear problems in functional analysis \cite{10.2307/2371832,deimling1992multivalued}.
A \demph{multivalued function} $F \colon S \mto T$ assigns to each $s\in S$ a possibly empty subset $F(s) \subseteq T$.
While this gives $F$ the type of a relation, the intuition behind a multivalued function is different.
By contrast to relations, multivalued functions are directed and $S$ is treated as input type and $T$ as output type.
The domain of a multifunction $F$ is given by $\dom(F) := \{s \in S \mid \exists t\colon t \in F(s)\}$ and for $s\in\dom(F)$ the set $F(s)$ is non-empty and should be interpreted as the set of eligible return values.
A multivalued function is called \demph{total} if its domain is all of $S$, and \demph{singlevalued} if each $F(s)$ has at most one element.

Any multivalued function can and should be considered a specification for functions:
A function $f\colon S \to T$ fulfills the specification $F\colon S \mto T$ if for any $s\in\dom(F)$ it holds that $f(s) \in F(s)$.
In this case we say that $f$ \demph{is a choice for} $F$.
The operations on multivalued functions are chosen such that they behave well with the interpretation as specifications.
For instance, the \demph{composition} $F\circ G$ of two multivalued functions $G\colon R\mto S$ and $F\colon S \mto T$ is of type $R \mto T$ and its value sets are given by
\[ (F \circ G)(r) := \{t \in T \mid G(r) \subseteq \dom(F) \wedge \exists s\colon t \in F(s) \wedge s \in G(r) \}. \]
This should be compared to $(F\circ_R G)(r) := \{t \mid \exists s\colon t \in F(s) \wedge s \in G(r)\}$, which is what is commonly used as composition for relations.
Both the relational and the multifunction composition are associative operations.
The domain condition added in the multifunction composition is a modifier that addresses the difference in interpretations.
For the multifunction composition it is true that if $f$ is a choice for $F$ and $g$ is a choice for $G$ then $f\circ g$ is a choice for $F\circ G$, which may fail for the relational composition as illustrated in Figure~\ref{fig: comp}.
From the same figure it can be seen that the multifunction composition is not symmetric under changing the direction of the multifunctions while for relations this is the case.
The relational and the multifunction composition produce identical results if the multifunction that is applied last is total (\verb$comp_rcmp$) resp.\ the first one is singlevalued (\verb$sing_comp$).
\begin{figure}[t]
  \subcaptionbox{{\textcolor{orange}{$f$} chooses through \textcolor{red}{$F$}, \textcolor{green}{$g$} chooses through \textcolor{blue}{$G$}.
      \textcolor{purple}{$f \circ g$} does not choose through \textcolor{gray}{$F\circ_R G$}, but through $\textcolor{red}{F} \circ \textcolor{blue}{G}$ which is empty.
      $\textcolor{blue}{G}^{-1}\circ \textcolor{red}{F}^{-1} = (\textcolor{gray}{F \circ_R G})^{-1}$ is not the empty function.}
\label{fig: comp}
}[.48\textwidth]{
  \begin{tikzpicture}[thick,scale=1.5]
    \node at (0,.8) {$R$};
    \node at (1,.8) {$S$};
    \node at (2,.8) {$T$};
    \draw[color = green] (0,0) -- node[above]{\tiny $g$} (1,-.5);
    \draw[color = orange] (1,.5) -- (2,.5);
    \draw[color = orange] (1,-.5) -- node[below]{\tiny $f$} (2,-.5);
    \draw[color = purple] (0,0) -- (2,-.5);
    \node[color = purple] at (1.35,-.15) {\tiny$f \circ g$};
    \draw[fill] (0,0) circle (.05cm);
    \draw[fill] (1,.5) circle (.05cm);
    \draw[fill] (1,-.5) circle (.05cm);
    \draw[fill] (2,.5) circle (.05cm);
    \draw[fill] (2,-.5) circle (.05cm);
    \draw[rounded corners = 5pt, color = blue] (1.1,0) --  (1.1,-.675) --  (-.2,0) -- node[above]{\tiny $G$} (1.1,.675) -- (1.1,0);
    \draw[rounded corners = 3pt, color = red] (1.5,.4) --  (0.9, .4) --  (0.9,.6) -- node[above]{\tiny $F$} (2.1,.6) -- (2.1,.4) -- (1.5,.4);
    \draw[rounded corners = 3pt, color = gray] (1,.15) --  (-.075, -.1) --  (-.1,.075) -- (2.075,.6) -- (2.1,.4) -- (1,.15);
    \node[color = gray] at (1.4,.1) {\tiny$F \circ_R G$};
  \end{tikzpicture}
}
\hfill
\subcaptionbox{$f$ chooses through \textcolor{red}{$F$} which tightens \textcolor{blue}{$G$}.
Thus $f$ also chooses through \textcolor{blue}{$G$}.
\label{fig: tight}}[.48\textwidth]
{
  \begin{tikzpicture}[thick,scale=1.5]
    \draw[->] (0,0) -- (0,2);
    \draw[->] (0,0) -- (3,0);
    \node at (2.8,-.2) {$S$};
    \node at (-.2,2.2) {$T$}; 
    \node at (1.25,1.5) {\color{blue} \tiny $G$}; 
    \node at (0.15,.85) { \tiny $f$}; 
    \draw[blue] (1,1.25) to [curve through ={(1,1.25)  . . (1.9,1.2)  }]  (2,1.25);
    \draw[blue] (2,1.25) -- (2,.5);
    \draw[blue] (1,.5) to [curve through ={(1.2,.45)  . . (1.9,.4)  }] (2,.5);
    \draw[blue] (1,1.25) -- (1,.5);
    \draw[red] (.5, 1) -- node[above]{\tiny F} (.5, .75);
    \draw[red] (.5,1) to [curve through ={(1,1)  . . (1.9,1.1)  }] (2.5,1);
    \draw[red] (.5,.75) to [curve through ={(1,.65)  . . (2,1)  }] (2.5,1);
    \draw (0,.5) to [curve through ={(.5,.85)  . . (1,.8) . . (2,1.05) . . (2.5,1) }] (3,.8) ;
    
  \end{tikzpicture}
}
\caption{}
\end{figure}

Functions and partial functions are special cases of multifunctions.
A function $f\colon S \to T$ may be identified with the specification that on input $s$ it makes $f(s)$ the only eligible return value, i.e.\ with $s \mapsto \{t\in T \mid t = f(s)\}$.
The multifunction associated to a function is always total and singlevalued (\verb$F2MF_tot$ and \verb$F2MF_sing$) and assuming that $T$ is not empty and an appropriate choice principle, each total singlevalued multifunction arises in this way (\verb$fun_spec$).
We call this connection between functions and total singlevalued multifunctions the \demph{\FTMF{} correspondence}.
This construction can be extended to partial functions by assigning to $g \colon {\subseteq S} \to T$ the multifunction function
\[ s \mapsto \{t\in T \mid g(s) \text{ is defined and equals } t\}. \]
The resulting multifunction is singlevalued (\verb$PF2MF_sing$) but need not be total.
Note that we are being unspecific about how to encode partial functions here.
One may use functions to an option type as outlined previously but should keep in mind that this can be understood to imply the domain of the partial function to be decidable which is rarely the case in our applications.
It is not difficult to check that both relational and multifunction composition restrict to regular composition of functions or partial functions when considered only on singlevalued resp.\ total and singlevalued multifunctions (\texttt{F2MF\_comp} and \texttt{PF2MF\_comp}).

Any multifunction can be assigned a reverse multifunction by switching input and output.
Each property of a multifunction has a co-version that requires the same property for the reverse multifunction.
Many of the co-properties have nice characterizations for the special cases of functions.
For instance, a function is injective if and only if the associated multifunction is co-singlevalued (\verb$mfinv_inj_sing$).
For later reference we list:
\begin{lem}[\texttt{PF2MF\_cotot}]\label{resu: PF2MF_cotot}
  A partial function is surjective if and only if its associated multifunction is co-total.
\end{lem}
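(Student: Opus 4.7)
The plan is to prove the statement by a direct unfolding of definitions on both sides of the equivalence; no auxiliary machinery is needed. First I would fix a partial function $g \colon {\subseteq S} \to T$ and write $F := \PFTMF(g)$, so that by construction
\[ F(s) = \{t \in T \mid g(s) \text{ is defined and equals } t\}. \]
Recall that $F$ is co-total means its reverse multifunction $F^{-1}\colon T \mto S$ is total, i.e.\ $\dom(F^{-1}) = T$.

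Next I would expand what membership in the reverse means: $s \in F^{-1}(t)$ is by definition $t \in F(s)$, which by the \PFTMF{} construction is precisely the statement that $g$ is defined at $s$ and satisfies $g(s) = t$. Consequently $t \in \dom(F^{-1})$ is equivalent to the existence of some $s \in \dom(g)$ with $g(s) = t$. Thus the claim $\dom(F^{-1}) = T$ unfolds to: for every $t \in T$ there exists $s \in \dom(g)$ with $g(s) = t$, which is exactly the definition of surjectivity for a partial function. Both directions of the biconditional are then immediate.

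I do not expect real obstacles here; the only point that requires a little care is to use the correct notion of surjectivity for a partial function (``every $t \in T$ is hit by some element of the domain of $g$'', not ``every $t \in T$ is hit by some element of $S$''), since the latter would be trivially false when $g$ is undefined. Once the definitions are lined up this way, the formal \Coq{} proof is a short sequence of \verb$split$ followed by introducing witnesses, with the \PFTMF{} unfold lemmas from \Incone{} closing each side.
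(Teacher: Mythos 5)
Your proof is correct and matches the paper's treatment: the paper offers no written argument for this lemma precisely because it is a definitional unfolding of the \PFTMF{} construction and of co-totality, which is exactly what you carry out. The only subtlety you flag --- that surjectivity of a partial function means every $t \in T$ is hit by some element of $\dom(g)$ --- is indeed the point that makes the two sides line up, so nothing is missing.
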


For multifunctions $F, G\colon S \mto T$ we say that $F$ \demph{tightens} $G$ if it is more restrictive as a specification.
This can be spelled out elementary as
\[ \dom(G) \subseteq \dom(F) \quad \text{and} \quad \forall s \in \dom(G)\colon F(s) \subseteq G(s). \]
Under weak additional assumptions this is equivalent to the statement that each choice function for $F$ is also a choice function $G$ (\verb$icf_tight$ and \verb$tight_icf$, see Figure~\ref{fig: tight}).
A function is a choice for a multifunction $F$ if and only if its associated multifunction tightens $F$ (\verb$icf_spec$).
For a partial function we say that it is a \demph{partial choice} for $F$ if its associated multifunction tightens $F$.
This can be spelled out to still mean that $f$ is a partial choice for $F$ if whenever $s \in \dom(F)$ then $f(s)$ is defined and an element of $F(s)$.
Many further properties of multifunctions are proven in the \Mf{} library and we just pick an example:
\begin{lem}[\texttt{tight\_comp}]\label{lem: tight comp}
  If $F$ tightens $F'$ and $G$ tightens $G'$ then $F\circ G$ tightens $F'\circ G'$.
\end{lem}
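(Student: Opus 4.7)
The plan is to unfold the definition of tightening and the multifunction composition, and verify the two required conditions directly. Recall that $F$ tightens $G$ means $\dom(G)\subseteq\dom(F)$ and $F(s)\subseteq G(s)$ for all $s\in\dom(G)$; and $t\in (F\circ G)(r)$ requires both $G(r)\subseteq\dom(F)$ and the existence of some $s\in G(r)$ with $t\in F(s)$. So I need to show (i) $\dom(F'\circ G')\subseteq\dom(F\circ G)$ and (ii) $(F\circ G)(r)\subseteq (F'\circ G')(r)$ for every $r\in\dom(F'\circ G')$.

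For (ii), fix $r\in\dom(F'\circ G')$ and $t\in(F\circ G)(r)$. The first conjunct $G'(r)\subseteq\dom(F')$ comes for free from $r\in\dom(F'\circ G')$. For the existence conjunct, pick the witness $s\in G(r)$ with $t\in F(s)$ coming from $t\in (F\circ G)(r)$. Since $r\in\dom(G')$ and $G$ tightens $G'$, we have $G(r)\subseteq G'(r)$, so $s\in G'(r)$; in particular $s\in\dom(F')$, so by $F$ tightening $F'$ we get $F(s)\subseteq F'(s)$ and thus $t\in F'(s)$. This $s$ is the witness that $t\in (F'\circ G')(r)$.

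For (i), fix $r\in\dom(F'\circ G')$ and produce a witness for membership in $\dom(F\circ G)$. From $r\in\dom(G')$ and $G$ tightening $G'$ we obtain $r\in\dom(G)$, and the chain $G(r)\subseteq G'(r)\subseteq\dom(F')\subseteq\dom(F)$ gives the required domain inclusion $G(r)\subseteq\dom(F)$. Pick any $s\in G(r)$ (non-empty because $r\in\dom(G)$); then $s\in\dom(F)$, so $F(s)$ is non-empty and any $t\in F(s)$ puts $t$ in $(F\circ G)(r)$.

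The proof is essentially bookkeeping; the only subtle point, and the reason this statement is worth recording, is that the ``extra'' domain condition in the multifunction composition interacts correctly with the domain condition in the definition of tightening. The nontrivial step is step (i): naive relational composition would satisfy the existence-of-witness part automatically, but here one must verify the totality-style condition $G(r)\subseteq\dom(F)$, which is exactly why both domain inclusions $\dom(G')\subseteq\dom(G)$ and $\dom(F')\subseteq\dom(F)$ from the tightening hypotheses, together with the set-wise inclusion $G(r)\subseteq G'(r)$, are all needed.
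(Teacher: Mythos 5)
Your proof is correct and is exactly the routine definitional unfolding one would expect; the paper itself gives no in-text argument for this lemma (it simply cites the \texttt{tight\_comp} result of the \Mf{} library), and your handling of the extra domain condition $G(r)\subseteq\dom(F)$ in the multifunction composition, using both $\dom(G')\subseteq\dom(G)$ and $\dom(F')\subseteq\dom(F)$ together with $G(r)\subseteq G'(r)$, is precisely the content of that formal proof. No gaps.
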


\subsection{Capturing the computable partial functions and relativization}\label{sec: computability}

Next we discuss another way to generate multifunctions from functions that is related to the standard way of modeling diverging computation in intuitionistic type theories.
Recall that for a type $T$ the option type $\option T$ adds a single element to $T$.
That is, each $t\in T$ corresponds to an element $\some t \in \option(T)$ and there is a single additional element $\none \in \option(T)$.
Given a function $N \colon\NN \times S \to \option T$ define a multifunction $\Phi_N\colon S \mto T$ via
\[ \Phi_N(s) := \{ t \in T \mid \exists n\colon N(n, s) = \some t\}. \]
We call the additional input $n\in\NN$ the \demph{effort parameter} and the reason for this should become apparent from the next paragraph where it takes the role of the number of steps taken by a Turing machine.

Consider the special case where $S = \NN = T$.
Given a partial computable function $f\colon{\subseteq S}\to T$ we want to argue that there exists a primitive recursive $N$ that recovers it through the $\Phi$-assignment above.
Indeed fix some Turing machine that computes $f$ and consider the function $N$ that on input $(n,s)$ returns $\some t$ if the machine on input $s$ terminates within the first $n$ time-steps and returns $t$, and $\none$ otherwise.
For this function $N$ it holds that $\Phi_N(s)$ is empty if the machine diverges and otherwise consists of the single value $f(s)$, thus $\Phi_N(s) = f$ if we identify partial functions with their induced multifunction.
Although there are technical differences, it should be clear that the core idea behind this is a version of the Kleene normal-form theorem \cite{soare1978recursively}.
The above makes the $\Phi$ assignment particularly interesting to us as any primitive recursive function is definable in \Coq{} \cite{10.1007/11541868_16}.

Even for primitive recursive $N$, the multifunction $\Phi_N$ need neither be total nor singlevalued.
In the case where $S$ and $T$ are the natural numbers, $N(n,s):= \none$ leads $\Phi_N$ to be the empty function and $N(n,s):= \some(n)$ results in the total specification for that renders any return value eligible.
However, for any $N$, it is possible to obtain a singlevalued tightening of the form $\Phi_{N'}$:
Choose $N'$ to be the function that on input $n$ and $s$ evaluates $N$ on inputs $(0,s), \ldots, (n,s)$ and returns the first value returned by $N$ if such exists and $\none$ otherwise.
This $N'$ is \demph{monotone} in the sense that if it returns a value on some effort, then it returns the same value on all bigger efforts.
Whenever a function is monotone in this sense, the corresponding multifunction is singlevalued.
As the procedure of searching is computable, it is also true that for any computable $N$ the function $\Phi_N$ has a computable choice function.
Thus, assuming that the functions definable in \Coq{} are computable (in the sense in which this was discussed in the introduction), we may use the $\Phi$ assignment to capture computability in \Coq{}.

Since we are interested in specification of partial operators, we relativize the construction.
That is, we start from the $\Phi$ assignment and add an additional argument that reaches over Baire space.
For a fixed $\varphi$ from Baire space we thus capture computability by an oracle Turing machine relative to $\varphi$ just as the original assignment captured computability by a regular Turing machine.
As the use of the word oracle indicates, $\varphi$ need not be computable.
In case the oracle is incomputable, the functions computable relative to it may be incomputable as well.
However, if one does not consider $\varphi$ to be fixed but instead as an input to the computation, one recovers a realistic model of computation and this is the model that is at the basis of computable analysis.
We further adapt to the conventions of computable analysis by interpreting the discrete input to the oracle Turing machine as a curried function input.
That is, for a given functional input $\varphi$, an oracle Turing machine is considered to return an element of Baire space if the relativized computation with oracle $\varphi$ terminates for each possible input.
If the computation on oracle $\varphi$ diverges for any of the possible inputs, the return value is considered undefined.

To make this precise, first recall that we use the notations $\B = \A^\Q$ and $\B' = \A'^{\Q'}$.
Given a function $M: \NN \times \B \times \Q' \to \option \A'$ we consider the multifunction $F_M \colon \B \mto \B'$ defined by
\[ F_M(\varphi) := \{ \psi \in \B' \mid \forall q', \exists n\colon M(n, \varphi, q') = \some(\psi(q'))\}. \]
One may repeat the discussion for the $\Phi$ assignment above to verify that for each operator computable by an oracle Turing machine, there exists some $M$ definable in G\"{o}dels system $T$ such that $F_M$ is the multivalued function assigned to the operator computed by the oracle Turing machine.
In the same way as was done for the unrelativized case, one may force singlevaluedness.
It is also still true that for any $M$ that is computable by an oracle Turing machine this singlevalued selection of $F_M$ is computable by an oracle Turing machine.
Thus, under the assumption that all functions definable in \Coq{} are computable, and whenever all the question and answer types come with canonical bijections with the natural numbers, we may use a definition of $M$ as certificate for the computability of an operator.
In the rest of the paper, whenever we mention computability, we mean computability in this sense.

As an informal example consider sections of elements of Baire space, i.e.\ let all question and answer sets be the natural numbers so that $\B$ and $\B'$ both are $\NN^\NN$.
Consider the function $M$ that on input of an effort $n\in \NN$, some $\varphi \in \B = \NN^\NN$ and a $q'\in \Q' = \NN$ returns the value $M(n, \varphi, q') \in \option(\A') = \option (\NN)$ given by
\[ M(n, \varphi, q') := \begin{cases} \some(n) & \text{if } \varphi(n) = q' \\ \none & \text{otherwise.} \end{cases} \]
Then the values of the corresponding operator $F_M$ are given by
\[ F_M(\varphi) = \{\psi \in \NN^\NN \mid \forall q'\!,\exists a'\!\colon \varphi (a') = q' \!\land \some(a') = \some(\psi(q'))\} = \{\psi \mid \forall q'\!\colon \varphi(\psi(q')) = q'\}. \]
That is $F_M(\varphi)$ is the set of all sections of $\varphi$.
Consequentially, the domain of $F_M$ is the collection of all surjective elements of Baire space.
As a given surjective function can have several different sections, $F_M$ is not singlevalued.
However, one can make the values unique by deciding on the minimal section, where a section $\psi$ of a surjection $\varphi$ is called minimal if it is pointwise less or equal to any other section.
An $M'$ such that $F_{M'}$ is the specification of being a minimal section can be found by following the construction of a singlevalued selection for the $\Phi$ assignment given above.

A version of this example that is of particular interest for us, we have made formal:\
\begin{exa}[\texttt{examples/continuous\_search.v}]\label{ex: continuous search 1}
  Consider the case where $\A := \{0,1\}$ are the Booleans, $\Q' := \{\star\}$ is the canonical one point set and the other question and answer sets are the natural numbers.
  This means that $\B = \{0,1\}^\NN$ is Cantor space and $\B' = \NN^{\{\star\}}$ can be understood as the natural numbers by identifying $\psi \in \B'$ with $\psi(\star) \in \NN$.
  Consider the operator $F\colon {\subseteq \B} \to \B'$ that on input of some $\chi$ from Cantor space that is not the constant function returning $1$, returns the first $q' \in \NN$ such that $\chi(q') = 0$.
  Then $F = F_{M}$ if we choose for $M\colon \NN\times\B\times \{\star\} \to \option(\NN)$ the function such that $M(n,\chi,\star)$ equals $\some(k)$ if $k$ is the smallest number no bigger than $n$ such that $\chi(k) = 0$ and $\none$ if no such $k$ exists.
  Just like in the informal example above, this $M$ may be constructed by first considering the multifunction returning any zero and then moving to a singlevalued selection function.
\end{exa}

Note that the $\Phi$ and the $F$ assignments differ considerably in their interpretation of what is considered input and output.
As sketched above this is not relevant for finding a singlevalued selection.
However, it renders different operations natural for the $\Phi$ and the $F$ assignment and the natural operations for the latter tend to be more problematic than for the former.
As illustration of this, and as an additional motivation for the next section, let us briefly look into composition.

First consider the $\Phi$ assignment.
Given functions $N \colon \NN \times R \to \option S$ and $N'\colon \NN \times S \to \option T$ one may define a new function $N'\circ_{\Phi} N\colon \NN \times R \to \option T$ via
\[ N' \circ_{\Phi} N (\langle n, m\rangle, r) := \begin{cases} N'(n,t) & \text{if } N(m,r) = \some t \\
  \none & \text{otherwise,} \end{cases} \]
where $\langle\cdot,\cdot\rangle$ is the Cantor (or any standard) pairing function.
This captures the relational composition in the sense that $\Phi_{N' \circ_\Phi N} = \Phi_{N'} \circ_R \Phi_{N}$ which in turn tightens $\Phi_{N'}\circ \Phi_N$.
Under the assumption that $N$ is monotone one may further simplify the construction by replacing the pair by the maximum.
Let us try to adapt the above to the setting of the $F$ assignment.
Given $M'$ and $M$ one would most likely be interested in $F_{M'} \circ F_M$.
For simplicity let us assume that $F_{M'}$ and $F_M$ are singlevalued so that the values of the composition are given by $(F_{M'}\circ F_M)(\varphi) := F_{M'}(F_M(\varphi))$.
A straightforward adaption of the proof for the $\Phi$ assignment does not give information about this operator but instead about an operation that would be more natural to consider for functionals.
Namely given functionals $\mathcal F\colon \B \times \Q' \to \A'$ and $\mathcal G\colon \B \times \Q'' \to \Q'$ it makes a statement about the functional $(\varphi, q'') \mapsto \mathcal F(\varphi, \mathcal G(\varphi, q''))$.
The construction for functionals that corresponds to what we were originally interested in is ``functional substitution'', spells out $(\varphi, q'') \mapsto \mathcal F(\lambda q'. \mathcal G(\varphi, q'), q'')$ in the language for functionals and requires different typing of $\mathcal F$ and $\mathcal G$.

Let us thus assume we are given $M\colon \NN \times \B \times \Q' \to \option\A'$ and $M'\colon \NN \times \B'\times \Q''\to \option \A''$ and let us try to find some $M' \circ_F M \colon \NN \times \B \times \Q'' \to \option \A''$ such that $F_{M'\circ_F M}$ is a tightening of $F_{M'}\circ F_M$.
Due to the definition of the $F$ assignment, we can recover from $M$ for each fixed effort $n$ and input $\varphi \in \B$ an approximation $M(n,\varphi, \cdot)\colon \Q' \to \option(\A')$ to possible functional inputs to $M'$.
However, $M'$ expects an input of type $\Q' \to \A'$ and not of type $\Q' \to \option(\A')$ and to move from the latter to the former one has to pick a value whenever $\none$ occurs as return-value.
Without further information about $M'$ it is not clear why the choices of these values should be irrelevant.
And in particular it is not clear why the return values of $M'$ on an arbitrary extension should have anything to do with the return-value of the composition.
However, \Coq{} is consistent with functional extensionality and any functions defined  in \Coq{} without use of non-computational axioms can be evaluated in a finite amount of time.
One might thus tend to believe that if $M'$ is defined in this way, then its value on fixed discrete input only relies on a finite number of the return values of its functional input.
If the original $\varphi$ is from the domain of $F_M$ it is thus possible to choose the effort big enough for the way in which we extend to be irrelevant.
The additional information that is needed about $M'$ for being able to carry out this kind of composition concretely is effective information about its continuity as presented in the upcoming section.

\subsection{Continuity of partial operators between naming spaces}\label{sec: continuity}
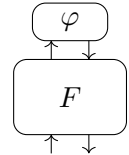
\begin{wrapfigure}{r}{.145\textwidth}
  \centering
  \vspace{-.2cm}
  \begin{tikzpicture}
    \draw[rounded corners = 5pt] (0,0) rectangle node {$F$} (1.5,1);
    \draw[->] (.5,-.25) -- (.5,0);
    \draw[->] (1,0) -- (1,-.25);
    \draw[rounded corners = 5pt] (.25,1.25) rectangle node {$\varphi$} (1.25,1.75);
    \draw[->] (.5, 1) -- (.5,1.25);
    \draw[->] (1,1.25) -- (1,1);
  \end{tikzpicture}
  \caption{A continuous operator.}\label{fig: continuous operator}
\end{wrapfigure}
This section presents an information theoretic development of a notion of continuity of operators between naming spaces.
The \Incone{} library provides proofs that the definitions presented here are equivalent to more traditional notions of continuity, but the discussion of these equivalences is postponed to Section \ref{sec: metric spaces} since it requires some background about metric spaces and topology that are not necessary for the presentation in the current section.
For the following fix some types $\Q$, $\A$, $\Q'$ and $\A'$ and set $\B:= \A^\Q$ and $\B':=\A'^{\Q'}$.

Intuitively continuity says that the values of an operator $F\colon \B \to \B'$ interpreted as functional of type $F\colon \B \times \Q' \to \A'$ only depend on finite information about the values of the functional input from $\B$ and thus can be thought of as being represented by a diagram as depicted in Figure \ref{fig: continuous operator}.
Mathematically, continuity can be described as follows:
A function $F\colon \B \to \B'$ is \demph{continuous} if for any element $\varphi$ of $\B$ and any $q'\in \Q'$ there exists a \demph{certificate}, i.e.\ a finite list $\mathbf q\in \seq \Q$ such that for any $\psi$ that coincides with $\varphi$ on $\mathbf q$ it holds that $F(\psi)(q') = F(\varphi)(q')$.
Here, two functions are said to \demph{coincide on} a finite list $\mathbf q$ if $\varphi(q) = \psi(q)$ for any $q$ that appears in $\mathbf q$.
A partial operator $F\colon {\subseteq \B} \to \B'$ is continuous if for all $\varphi \in \dom(F)$ and $q' \in \Q'$ there exists a certificate, i.e.\ a finite list $\mathbf q \in \seq \Q$ such that the above statement holds for any $\psi \in \dom(F)$ that coincides with $\varphi$ on $\mathbf q$.

\begin{exa}[\texttt{examples/continuous\_search.v}]
  Consider the function $F_0\colon \NN^\NN \to \NN^\NN$ defined by $F_0(\varphi)(n):= \varphi(n) + \varphi(0)$.
  Then for any inputs $\varphi$ and $n$ the finite list $\str q := (n, 0)$ is a certificate and thus $F_0$ is continuous.
  For the operator $F_1(\varphi)(n) := \varphi(\varphi(n))$ the list $\str q:= (\varphi(n), n)$ is appropriate.
\end{exa}
Most functions that can be explicitly defined are continuous.
For instance any function definable in G\"{o}dels system $T$ is continuous and as long as no strictly non-computational axioms are involved, it is reasonable to assume that the functions definable in \Coq{} are computable and therefore continuous (compare discussion in Section \ref{sec: comp-cont}).

\begin{wrapfigure}{r}{.145\textwidth}
  \centering
  \begin{tikzpicture}
    \draw (0,0) rectangle node {$M$} (1.5,1);
    \draw[->] (.5,-.25) -- (.5,0);
    \draw[->] (1,0) -- (1,-.25);
    \draw[rounded corners = 5pt] (.25,1.25) rectangle node {$\varphi$} (1.25,1.75);
    \draw[->] (.5, 1) -- (.5,1.25);
    \draw[->] (1,1.25) -- (1,1);
  \end{tikzpicture}
  \caption{A computable operator.}
\end{wrapfigure}
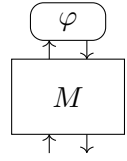

The same remains true for operators whose specification can be given as $F_M$ for some function $M\colon \NN \times \NN^\NN \times \NN \to \option \NN$ such that $M$ has a \Coq{}-definition that does not rely on non-computational axioms.
For instance for the search operator $F$ from Example \ref{ex: continuous search 1} the function $M$ can be defined in \Coq{} only using very elementary means and the list $(0, \ldots, F(\varphi)(\star))$ is a certificate for functional input $\varphi$ and discrete input $n$.
The search operator does not have a continuous total extension.
Thus, one should not expect such an extension to be definable in \Coq{} without reliance on non-computational axioms.

The definition of continuity in the \Incone{} library follows the mathematical definition given earlier mostly literally.
It superficially looks more complicated due to the use of multifunction to substitute partial functions, but the definition is chosen such that it implies a continuous multifunction to be singlevalued and does thus only really apply to partial functions.
Another difference is that instead of a list for each question the definition of continuity in \Incone{} uses a Skolem-function $L\colon \Q' \to \seq \Q$.
This switches the order of the corresponding existential and universal quantification.
Whenever an appropriate choice principle is available, these definitions are equivalent (\verb$choice_cont$).
The definition used in \Incone{} has the advantage that it allows for a fully constructive proof of the fact that the composition of continuous operators is continuous.
\begin{thm}[\texttt{cont\_comp}]\label{resu: cont_comp}
  Let $F\colon {\subseteq \B} \to \B'$ and $G\colon{\subseteq \B'} \to \B''$ be continuous partial operators.
  The operator $F \circ G \colon {\subseteq \B} \to \B''$ is continuous.
\end{thm}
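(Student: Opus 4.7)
The plan is to combine the given moduli of the two operators into a single explicit Skolem function witnessing continuity of the composition, thereby avoiding any choice principle — this is consistent with the paper's remark that the Skolemized formulation of continuity admits a fully constructive composition proof. To keep the typing straight with respect to the multifunction composition recalled earlier, I think of the inner operator as mapping $\B$ to $\B'$ and the outer operator as mapping $\B'$ to $\B''$, so that the composite has the claimed type $\B \to \B''$.

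Fix $\varphi$ in the domain of the composition. By the domain clause in the definition of multifunction composition, $\varphi$ lies in the domain of the inner operator and its image $\varphi'$ lies in the domain of the outer operator (using that continuous partial operators are singlevalued, so that $\varphi'$ is an honest element of $\B'$). Continuity of the inner operator at $\varphi$ supplies a modulus $L_1 \colon \Q' \to \seq\Q$, and continuity of the outer operator at $\varphi'$ supplies a modulus $L_2 \colon \Q'' \to \seq\Q'$. I then define the composed modulus $L \colon \Q'' \to \seq\Q$ at $\varphi$ by
\[ L(q'') := L_1(q'_1) \concat \cdots \concat L_1(q'_k), \qquad \text{where } L_2(q'') = (q'_1, \ldots, q'_k), \]
via a simple list recursion on $L_2(q'')$.

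To verify the modulus, fix $q'' \in \Q''$ and take any $\psi$ in the domain of the composition that coincides with $\varphi$ on $L(q'')$. By construction each $L_1(q'_i)$ is a sublist of $L(q'')$, so $\psi$ coincides with $\varphi$ on each $L_1(q'_i)$. Applying the inner modulus at each $q'_i$ gives that the image $\psi'$ of $\psi$ under the inner operator agrees with $\varphi'$ on every entry of $L_2(q'')$. Since both $\varphi'$ and $\psi'$ are in the domain of the outer operator (again by the composition's domain clause applied to $\psi$), applying the outer modulus at $q''$ yields equality of the outer operator's values on $\varphi'$ and $\psi'$ at $q''$. This is exactly the required equality of the composite evaluated at $\varphi$ and at $\psi$ at the question $q''$.

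The main obstacle is the careful bookkeeping around the domain clause of the multifunction composition: one must extract from $\psi \in \dom(F \circ G)$ both that $\psi$ is in the inner domain and that its image is in the outer domain, and one must use singlevaluedness of continuous partial operators (implicit in the \Incone{} definition) to treat these images as single points on which the outer modulus can be applied. Beyond this administrative step the argument is essentially formal, reducing to routine lemmas about list concatenation and coincidence of function inputs on finite lists.
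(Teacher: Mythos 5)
Your proof is correct and follows essentially the same route as the paper: the paper composes the two certificate functions by viewing them as multivalued functions and composing them relationally, with that relational composition "realized constructively on the level of combining lists," which is exactly your concatenation $L(q'') := L_1(q'_1) \concat \cdots \concat L_1(q'_k)$ for $L_2(q'') = (q'_1,\ldots,q'_k)$. Your bookkeeping around the domain clause of the multifunction composition and the use of singlevaluedness is precisely the administrative content the formal proof has to handle, so nothing is missing.
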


\begin{proof}[summary]
  The idea behind the proof is that the certificate functions $L$ and $L'$ whose existence is guaranteed by the continuity of $F$ and $G$ can be interpreted as multivalued functions and composed relationally to obtain a certificate function for the composition of the operators.
  Furthermore, the needed relational composition can be realized constructively on the level of combining lists.
\end{proof}

As we compare different notions of continuity in the later chapters, let us briefly discuss sequential continuity on naming spaces.
Let $\B = \A^\Q$, note that $\B$ is a naming space if $\Q$ and $\A$ are countable and non-empty but for the following definition we do not need these assumptions.
An element $\varphi\in \B$ is said to be the \demph{limit} of a sequence $(\varphi_n)_{n \in \NN} \subseteq \B$ if for each fixed argument $q\in \Q$ the sequence $(\varphi_n(q))$ is eventually constantly $\varphi(q)$.
Formally
\[ \lim\nolimits_\B(\varphi_n) = \varphi \quad \iff\quad \forall q, \exists n_0, \forall n\geq n_0 \colon \varphi_n(q) = \varphi(q). \]
If a sequence in Baire space has a limit, this limit is uniquely determined (\verb$lim_sing$) and thus the above defines a partial function $\lim_\B\colon{\subseteq \B^\NN}\to \B$.

A partial operator $F\colon{\subseteq \B} \to \B'$ is called \demph{sequentially continuous} if for any $\varphi \in \dom(F)$ and any sequence $(\varphi_n)_{n \in \NN} \subseteq \dom(F)$ such that $\lim_\B(\varphi_n) = \varphi$ it also holds that $\lim_{\B'}(F(\varphi_n)) = F(\varphi)$.
It is well known that the topological structure of Baire space $\NN^\NN$ is such that sequential continuity of partial operators on Baire space is equivalent to their continuity and the \Incone{} library includes a formal proof of this and that it remains true for naming spaces.
However, this is a classical fact and constructively sequential continuity need not imply continuity, thus the library separates the equivalence into two implications.
\begin{thm}[\texttt{cont\_scnt} and \texttt{scnt\_cont}]\label{resu: cont_scnt}
  A partial operator between naming spaces is continuous if and only if it is sequentially continuous.
\end{thm}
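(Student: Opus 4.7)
The plan is to prove the two implications separately, since they have very different character: the forward implication (continuous implies sequentially continuous) is essentially constructive and short, whereas the reverse implication is the classical one that genuinely uses countability of $\Q$ together with classical logic and countable choice.

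For \texttt{cont\_scnt} I would argue directly from the definitions. Fix $\varphi\in\dom(F)$ and a sequence $(\varphi_n)\subseteq\dom(F)$ with $\lim_\B(\varphi_n)=\varphi$, and fix $q'\in\Q'$. By continuity of $F$ there is a finite certificate list $\mathbf q=(q_0,\ldots,q_{k-1})\in\seq\Q$ such that every $\psi\in\dom(F)$ coinciding with $\varphi$ on $\mathbf q$ satisfies $F(\psi)(q')=F(\varphi)(q')$. For each entry $q_i$ of $\mathbf q$ pointwise convergence gives an index $n_i$ with $\varphi_n(q_i)=\varphi(q_i)$ for all $n\geq n_i$; taking $n_0:=\max\{n_0,\ldots,n_{k-1}\}$ (a finite maximum, so no choice is needed) yields $\varphi_n$ coinciding with $\varphi$ on $\mathbf q$ for $n\geq n_0$, hence $F(\varphi_n)(q')=F(\varphi)(q')$ eventually. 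Since $q'$ was arbitrary this is exactly $\lim_{\B'}(F(\varphi_n))=F(\varphi)$.

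For \texttt{scnt\_cont} I would proceed by contraposition. Assume $F$ is not continuous at some $\varphi\in\dom(F)$, i.e.\ there is some $q'\in\Q'$ for which no finite $\mathbf q\in\seq\Q$ is a certificate. Using countability of $\Q$, fix an enumeration $q_0,q_1,\ldots$ of $\Q$ and, for each $n$, set $\mathbf q_n:=(q_0,\ldots,q_{n-1})$. By the failure of continuity and countable choice, there is a sequence $(\psi_n)\subseteq\dom(F)$ with $\psi_n$ coinciding with $\varphi$ on $\mathbf q_n$ but $F(\psi_n)(q')\neq F(\varphi)(q')$. Any $q\in\Q$ equals some $q_k$ in the enumeration, and for all $n>k$ we have $\psi_n(q)=\varphi(q)$, so $\lim_\B(\psi_n)=\varphi$ in the sense defined in the paper. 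Sequential continuity would then force $\lim_{\B'}(F(\psi_n))=F(\varphi)$, in particular $F(\psi_n)(q')=F(\varphi)(q')$ for all sufficiently large $n$, contradicting the choice of $\psi_n$.

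The main obstacle is really the second implication, and inside \Incone\ it shows up at two points. First, the Incone definition of continuity is phrased via a Skolem modulus $L\colon\Q'\to\seq\Q$ rather than per-point existentials, so I would first switch to the elementary form (this is exactly the step the paper calls \verb$choice_cont$) in order to be able to negate ``no certificate exists'' cleanly. Second, the construction of $(\psi_n)$ uses countable choice applied to the failure of the certificate condition, and uses the enumeration whose existence is the \emph{content} of the assumption that $\Q$ is countable; this is why the statement is restricted to naming spaces and why, as the paper remarks, the equivalence is classical and cannot be expected to hold constructively. Everything else, notably the finite-max argument for the forward direction and the verification that the constructed sequence converges to $\varphi$, is routine.
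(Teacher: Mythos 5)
Your proposal is correct and follows essentially the route the paper takes: the equivalence is split into a short, effectively constructive implication from continuity to sequential continuity (finite certificate plus a finite maximum) and a classical converse that uses the enumeration of $Q$, classical negation and countable choice, which is exactly why the library states \texttt{cont\_scnt} and \texttt{scnt\_cont} separately and gives this direct argument rather than routing through the metric $d_{\mathbf q}$. Your remark about passing between the Skolemized modulus $L\colon Q'\to\seq Q$ and the per-point certificates via \texttt{choice\_cont} correctly identifies where the extra choice principle enters.
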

Section \ref{sec: baire_metric} discusses how to prove that any naming space is metrizable, or more specifically it defines a metric on $\B$ from each enumeration of $\Q$.
It also presents proofs that the notions of convergence and continuity induced by this metric coincide with those given above.
Thus, the formal proofs that continuity and sequential continuity are equivalent to what is described in Section \ref{sec: continuities} imply the above theorem.
However, metric spaces use real numbers, which leads to the axioms of the real numbers appearing in the assumptions printed when inspecting the proofs.
This is even though the proofs do not use these axioms in an essential way.
Thus, the above statement is given a separate proof in the \Incone{} library.

\subsection{Construction of a universal and some of its properties}\label{sec: universal}

Recall that a naming space is a space of functions whose target and argument spaces are countable and non-empty.
A \demph{continuous universal}, or just \demph{universal}, is an assignment that for each pair $\B$ and $\B'$ of naming spaces provides another naming space $\B''$ and an operation $F_{M(\cdot)}\colon \B'' \to \B \mto \B'$ such that each of its values is continuous and for each continuous $F\colon {\subseteq\B}\to \B'$ there exists an element $\psi \in \B''$ such that $F_{M(\psi)}$ tightens $F$.
That is: A universal provides a way to code any continuous operator between naming spaces by an element of another naming space.
We call such a code, i.e.\ some $\psi$ such that $F_{M(\psi)}$ tightens $F$, an \demph{associate} of $F$ with respect to the universal or just an associate if the universal is clear from the context.
Note that the type of $\psi$ can be inferred from the universal together with the type of $F$.

Let us give some motivation for the terminology chosen here.
To justify the term ``universal'' note that replacing all the naming spaces by the set of finite binary strings and the word ``continuous'' by ``computable'' one recovers a specification that is fulfilled by the universal Turing machine.
While the construction of a universal Turing machine takes some effort, continuous universals can be chosen very simple:
Classically any naming space can be replaced by Baire space and modulo this one may use Kleene-Kreisel associateship to obtain a universal \cite{kleene1959constructivity,kreisel1959interpretation}.
A more popular variant in computable analysis is to move from Baire space to the space of infinite binary strings, i.e.\ Cantor space and use Weihrauch's $\eta$ operator \cite{weihrauch_computable_2000}.
The former of these is conceptually more well adapted to our setting and there are several excellent sources to read up about its background \cite{longley2015higher,ESCARDO2016770}.
A previous version of \Incone{}'s universal closely followed generalization of Kleene-Kreisel associateship first presented by van Oosten \cite{vanoosten2011}.
The current universal modifies that construction slightly by moving away from the idea that questions should be asked sequentially and allowing for a finite number of questions to be asked in parallel.

A mathematical description of \Incone{}'s universal can be given as follows:
for fixed naming spaces $\B = \A^\Q$ and $\B' = \A'^{\Q'}$ set $\Q'':= \seq \A \times \Q'$ and $\A'':= \seq \Q + \A'$.
That is let $\B'':= (\seq \Q + \A')^{\seq \A \times \Q'}$.
That $\B''$ is a naming space, i.e.\ that $\Q''$ and $\A''$ are countable and non-empty, follows directly from $\B$ and $\B'$ being naming spaces.
Assign to $\psi \in \B'$ the multifunction $F_{U(\psi)}\colon \B \mto \B'$ defined as follows:
$\varphi' \in F_{U(\psi)}(\varphi)$ if and only if for any $q' \in \Q'$ there exists an $N$ and a finite sequence of lists $(\str a_i)_{i\in\{1,\ldots,N\}} \subseteq \seq \A$ such that for $i<N$ it holds that $\psi(\str a_i, q') = \ask\str q$ for some $\str q \in \seq \Q$ (where $\ask$ denotes the left inclusion in the sum) and $\str a_{i+1} = \str a_i \concat (\varphi(q_1),\dots, \varphi(q_{\size{\str q}}))$ and $\psi(\str a_N,q') = \answer \varphi'(q')$ (where $\answer$ denotes the right inclusion of the sum).
The above is best understood as running a small while program:
For given functional input $\varphi$ and input $q'$, the universal attempts to extract a value from $\psi$ by first calling it on input $(\epsilon, q')$, and then branching according to the return value: if the return value is a list of questions it updates the list in the first argument with the respective answers according to $\varphi$.
If the return value is an answer, it interprets this answer as the return value of the operator (cf. Figure~\ref{fig:U}).

The notation we used strongly suggests that the universal can be specified by means of a function $U\colon \B'' \to \NN \times \B \times \Q \to \option \A'$ where the universal as described above can be recovered using the operator assignment $M \mapsto F_M$ described in Section~\ref{sec: computability} via $\psi\mapsto F_{U(\psi)}$.
Indeed, \Incone{} defines such a function \texttt U using only very elementary means.
We refrain from writing out the exact definition of $U$ here and point the interested reader to the \Incone{} library, where the definition can be printed and a formal proof that it fulfills the above specification can be found (\verb$FU_spec$).
Furthermore, $\verb$U$(\psi)$ is always monotone in the sense of the previous section (\verb$U_mon$) and in particular $F_{U(\psi)}$ is always singlevalued (\verb$FU_sing$).

\begin{figure}
  \centering
  \begin{tikzpicture}
    \draw (0,0) rectangle (4,3);
    \draw[rounded corners = 5pt] (-2.6,-.1) rectangle (5,3.1);
    \node at (4.5,1.5) {$F_{U(\psi)}$};
    \node at (3.5,1) {$U$};
    \node at (1,-.5) {$q'$};
    \draw[->] (1,-.3) -- (1,0);
    \draw[->,dotted] (1,0) -- (1,1) -- (.35,1);
    \node at (1.85,.5) {$b:=(\epsilon,q')$};
    \draw[rounded corners = 5pt] (-2.5,0) rectangle (-.5,3);
    \node at (-1.5,1.5) {$\psi$};
    \draw[->] (-.5,2) -- (0,2);
    \node at (.2,2) {$d$};
    \draw[->,dotted] (.35,2) -- (.5,2)--(.5,2.8) -- (.8,2.8);
    \node at (1.15,2.35) {$d = \ask \str q$};
    \node at (1.15,1.65) {$d = \answer a'$};
    \draw[->,dotted] (.5,2) -- (.5,1.25) -- (3,1.25) -- (3,0);
    \node at (1,2.8) {$\str q$};
    \draw[->] (0,1) -- (-.5,1);
    \node at (.2,1) {$b$};
    \draw[rounded corners = 5pt] (0,3.5) rectangle (3,5.5);
    \node at (1.5,4.5) {$\varphi$};
    \draw[->] (2,3.5) -- (2,3);
    \node at (2.6,2.75) {$\varphi(q_1)\ldots\varphi(q_{\size{\str q}})$};
    \draw[->,dotted] (2,2.5) -- (2,1) -- (1,1);
    \node at (2.9,2) {update list};
    \draw[->] (1,3) -- (1,3.5);
    \draw[->] (3,0) -- (3,-.3);
    \node at (3,-.5) {$a'$};
  \end{tikzpicture}
  \caption{The universal used in \Incone.}\label{fig:U}
\end{figure}
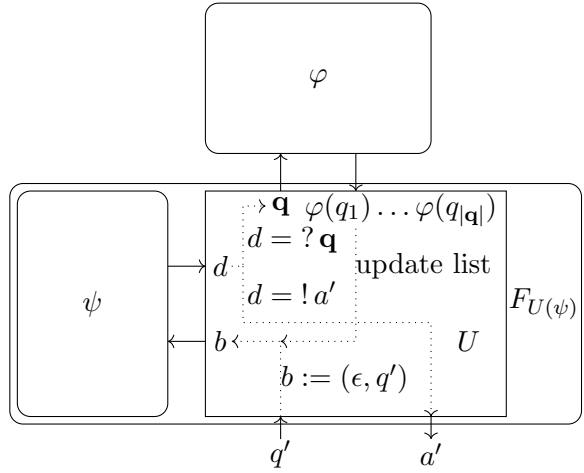

That this actually defines a universal can be separated into two statements.
Let us first argue that all operators of the form $F_{U(\psi)}$ are continuous.
We actually prove the stronger statement, that from an associate $\psi$ one can obtain a self modulating modulus of continuity for $F_{U(\psi)}$.
Let us start by introducing the notion of a modulus of continuity for a multifunction $F\colon \B \mto \B'$.
Recall from Section~\ref{sec: computability} that we decided to call a finite list a certificate for $\varphi\in\B$ and $q' \in \Q'$ if for any $\psi$ that coincides with $\varphi$ on this list the return-values of $F$ are identical.
Note that previously we assumed $F$ to be a partial function and here we talk about a multifunction, so we have to elaborate.
Call a list certificate for $\varphi$ and $q'$ if for each $\psi$ that coincides with $\varphi$ on this list and any $\varphi' \in F(\varphi)$ and $\psi' \in F(\psi)$ it holds that $\varphi'(q') = \psi'(q')$.
Note that if $F$ is singlevalued, being a certificate in this sense is equivalent to being a certificate for the corresponding multivalued function.
Furthermore note that the existence of a certificate implies that the elements of $F(\varphi)$ can only take one possible value in $q'$.
In particular, if continuity of multifunctions is defined as before, i.e.\ by requiring a certificate to exist for all inputs, the corresponding notion implies singlevaluedness and should be understood as a notion of continuity for partial functions specified by relations and not as a notion for multifunctions (cf. \cite{Pauly2013RelativeCA}).

Define the \demph{multivalued modulus of continuity} $C_F\colon \B \mto \seq \Q^{\Q'}$ of a multifunction $F\colon \B \mto \B'$ by
\[ C_F(\varphi) := \{L \colon \Q' \to \seq \Q \mid \forall q'\colon L(q') \text{ is a certificate for $\varphi$ and $q'$}\}. \]
The definition of $C_F$ makes sense for any multifunction $F\colon \B \mto \B'$ but $C_F(\varphi)$ can only be non-empty if $F(\varphi)$ has at most one element.
The continuous partial operators $F\colon {\subseteq \B} \to \B'$ can be specified as exactly those multifunctions such that the domain of $C_F$ is a super-set of the domain of $F$ (\verb$cont_spec$).
If $F$ is continuous, then we call any partial choice function $\mu\colon{\subseteq \B} \to \seq \Q^{\Q'}$ of $C_F$ a \demph{modulus of continuity} for $F$.
Note that if $\mu$ is a modulus of continuity of $F$, then $C_\mu$ has the same type as $C_F$.
Thus it makes sense to call a modulus of continuity $\mu$ for $F$ \demph{self-modulating} if it is continuous and a modulus of continuity for itself, that is if it is also a choice function for $C_\mu$.

Let $\psi$ be an associate of a partial operator $F$ with respect to the universal $U$ from above, i.e.\ let $\psi$ be such that $F_{U(\psi)}$ tightens $F$.
A self-modulating modulus for $F$ can readily be obtained from $\psi$ by tracking the queries in the evaluation of the universal.
The same can be done for the values that the universal calls the function $\psi$ on and one defines functions $U_Q$ and $U_S$ such that $F_{U_Q(\psi)}$ and $F_{U_S(\psi)}$ are the corresponding operators.
\begin{thm}[\texttt{FqM\_mod\_FU}, \texttt{FqM\_mod\_FqM} and \texttt{FqM\_mod\_FsM}]\label{resu: FU_cont}
  For any fixed $\psi$ of appropriate type the operator $F_{U_Q(\psi)}$ is a modulus of continuity for $F_{U(\psi)}$, for itself and for $F_{U_S(\psi)}$.
\end{thm}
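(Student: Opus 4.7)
The plan is to trace through the evaluation of the universal and prove a single agreement invariant that immediately yields the three statements. Recall from Section~\ref{sec: universal} that on inputs $\varphi$ and $q'$, the value $F_{U(\psi)}(\varphi)(q')$ is produced by iterating a small while-loop starting from the empty answer list $\str a_0 := \epsilon$: at step $i$ one calls $\psi(\str a_i, q')$ and either halts returning $a'$ if $\psi(\str a_i, q') = \answer a'$, or sets $\str a_{i+1} := \str a_i \concat (\varphi(q_{i,1}), \ldots, \varphi(q_{i, \size{\str q_i}}))$ and continues if $\psi(\str a_i, q') = \ask \str q_i$. By construction, when this loop terminates after $N$ steps, $F_{U_Q(\psi)}(\varphi)(q')$ is the concatenation $\str q_0 \concat \cdots \concat \str q_{N-1}$ of all query lists issued by $\psi$, and $F_{U_S(\psi)}(\varphi)(q')$ is the sequence of visited states $(\str a_0, \ldots, \str a_N)$.

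The whole proof then rests on the following invariant, proved by induction on the step index $i \le N$: if $\varphi'$ coincides with $\varphi$ on $\str q_0 \concat \cdots \concat \str q_{i-1}$, then the first $i$ iterations of the loop driven by $\varphi'$ produce the same states $\str a_0, \ldots, \str a_i$ and the same queries $\str q_0, \ldots, \str q_{i-1}$ as those driven by $\varphi$. The base case is immediate since $\str a_0 = \epsilon$ in both runs. For the induction step, $\psi(\str a_i, q')$ is literally the same call in both runs and so returns the same $\ask \str q_i$, while the next state $\str a_{i+1}$ is built by appending the values of $\varphi$ (respectively $\varphi'$) on the entries of $\str q_i$, which agree by the coincidence hypothesis extended to include $\str q_i$.

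Instantiating this invariant at $i = N$ yields all three claims simultaneously. Since $\str a_N$ agrees between the two runs and $\psi(\str a_N, q') = \answer F_{U(\psi)}(\varphi)(q')$, the run driven by $\varphi'$ also halts at step $N$ and returns the same answer, showing that $F_{U_Q(\psi)}(\varphi)(q')$ is a certificate for $F_{U(\psi)}$ at $(\varphi, q')$. The same invariant shows that the issued query sequence $\str q_0, \ldots, \str q_{N-1}$ and the state sequence $\str a_0, \ldots, \str a_N$ also coincide across the two runs, i.e.\ that $F_{U_Q(\psi)}(\varphi)(q')$ is in addition a certificate for both $F_{U_Q(\psi)}$ and $F_{U_S(\psi)}$ at $(\varphi, q')$; this gives self-modulation of $F_{U_Q(\psi)}$ and that it is a modulus for $F_{U_S(\psi)}$.

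The main obstacle I anticipate is formal rather than conceptual: in \Incone{} the universal is realized as a fuel-based function taking an extra effort argument $n \in \NN$, not as an abstract while-loop, so the step-indexed induction above has to be repackaged as an induction on the effort parameter, combined with a monotonicity argument for $U$ to line up the termination indices of the two runs. Concretely, one needs a simultaneous lemma stating that whenever $U(\psi)(n, \varphi, q') = \some a'$, the query list returned by $U_Q(\psi)$ and the state list returned by $U_S(\psi)$ at this point depend only on the values of $\varphi$ at the recorded queries, so that the three specification lemmas \texttt{FqM\_mod\_FU}, \texttt{FqM\_mod\_FqM} and \texttt{FqM\_mod\_FsM} can be read off from a single inductive invariant about the fuel-based implementation.
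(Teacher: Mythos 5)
Your proposal is correct and follows essentially the same route as the paper, which proves these lemmas exactly by tracking the queries (and the lists passed to $\psi$) during the evaluation of the universal and observing that a run driven by any $\varphi'$ coinciding with $\varphi$ on the gathered queries reproduces the same states, queries and answer; your step-indexed agreement invariant is precisely this argument, and your remark about repackaging it as an induction on the effort parameter with monotonicity of $U$ matches how it is carried out formally.
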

The universal is used in \Incone{} to construct exponentials in the category of represented spaces or, more simply put, to construct spaces of functions.
The above Theorem~\ref{resu: FU_cont} in particular implies that for any $\psi$ the operator $F_{U(\psi)}$ is continuous.
Its more general statement is enough to provide what is needed to prove the evaluation procedure on the constructed space of functions to be a continuous operation.
The functions $U_Q$ and $U_S$ are of more theoretical than practical importance.
For the purpose of inspecting the evaluation of an associate \Incone{} provides a more useful function \texttt{gather\_queries} that on input of an associate $\psi$, a functional input $\varphi$, a discrete input $q'$ and an effort $n$ returns the list of all queries posed up to the $n$-th loop of the evaluation of the universal on these values (see \verb$examples/KleeneKreisel.v$ for examples).

To finish the proof that $U$ is an universal it is left to show that any partial continuous operator has an associate with respect to $U$.
\begin{thm}[\texttt{U\_universal}]\label{resu: U_universal}
  Any partial continuous operator $F\colon{\subseteq \B} \to \B'$ between naming spaces has an associate with respect to the universal $U$ described above.
  I.e.\ there exists some $\psi$ such that $F_{U(\psi)}$ tightens $F$.
\end{thm}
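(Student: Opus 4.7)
The plan is to construct an associate $\psi$ explicitly from a modulus of continuity for $F$ and then verify that $F_{U(\psi)}$ tightens $F$. Using continuity together with an appropriate choice principle (or directly with the Skolemised form of continuity that \Incone{} uses), I would first extract a modulus of continuity $\mu\colon{\subseteq \B} \to \seq \Q^{\Q'}$ whose domain contains $\dom(F)$. Since $\Q$ is countable and non-empty, fix a surjection $\nu\colon \NN \to \Q$ and abbreviate $q_n := \nu(n)$. The idea is to have $\psi$ implement the dialogue strategy that walks through the enumeration $q_0, q_1, \ldots$ one question at a time and stops as soon as the accumulated answers contain a certificate for some compatible $\varphi \in \dom(F)$.

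Concretely, I would define $\psi \colon \seq \A \times \Q' \to \seq \Q + \A'$ by classical case analysis. On input $(\str a, q')$, set $n := \size{\str a}$ and ask whether there exists some $\varphi \in \dom(F)$ such that $(\varphi(q_0), \ldots, \varphi(q_{n-1})) = \str a$ and $\mu(\varphi)(q') \subseteq \{q_0, \ldots, q_{n-1}\}$. If yes, choose such a $\varphi$ and set $\psi(\str a, q') := \answer F(\varphi)(q')$; if no, set $\psi(\str a, q') := \ask(q_n)$, asking the next enumerated question.

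The key verifications are well-definedness of the answering branch and termination of the universal's loop on inputs from $\dom(F)$. For well-definedness, any two witnesses $\varphi, \varphi'$ agree on $\{q_0, \ldots, q_{n-1}\}$ by the first condition, and this set contains $\mu(\varphi')(q')$, which is a certificate for $\varphi'$ at $q'$; hence $F(\varphi)(q') = F(\varphi')(q')$. For termination on $\varphi \in \dom(F)$ and $q' \in \Q'$, the list $\mu(\varphi)(q')$ is finite, so there exists $N$ with $\mu(\varphi)(q') \subseteq \{q_0, \ldots, q_{N-1}\}$. Tracking the universal's accumulator starting from $\str a_0 = \epsilon$, a straightforward induction shows $\str a_k = (\varphi(q_0), \ldots, \varphi(q_{k-1}))$ as long as $\psi$ has asked only single-question lists so far; hence either the answering branch triggers at some earlier step $M \le N$ with a witness that by the previous paragraph yields $F(\varphi)(q')$, or it triggers at step $N$ with $\varphi$ itself as witness. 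Either way $\varphi \in \dom(F_{U(\psi)})$ and the unique return value equals $F(\varphi)$, proving that $F_{U(\psi)}$ tightens $F$.

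The main obstacle is that the construction is thoroughly non-constructive: the case split above is undecidable in general, and picking a witness $\varphi$ uses a form of global choice or Hilbert's $\varepsilon$. In \Coq{}, this means the proof must rely on \Incone{}'s ambient classical axioms (excluded middle together with a description principle), which is consistent with the fact that $\psi$ is a pure specification object and the theorem makes no algorithmic claim. A secondary technical point is verifying that the informal ``ask one question per round'' protocol really conforms to the list-concatenation update of $U$ described in Section~\ref{sec: universal}, i.e.\ that single-element question lists interact with the accumulator as expected; once that bookkeeping is discharged, the rest is routine.
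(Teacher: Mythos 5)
Your construction is correct, but it is genuinely different from the paper's. The paper's proof attempts the more canonical dialogue: interpret the accumulator as a finite function, extend it to some $\varphi_{\str a}\in\dom(F)$ if possible, and answer $f(\varphi_{\str a})(q')$ as soon as $\mu(\varphi_{\str a},q')$ is contained in the questions asked so far. For an arbitrary modulus this strategy need not terminate, since $\mu(\varphi_{\str a},q')$ may never be contained in an initial segment even though $\mu(\varphi,q')$ is; the paper therefore upgrades $\mu$ to a self-modulating modulus, obtained as the minimal modulus with respect to the enumeration (Lemmas \texttt{mod\_minmod} and \texttt{exists\_minmod}). You sidestep that issue entirely by making the answering test an existential over all elements of $\dom(F)$ compatible with the accumulator: termination is automatic because the true input eventually becomes a witness once its certificate has been enumerated, and correctness holds because any witness's certificate is contained in the segment on which it agrees with the true input --- your well-definedness argument is exactly the right one, and it covers both the uniqueness of the answer and its agreement with $F(\varphi)(q')$. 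What each approach buys: yours is shorter and needs neither self-modulation nor the minimal-modulus lemmas, but it buries an undecidable quantification over Baire space and a witness-selecting choice inside the very definition of $\psi$, so the resulting associate is intrinsically non-effective. The paper's decomposition isolates the non-constructivity in the existence of a suitable modulus; this matters for \Incone{}, which reuses self-modulating moduli elsewhere (cf.\ Theorem~\ref{resu: FU_cont}) and aims to build computable associates from effectively given self-modulating moduli --- a refinement your $\psi$ does not admit. Axiomatically the two routes are comparable (both are classical and rely on choice principles over naming spaces, as the paper itself concedes the proof must), so for the bare statement of Theorem~\ref{resu: U_universal} your argument is a legitimate alternative; your closing remark about checking that singleton question lists interact correctly with the accumulator of $U$ is indeed just bookkeeping.
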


We do not give a lot of details for this proof here, but let us sketch the most important parts and point out some interesting details.
Let $\mu$ be any function that chooses through the multivalued modulus of continuity $C_F$ of $F$ and let $f$ be any function that chooses through $F$.
For any fixed enumeration $(q_i)_{i \in \NN}$ of $\Q$, one can attempt to define an associate $\psi$ for $F$ as follows:
On input $(\mathbf a, q')$ interpret the list $\mathbf a$ as a partial function by assuming that its elements are the return values on the first $\size {\mathbf a}$  elements mentioned in the enumeration of $\Q$, i.e. interpret it as the finite function such that $q_i \mapsto a_i$ for all $0 < i \leq \size {\str a}$.
Extend this function to a total function $\varphi_{\mathbf a}$ that is from the domain of $F$ if this is possible.
If $\mu(\varphi_{\str a}, q') \subseteq (q_1,\ldots, q_{\size{\str a}})$ then return $\answer f(\varphi_{\str a})(q')$, otherwise ask for the $(\size {\str a} +1)$-st element mentioned in the enumeration.

As $\mu$ is a modulus for $F$, evaluating this associate using the universal $U$ only results in correct return values.
That $\mu(\varphi,q')$ must be contained in some initial segment of the enumeration and the sequence of functions $\varphi_{\str a}$ converges to $\varphi$ gives hope that the iteration may often be finite.
Without further assumptions about $\mu$, however, this can not be proven and the function $\psi$ defined above might fail to be an associate of $F$.
Since it is always possible to extract a self-modulating modulus from an associate it may not be surprising that the proof can be completed if the modulus is additionally known to be self-modulating.
In the library the existence of a self-modulating modulus is proven by picking $\mu$ to be minimal with respect to subset inclusion under the additional condition that it can only return initial segments with respect to the enumeration of $\Q$.
We call such a modulus a \demph{minimal modulus} with respect to the enumeration.
\begin{lem}[\texttt{mod\_minmod}]
  A minimal modulus is always self-modulating.
\end{lem}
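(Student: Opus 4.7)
The plan is to exploit the interplay between the minimality condition on $\mu$ and the fact that certificates for $F$ can be transported between inputs that agree on a common initial segment. Write $L_\varphi := \mu(\varphi,q')$ for the value of a minimal modulus at a fixed pair $(\varphi,q')$, and recall that $L_\varphi$ is an initial segment of the fixed enumeration of $\Q$ which is a certificate for $F$ at $(\varphi,q')$, and is smallest (equivalently, shortest) among such initial-segment certificates.

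The core step is to show $L_\varphi$ is a certificate for $\mu$ at $(\varphi,q')$. Fix an arbitrary $\psi$ that coincides with $\varphi$ on $L_\varphi$ and set $L_\psi := \mu(\psi,q')$; the goal is the equality $L_\varphi = L_\psi$. First observe that $L_\varphi$ is also a certificate for $F$ at $(\psi,q')$: any $\chi$ coinciding with $\psi$ on $L_\varphi$ coincides with $\varphi$ on $L_\varphi$, so $F(\chi)(q') = F(\varphi)(q') = F(\psi)(q')$ (the second equality applies the certificate property with $\chi := \psi$). By minimality of $L_\psi$ among initial-segment certificates at $(\psi,q')$, it must be a prefix of $L_\varphi$. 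Symmetrically, once $L_\psi$ is a prefix of $L_\varphi$, it is a certificate for $F$ at $(\varphi,q')$: any $\chi$ coinciding with $\varphi$ on $L_\psi$ coincides with $\psi$ on $L_\psi$ because $\varphi$ and $\psi$ themselves agree on $L_\varphi \supseteq L_\psi$, and therefore $F(\chi)(q') = F(\psi)(q') = F(\varphi)(q')$. Applying minimality of $L_\varphi$ in the other direction forces $L_\varphi$ to be a prefix of $L_\psi$, and the two prefix relations yield $L_\varphi = L_\psi$.

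This simultaneously gives both properties required for self-modulation: on the one hand $\mu(\varphi,q') \in C_\mu(\varphi)(q')$ by construction, so $\mu$ is a partial choice for $C_\mu$; on the other hand the very same argument shows that for every $(\varphi,q')$ the list $\mu(\varphi,q')$ is a certificate for $\mu$ at $(\varphi,q')$, so $\mu$ is continuous (with itself serving as a modulus). Both conclusions are packaged in the statement that $\mu$ is a modulus of continuity for itself.

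I expect the only subtle point to be the careful bookkeeping of what ``minimal'' means: because $\mu$ is constrained to return initial segments of a fixed enumeration, subset inclusion among its possible values coincides with the prefix ordering and hence with comparison of lengths, which is what licenses the two applications of minimality above. Once this is made precise, the rest of the argument is a direct transport of certificates along the hypothesis that $\varphi$ and $\psi$ agree on $L_\varphi$.
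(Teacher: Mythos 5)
Your argument is correct: transporting the certificate $L_\varphi$ from $\varphi$ to $\psi$ and back, and using minimality twice (with the prefix order on initial segments of the enumeration standing in for subset inclusion) to get $L_\psi = L_\varphi$, is exactly the intended argument behind \texttt{mod\_minmod}, and your closing remark about why minimality reduces to prefix/length comparison is the right subtlety to flag. The paper gives no written proof of this lemma (only the pointer to the formal development), but your proof matches the natural one formalized there, including the observation that the same certificate simultaneously witnesses that $\mu$ chooses through $C_\mu$ and that $\mu$ is continuous.
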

The reason for the indefinite article in this lemma is that the existence of a minimal modulus is not constructively provable \cite{MR966421}, and indeed the \Coq{} proof of its existence is classical and relies on a fairly strong choice principle, namely one that covers naming spaces and is thus strictly stronger than countable choice.
\begin{lem}[\texttt{exists\_minmod}]\label{lem: exists minmod}
  For any continuous partial operator there exists a minimal modulus of continuity.
\end{lem}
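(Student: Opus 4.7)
The plan is to construct, for each $\varphi \in \dom(F)$ and each $q' \in \Q'$, the shortest initial segment of the fixed enumeration $(q_i)_{i \in \NN}$ of $\Q$ that is a certificate for $\varphi$ at $q'$, and then assemble these pointwise choices into a partial function $\mu\colon {\subseteq \B} \to \seq\Q^{\Q'}$.

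First I would show that for any $\varphi \in \dom(F)$ and $q' \in \Q'$ the set
\[ S(\varphi, q') := \{n \in \NN \mid (q_1, \ldots, q_n) \text{ is a certificate for } \varphi \text{ and } q'\} \]
is non-empty. Continuity of $F$ supplies some certificate $\str q \in \seq\Q$, and since the enumeration is surjective and $\str q$ is finite, there is an $N$ such that every entry of $\str q$ occurs among $q_1, \ldots, q_N$. Extending a certificate by further queries preserves the certificate property (it only strengthens the coincidence requirement on competing inputs), so $N \in S(\varphi, q')$. Invoking excluded middle to decide membership in $S(\varphi, q')$, the well-ordering of $\NN$ yields a least element $n(\varphi, q')$, and I would set $\mu(\varphi)(q') := (q_1, \ldots, q_{n(\varphi, q')})$. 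By construction $\mu(\varphi)(q')$ is a certificate, hence $\mu(\varphi) \in C_F(\varphi)$; and among all initial-segment choices of certificate, the chosen length is smallest, delivering the required minimality pointwise and therefore for subset inclusion among initial-segment valued moduli.

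The main obstacle is passing from this pointwise specification to an actual function $\mu$. Although $n(\varphi, q')$ is uniquely determined for each pair, turning the collection of these values into a single partial function on $\dom(F) \subseteq \B$ requires choosing, as $\varphi$ ranges over a set of functional inputs, a family of witnesses indexed by $\varphi$. This quantifies over the naming space $\B$ rather than over a discrete type, which is precisely why the argument uses a choice principle stronger than countable choice, as the surrounding prose anticipates. Once such a principle is postulated, $\mu$ is well-defined with $\dom(F) \subseteq \dom(\mu)$, it is a partial choice for $C_F$ by the certificate property established above, and it takes values in initial segments by construction, so $\mu$ is a minimal modulus of continuity for $F$.
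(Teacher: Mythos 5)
Your proposal is correct and follows essentially the same route as the paper's (Coq) proof, which the text only sketches: classically pick, for each input and query, the least initial segment of the enumeration that is a certificate (non-empty by surjectivity of the enumeration and the fact that enlarging a certificate preserves the certificate property), and then invoke a choice/description principle over the naming space --- strictly stronger than countable choice, exactly as the surrounding prose indicates --- to package these uniquely determined pointwise values into the partial function $\mu$. The only cosmetic point is that your construction in fact works for every $\varphi$ (for $\varphi\notin\dom(F)$ every list is vacuously a certificate), so $\mu$ can be taken total and is then literally a partial choice for $C_F$.
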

A minimal modulus is unique up to choice of the sequence and as it is known to be impossible to constructively prove the existence of an extensional way to obtain a modulus of continuity, the proof is inherently classical.
In particular there are computable operators on Baire space whose minimal modulus with respect to the identity as enumeration is not computable.
By contrast, from a computable associate a computable self-modulating modulus can be read of.
\Incone{} thus makes some efforts to avoid the use of the minimal modulus and instead allow to construct a self-modulating modulus from additional information about the operator.
The details of this leave the scope of this paper we do not elaborate further on this aspect.

Finally, the library defines a function $D$ that exchanges the arguments of the universal.
\begin{thm}[\texttt{D\_spec}]\label{resu: D_spec}
  For all $\varphi$ and $\psi$ it holds that $F_{U(\psi)}(\varphi) = F_{U(D\varphi)}(\psi)$.
\end{thm}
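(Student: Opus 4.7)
The plan is to unfold both sides of the claimed equality to the explicit while-loop procedures that define the universal, and then show that the two loops march in lockstep. Fix $\varphi \in \B$, $\psi \in \B''$ and a discrete input $q' \in \Q'$. On the left, the evaluation of $F_{U(\psi)}(\varphi)$ at $q'$ maintains a growing list $\str a_i \in \seq\A$: at step $i$ it queries $\psi$ at $(\str a_i, q')$, and if the response is $\ask\str q$ it answers the entries of $\str q$ using $\varphi$, appends them to get $\str a_{i+1}$, and continues; if the response is $\answer a'$, the loop terminates with value $a'$. For the right-hand side, the types swap: $\psi$ now plays the role of functional input (so its "questions" have type $\seq\A\times\Q'$ and its "answers" have type $\seq\Q + \A'$), and $D\varphi$ must be an associate in the naming space $(\seq(\seq\A\times\Q') + \A')^{\seq(\seq\Q+\A')\times\Q'}$ or an equivalent type obtained by currying.

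Next, I would pin down the definition of $D\varphi$. Given a history $\str h \in \seq(\seq\Q+\A')$ of $\psi$-responses and the output question $q'$, the natural definition reconstructs the sequence of partial answer-lists $\str a_0, \str a_1, \ldots$ that the original left-hand computation would have produced from this history: starting with $\str a_0 := \epsilon$, for each entry $\ask \str q$ of $\str h$ in order, append $(\varphi(q_1),\ldots,\varphi(q_{\size{\str q}}))$ to the current $\str a_i$ to obtain $\str a_{i+1}$. If the last entry of $\str h$ is $\answer a'$, then $D\varphi(\str h, q') := \answer a'$; otherwise, after processing all entries of $\str h$ into some $\str a_n$, set $D\varphi(\str h, q') := \ask(\str a_n, q')$ (so the right-hand universal then queries $\psi$ at precisely $(\str a_n, q')$, which is exactly the next query made on the left). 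Defining $D$ this way relies only on finite, elementary operations on lists and sums, so it can be written down in \Coq{} without new axioms.

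The key step is then an induction on the iteration count $i$ establishing the invariant: after $i$ loop iterations, the history $\str h_i$ accumulated by the right-hand computation equals the list of $\psi$-responses obtained in the first $i$ iterations of the left-hand computation, and the next query $(\str a_i, q')$ posed to $\psi$ on the left is exactly the next query posed by the right-hand universal using $D\varphi$. Termination at some step $N$ on the left with answer $a'$ corresponds, via this invariant, to termination on the right with the same $a'$; conversely, any terminating right-hand run induces a left-hand run with matching output. Since the defining condition of $F_{U(\cdot)}$ is existence of such a sequence of list states for each $q'$, equality of the two value sets follows once the invariant is proven in both directions.

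The main obstacle is purely bookkeeping: one must carefully set up the correspondence between the two distinct notions of "loop state" — a list in $\seq\A$ on the left versus a list in $\seq(\seq\Q+\A')$ on the right — and verify that $D\varphi$, which only sees the latter, correctly recovers the former. Once the invariant is phrased precisely, the induction step is a direct unfolding of the two loop bodies and a case split on the kind of response returned by $\psi$. No non-computational principles should be required for the identity itself, since both sides are multifunctions specified by existence of finite computations, and the translation between them is mediated by the explicit, primitive-recursive function $D$.
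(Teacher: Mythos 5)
Your proposal is correct and matches the intended argument: the paper states Theorem~\ref{resu: D_spec} without giving a proof (deferring to the \Incone{} library), and its remark that $D$ is directly interpretable as a function corresponds exactly to your elementary definition of $D\varphi$, which rebuilds the left-hand list state $\str a_n$ from the history of $\psi$-responses by replaying $\varphi$'s answers and then either asks $(\str a_n,q')$ or outputs the final answer. The lockstep induction on loop iterations, carried out in both directions so that terminating runs correspond and the two singlevalued value sets coincide, is precisely the bookkeeping such a proof requires; the behaviour of $D\varphi$ on histories with a non-final answer entry is a harmless don't-care, as such histories never arise in actual runs.
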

Here, the types have been purposefully omitted, details can be found in the library.
Note that, while $U(\psi)$ has the more complicated type and is interpreted as $F_{U(\psi)}$ using the operator assignment, $D$ can be directly  interpreted as a function or using the associated multifunction.
The above theorem is interesting because it is related to the Cartesian closure of the category of represented spaces (see Section \ref{sec: represented spaces} for details on the category).
However, it falls short in strength as it only considers a special case in which it is not necessary to talk about tupling of elements of naming spaces.

\begin{cor}[\texttt{FsM\_mod\_FU}, \texttt{FsM\_mod\_FsM} and \texttt{FsM\_mod\_FqM}]
  Theorem \ref{resu: FU_cont} remains true if $U_Q$ and $U_S$ are exchanged and $\psi$ and $\varphi$ are exchanged.
\end{cor}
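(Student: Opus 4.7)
The plan is to derive the corollary from Theorem~\ref{resu: FU_cont} by exploiting the symmetry made explicit by Theorem~\ref{resu: D_spec}. The operator $D$ swaps the two functional arguments of the universal, which intuitively exchanges the roles played by ``queries'' and ``seen values'' during the while-loop evaluation depicted in Figure~\ref{fig:U}. Since $U_Q$ tracks the queries posed to $\varphi$ while $U_S$ tracks the segments passed back to $\psi$, applying $D$ should exchange $U_Q$ with $U_S$ as well.

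More concretely, I would first apply Theorem~\ref{resu: FU_cont} not to $\psi$ but to $D\varphi$. This yields that $F_{U_Q(D\varphi)}$ is a modulus of continuity for $F_{U(D\varphi)}$, for itself, and for $F_{U_S(D\varphi)}$. Next I would establish two companions to Theorem~\ref{resu: D_spec} for the query- and value-tracking variants, namely the pointwise identities
\[
F_{U_Q(\psi)}(\varphi) = F_{U_S(D\varphi)}(\psi)
\quad\text{and}\quad
F_{U_S(\psi)}(\varphi) = F_{U_Q(D\varphi)}(\psi),
\]
which, combined with Theorem~\ref{resu: D_spec} itself, let me rewrite the three statements obtained in the first step as: $F_{U_S(\varphi)}$ is a modulus of continuity for $F_{U(\varphi)}$, for itself, and for $F_{U_Q(\varphi)}$. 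The implication between moduli is preserved under these rewrites since ``being a modulus of continuity for'' is a statement about the pointwise dependence of one multifunction on another, and the rewrites are pointwise equalities in the functional argument $\psi$ at a fixed $\varphi$.

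The main obstacle is establishing the two companion identities for $U_Q$ and $U_S$. Their definitions are given in terms of the internal bookkeeping of the universal, so I expect the proof to proceed by unfolding $U$, $U_Q$, $U_S$ and $D$ and matching the finite sequences $(\str a_i)$ of accumulated answers step-by-step, using the monotonicity of $U$ (\texttt{U\_mon}) to reduce the existential statement over the effort parameter $n$ to a plain induction on the number of loop iterations. Once these identities are in place, the corollary itself is a purely formal consequence of Theorem~\ref{resu: FU_cont}, carrying no new mathematical content beyond the symmetry principle encoded by $D$.
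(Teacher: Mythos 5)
Your overall strategy---apply Theorem~\ref{resu: FU_cont} to the associate $D\varphi$ and transport the three modulus statements along Theorem~\ref{resu: D_spec} together with exchange identities for the two trackers---is precisely the symmetry the paper invokes by placing this statement as a corollary of Theorem~\ref{resu: D_spec} (no further proof is given there), and it does deliver \texttt{FsM\_mod\_FU} and \texttt{FsM\_mod\_FsM}: your second identity $F_{U_S(\psi)}(\varphi)=F_{U_Q(D\varphi)}(\psi)$ is well-typed (both sides are, for each $q'$, lists over $\Q''=\seq \A\times \Q'$) and can indeed be checked by the unfolding/induction you describe, possibly only up to a harmless reformatting of the lists, which is enough because certificates are stable under enlarging the list of questions.

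The gap is your first companion identity $F_{U_Q(\psi)}(\varphi)=F_{U_S(D\varphi)}(\psi)$: it cannot hold, as the two sides do not even have the same type. For each $q'$ the left side is a list of questions in $\Q$, whereas $U_S(D\varphi)$ records the points at which the universal consults $D\varphi$, i.e.\ elements of $\seq\A''\times\Q'$ with $\A''=\seq\Q+\A'$ (prefixes of the sequence of $\psi$-answers paired with $q'$). Consequently the rewrite that is supposed to turn ``$F_{U_Q(D\varphi)}$ is a modulus for $F_{U_S(D\varphi)}$'' into \texttt{FsM\_mod\_FqM} fails as written. It can be repaired: the queries posed to $\varphi$ are recovered from the recorded $D\varphi$-consultations by extracting the $\ask$-components of the stored $\psi$-answers and flattening, so $F_{U_Q(\psi)}(\varphi)(q')=t\bigl(F_{U_S(D\varphi)}(\psi)(q')\bigr)$ for a fixed pointwise post-processing map $t$, and being a modulus of continuity is preserved when the target operator's value at each $q'$ is post-composed with such a map; with this corrected relation the third statement follows from your scheme. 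Alternatively, and closer to the proof of Theorem~\ref{resu: FU_cont} itself, all three claims follow at once without any appeal to $D$: if $\psi'$ coincides with $\psi$ on the consultations recorded by $U_S(\psi)$ for $(\varphi,q')$, then the runs of the universal with $\psi$ and with $\psi'$ on $(\varphi,q')$ agree loop by loop, hence return the same value, pose the same queries to $\varphi$ and consult the associate at the same points, which is exactly \texttt{FsM\_mod\_FU}, \texttt{FsM\_mod\_FqM} and \texttt{FsM\_mod\_FsM}.
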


\section{Represented spaces and continuous realizability}\label{sec: represented spaces}

Computable analysis is the theory of computation on sets of continuum cardinality.
To make such spaces available to computation, computable analysis considers encodings of such sets over Baire space $\B = \NN^\NN$.
Typically, such an encoding is understood to be a surjective partial function $\delta\colon {\subseteq \B} \mto X$ from Baire space to the set that is called a representation.
Instead of a partial function, the representation can also be considered a singlevalued multifunction and we will go back and forth between these two views seamlessly.
According to Lemma~\ref{resu: PF2MF_cotot} the requirement of being surjective translates to the corresponding multifunction being co-total.
In the formal development in \Incone{} everything is formulated using singlevalued, co-total multifunctions.
Moreover, instead of using Baire space, $\B$ is allowed to be any naming space in the sense of the previous chapter, i.e.\ any space of the form $\B = \A^\Q$ where $\A$ and $\Q$ are countable and non-empty.

A \demph{representation} of a set $X$ is defined as a naming space $\B$ together with a singlevalued, co-total multifunction $\delta\colon \B \mto X$.
Note that $\B$ can be inferred from $\delta$. This justifies the fact to just call $\delta$ the representation.
Just like in computable analysis we call those $\varphi \in \B$ such that $x \in \delta(\varphi)$ the names of $x$.
The surjectivity of $\delta$ can now be reformulated as each element of the space having at least one name and the singlevaluedness as each name uniquely identifying the element.
An alternate maybe more descriptive phrase for ``name'' would be ``description''.
As $\B$ is a naming space it is of the form $\B = \A^\Q$ for some countable and non-empty sets $\Q$ of questions and $\A$ of answers.
A useful interpretation of $\varphi$ being a name is that it provides on-demand information about $x$.
If $\varphi$ is a name of $x$ then for each question $q \in \Q$ about the abstract object $x$ the value $\varphi(q) \in \A$ can be considered a valid answer to that question.
A \demph{represented space} is a pair $\X =(X, \delta_\X)$ of a set $X$ and a representation $\delta_\X$ of $X$.
When appropriate we decorate the naming space of the representation and its question and answer sets with indices as well, i.e.\ we assume $\delta\colon\B_\X \mto X$, where $\B_\X = \A_\X^{\Q_\X}$ and $\Q_\X$ and $\A_\X$ are countable and non-empty.

As an example let us equip the real numbers with a representation.
\begin{exa}[\texttt{examples/Q\_reals.v}]\label{ex: q_reals}
  Let $X := \RR$ be the set of real numbers and pick the question and answer sets to be the rational numbers, i.e. $\Q_\RR = \A_\RR := \QQ$ and thus $\B_\RR = \QQ^ \QQ$.
  Clearly $\QQ$ is countable and non-empty so that $\B_\RR$ is a naming space.
  Let $\delta_\RR\colon {\subseteq \B_\RR} \to \RR$ be the partial function specified by
  \[ \delta_\RR(\varphi) = x \quad \iff \quad \forall \varepsilon \in \QQ, 0 < \varepsilon \implies |x - \varphi(\varepsilon)| \leq \varepsilon. \]
  Then $\delta_\RR$ is a representation of $\RR$.
  Indeed, using the axiomatization of the real numbers provided by \Coq{}'s standard library $\delta_\RR$ can be proven singlevalued and surjective (\texttt{rep\_RQ\_sing} and \texttt{rep\_RQ\_sur}) and we refer to the represented space $(\RR, \delta_\RR)$ simply as $\RR$.
\end{exa}

\begin{wrapfigure}{r}{0.275\textwidth}
  \begin{center}
    \hfil
    \xymatrix{
      X \ar[r]^f  & Y  \\
        \B \ar[r]_{F}  \ar[u]^{\delta_\X}     & \B' \ar[u]_{\delta_\Y}}
      \hfil
  \caption{$F\colon{\subseteq \B} \to \B'$ is a realizer of $f\colon \X \to \Y$.}
  \label{fig:realizer}
  \end{center}
\end{wrapfigure}

Computability and continuity of partial operators on naming space can be used to define computability and continuity of functions between represented spaces by means of realizers.
For represented spaces $\X$ and $\X'$, a partial operator $F\colon{\subseteq \B_\X} \to \B_{\X'}$ is a \demph{realizer} of a function $f\colon \X \to \X'$ if for each name $\varphi$ of $x$ the value $F(\varphi)$ is defined and a name of $f(x)$ (cf. Figure \ref{fig:realizer}).
A function between represented spaces is \demph{continuous} if it has a continuous realizer and \demph{computable} if it has a computable realizer.
The represented spaces form a Cartesian closed category both if the continuous functions are used as morphisms, and if the computable functions are used.
The use of the word ``continuous'' here is often contested and many say it should be reserved for the topological concept.
Others argue that the above is the correct notion and that the topological one is only a syntactic approximation of it.
It is one of the objectives of this paper to give a formal proof that the above and topological continuity have a considerable overlap in that they coincide for metric spaces.
For instance for the real numbers represented as indicated in Example~\ref{ex: q_reals} it is true that a function from $\RR \to \RR$ has a continuous realizer if and only if it is continuous in any of the more traditional ways to make sense of being continuous.
This is more generally true for functions between metric spaces and we go into detail about this in Section~\ref{sec: metric spaces}.

It is possible to extend the definition of being a realizer to allow for both the realizer and the realized function to be multivalued.
We refrain from stating the elementary definition as for the purpose of this paper it is sufficient to know that the extended definition fulfills the following specification:
\begin{lem}[\texttt{rlzr\_spec}]\label{resu: rlzr_spec}
  A multifunction $F\colon \B_\X \mto \B_{\X'}$ realizes another multifunction $f\colon \X\mto\X'$ if and only if  $\delta_{\X'}\circ F$ tightens $f\circ \delta_\X$.
\end{lem}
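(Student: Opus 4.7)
The plan is to prove the biconditional by unfolding both sides into elementary conditions on the underlying multifunctions and matching them up. On the left, the multivalued version of ``realizer'' reads: whenever $\varphi \in \dom(\delta_\X)$ and the unique $x$ with $x \in \delta_\X(\varphi)$ lies in $\dom(f)$, then $\varphi \in \dom(F)$ and every $\psi \in F(\varphi)$ satisfies $\psi \in \dom(\delta_{\X'})$ and $\delta_{\X'}(\psi) \in f(x)$ (using singlevaluedness to read $\delta_{\X'}(\psi)$ as its unique value). On the right, unfolding ``$\delta_{\X'}\circ F$ tightens $f \circ \delta_\X$'' produces the domain inclusion $\dom(f\circ \delta_\X)\subseteq \dom(\delta_{\X'}\circ F)$ together with the pointwise inclusion $(\delta_{\X'}\circ F)(\varphi)\subseteq (f \circ \delta_\X)(\varphi)$ on that domain.

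The next step is to simplify the two composed domains using the singlevaluedness and co-totality assumptions on $\delta_\X$ and $\delta_{\X'}$. Singlevaluedness of $\delta_\X$ collapses the domain clause $\delta_\X(\varphi)\subseteq\dom(f)$ from the multifunction composition to a condition on the unique name of $\varphi$, so $\dom(f\circ \delta_\X)$ consists exactly of those $\varphi \in \dom(\delta_\X)$ whose unique $x$ lies in $\dom(f)$. Similarly, $\dom(\delta_{\X'}\circ F)$ simplifies to $\{\varphi \in \dom(F) : F(\varphi)\subseteq\dom(\delta_{\X'})\}$, since once $F(\varphi)$ is nonempty and contained in $\dom(\delta_{\X'})$, the existential witness $\psi$ with $\delta_{\X'}(\psi)\neq\emptyset$ is automatic.

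With these reformulations in hand, both directions reduce to a line-by-line translation. For the forward direction I would take $\varphi \in \dom(f\circ \delta_\X)$, extract the unique $x$ and apply the realizer hypothesis to get $\varphi \in \dom(F)$ and $F(\varphi) \subseteq \dom(\delta_{\X'})$, yielding the domain inclusion; the value inclusion then follows because any $y \in (\delta_{\X'}\circ F)(\varphi)$ arises as $y = \delta_{\X'}(\psi)$ for some $\psi \in F(\varphi)$, and the realizer hypothesis places this $y$ in $f(x) = (f\circ \delta_\X)(\varphi)$. The backward direction is symmetric: given any name $\varphi$ of some $x \in \dom(f)$, pick any $y\in f(x)$ to certify $\varphi \in \dom(f\circ \delta_\X)$, then use the tightening hypothesis to conclude $\varphi \in \dom(F)$ and $F(\varphi)\subseteq\dom(\delta_{\X'})$, and finally convert the value inclusion via singlevaluedness of $\delta_{\X'}$ to the statement that every $\psi \in F(\varphi)$ names an element of $f(x)$.

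I expect the main obstacle to be purely bookkeeping: multifunction composition carries the extra domain clause $G(r) \subseteq \dom(F)$ distinguishing it from relational composition (cf.\ Figure~\ref{fig: comp}), and one has to pipe this clause through correctly on both sides so that ``$F(\varphi) \subseteq \dom(\delta_{\X'})$'' is neither lost nor double-counted. The only mildly delicate point is checking that $x \in \dom(f)$ really produces $\varphi \in \dom(f\circ \delta_\X)$, which needs a witness $y \in f(x)$; this is immediate from the definition of $\dom(f)$ but deserves an explicit mention since the multifunction composition does not accept an empty value set.
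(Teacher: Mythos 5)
Your proof is correct and is essentially the only sensible route: unfold the multifunction composition and the tightening relation, use singlevaluedness of $\delta_\X$ and $\delta_{\X'}$ to collapse the domain clauses, and match the two sides — which is exactly what the paper's formal proof (deferred to the \Rlzrs{} library, not spelled out in the text) does. The one caveat is that the paper deliberately withholds the elementary definition of a multivalued realizer, so you had to supply it yourself; the definition you chose agrees with the library's (which phrases the conclusion as ``there exists $y$ with $\psi$ naming $y$ and $y \in f(x)$'') precisely because $\delta_{\X'}$ is singlevalued, and you correctly flag that singlevaluedness is where the argument would break for multirepresentations.
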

The above lemma can also be used backwards to express the notion of a tightening as a special case of being a realizer by using the identity function as representation (\verb$id_rlzr_tight$).
One may even further extend the definition of a realizer by dropping the requirement that the representations are singlevalued.
In the terminology of computable analysis this would mean dealing with multi-representations and while in the manipulation of discrete data the use of non-singlevalued encodings is fairly common, in computable analysis the use of multirepresentations is rare.
The above characterization does not generalize to multirepresentations.
For the full definitions we point the interested reader to the \Rlzrs{} library.

Here, we are only interested in representations and mostly in continuous, and therefore singlevalued, realizers.
However, in Section \ref{sec: closed choice} we discuss closed choice on the natural numbers as an elementary example of a multifunction between represented spaces.
We call a multifunction between represented spaces \demph{continuously realizable} if there exists a continuous realizer in the sense of the previous lemma.
Note that under the usual identification of a function with the multifunction that uniquely specifies it, continuity is a special case of continuous realizability and we sometimes use the latter to distinguish it from other notions of continuity if confusion is possible.
One reason that the use of continuously realizable multifunctions is common in computable analysis is that continuity of functions often fails for extensionality reasons.
For instance, one may formulate a multifunction corresponding to the parallel or on the space introduced in Section~\ref{sec: closed choice} as Sierpinski space to see that there exist continuously realizable functions that do not have any continuous choice function.
A special case where exactly the opposite behavior appears is that of a naming space equipped with the identity function as a representation:
a multifunction between naming spaces is continuously realizable if and only if it has a singlevalued tightening that is continuous in the sense of Section~\ref{sec: continuity} and therefore allows for a continuous choice function.

That continuity and continuous realizability is preserved under composition follows from content of the \Rlzrs{} library together with the fact that continuity of operators on Baire space is preserved under composition (Theorem \ref{resu: cont_comp}).
\begin{lem}[\texttt{comp\_cont} and \texttt{comp\_hcr}]
  The composition of continuous functions is continuous and the composition of continuously realizable multifunctions is continuously realizable.
\end{lem}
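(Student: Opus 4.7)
The plan is to reduce both statements to compositions in the operator category on naming spaces and then invoke Theorem \ref{resu: cont_comp} together with the tightening calculus from the \Mf{} library. Concretely, suppose $f\colon \X\mto\Y$ and $g\colon\Y\mto\Z$ are continuously realizable via $F\colon\B_\X\mto\B_\Y$ and $G\colon\B_\Y\mto\B_\Z$. I would first use Lemma~\ref{resu: rlzr_spec} to translate the realizability assumptions into the tightenings
\[\delta_\Y\circ F \text{ tightens } f\circ \delta_\X \quad\text{and}\quad \delta_\Z\circ G \text{ tightens } g\circ \delta_\Y.\]
The natural candidate realizer for $g\circ f$ is $G\circ F$, and the goal, again by Lemma~\ref{resu: rlzr_spec}, is to show that $\delta_\Z\circ (G\circ F)$ tightens $(g\circ f)\circ \delta_\X$.

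The key combinatorial step is Lemma~\ref{lem: tight comp}: from the two tightenings above it yields that $(\delta_\Z\circ G)\circ(\delta_\Y\circ F)$ tightens $(g\circ\delta_\Y)\circ(f\circ\delta_\X)$. What remains is an associativity/reassociation argument to rewrite both sides and cancel the ``middle'' $\delta_\Y$. On the right, associativity gives $g\circ\delta_\Y\circ f\circ\delta_\X = (g\circ f)\circ\delta_\X$ once one recognises that $\delta_\Y$ sits between $f$ (viewed as a multifunction into $\Y$) and the rest in a trivial way; on the left, the same rearrangement produces $\delta_\Z\circ(G\circ F)$. This is where singlevaluedness of $\delta_\Y$ enters through the \verb$sing_comp$ identification between relational and multifunction composition when the outer factor is singlevalued, so that the intermediate $\delta_\Y$ collapses cleanly.

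For the continuity assertion, if in addition $F$ and $G$ are continuous partial operators then Theorem~\ref{resu: cont_comp} immediately provides a continuous $G\circ F$, which by the argument above realizes $g\circ f$; hence $g\circ f$ is continuously realizable. Restricting to singlevalued total $f$, $g$ and their singlevalued realizers recovers the statement for ordinary continuous functions between represented spaces.

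The main obstacle I expect is bookkeeping around the associativity/cancellation of $\delta_\Y$: multifunction composition is not symmetric and the domain conditions built into it differ from those of relational composition, so the cleanest route is to work with the tightening characterisation throughout and only invoke \verb$sing_comp$ (via singlevaluedness of the representations) at the point where the middle $\delta_\Y$ must disappear. Everything else is a mechanical application of Lemma~\ref{lem: tight comp}, Lemma~\ref{resu: rlzr_spec} and Theorem~\ref{resu: cont_comp}.
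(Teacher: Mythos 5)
Your overall strategy---translate realizability into tightenings via Lemma~\ref{resu: rlzr_spec}, work in the tightening calculus, and invoke Theorem~\ref{resu: cont_comp} for continuity of the composite realizer---is exactly the route the paper takes (it simply cites the \Rlzrs{} library together with Theorem~\ref{resu: cont_comp}). However, your key combinatorial step is broken: the composition $(\delta_\Z\circ G)\circ(\delta_\Y\circ F)$ is not even well-typed, since $\delta_\Y\circ F$ has type $\B_\X\mto\Y$ while $\delta_\Z\circ G$ has type $\B_\Y\mto\Z$, so the target $\Y$ of the first factor does not match the source $\B_\Y$ of the second (the same mismatch occurs on the right-hand side $(g\circ\delta_\Y)\circ(f\circ\delta_\X)$). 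Consequently there is no ``middle $\delta_\Y$'' that could be cancelled afterwards, and \texttt{sing\_comp} could not perform such a cancellation anyway: it only identifies relational with multifunction composition when the relevant factor is singlevalued; it provides no rule for eliminating a representation sitting inside a composite.

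The repair is to apply Lemma~\ref{lem: tight comp} twice, each time holding one factor fixed by reflexivity of tightening, and to use associativity together with transitivity of tightening instead of any cancellation: $\delta_\Z\circ(G\circ F)=(\delta_\Z\circ G)\circ F$ tightens $(g\circ\delta_\Y)\circ F=g\circ(\delta_\Y\circ F)$, which in turn tightens $g\circ(f\circ\delta_\X)=(g\circ f)\circ\delta_\X$; all composites here are well-typed and no singlevaluedness of $\delta_\Y$ is needed at any point. With this corrected chain the rest of your argument goes through as you state it: if $F$ and $G$ are continuous partial operators, Theorem~\ref{resu: cont_comp} makes $G\circ F$ a continuous realizer of $g\circ f$, giving continuous realizability, and specializing to (single-valued, total) functions and their realizers yields the statement for continuous functions.
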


\subsection{Examples and basic constructions such as products and sums}\label{sec: simple types}

Now that we can talk about continuity and computability on the real numbers, a reasonable next step is to attempt to prove addition and multiplication computable.
Both of these functions are of type $\RR \times \RR \to \RR$ and to make sense of continuity of functions of these types we need to specify how $\RR\times\RR$ should be made a represented space.
The \Incone{} library automatically generates such a represented space $\X \times \Y$ from arbitrary represented spaces $\X$ and $\Y$ by using the query type $\Q_{\X\times\Y}:= \Q_\X + \Q_\Y$, the answer type $\A_{\X\times\Y} := \A_\X \times \A_\Y$ and the representation $\delta_{\X\times\Y}$ defined by
\[ \delta_{\X\times\Y}(\psi) = (x, y) \quad \iff \quad \delta_{\X}(\fst{} \circ \psi \circ \inl) = x \wedge \delta_{\Y}(\snd{} \circ \psi \circ \inr) = y. \]
This can be decoded as follows: A name of the pair $(x,y)$ should be a pair $(\varphi,\varphi')$ of a name for $x$ and a name for $y$.
Since the set of pairs $\B_\X \times \B_\Y$ does not have the type that we required a naming space to have, we embed it into the naming space $\B_{\X\times\Y}:=(\A_\X \times \A_\Y)^{\Q_\X + \Q_\Y}$.
There are several possible choices for $\B_{\X\times\Y}$, but for the one picked by \Incone{} the projection function $\pi\colon \B_{\X\times \Y} \to \B_\X$ can particularly conveniently be expressed by the natural operations on the question and answer spaces, namely $\pi(\psi) := \fst{}\circ \psi \circ \inl$ and the formula for the second projection is similar.
\begin{prop}[\texttt{prod\_rep\_sing}, \texttt{prod\_rep\_sur} and \texttt{prod\_uprp\_cont}]
  For any represented spaces $\X$ and $\Y$ the space $(X\times Y, \delta_{\X\times\Y})$ is a represented space and it is the product of $\X$ and $\Y$ in the category of represented spaces.
\end{prop}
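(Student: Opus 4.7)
The plan is to split the proposition into two claims and handle them in order: first that $\delta_{\X\times \Y}$ is indeed a representation, and then that the resulting represented space satisfies the universal property of a categorical product with respect to continuous functions.

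For the first claim I would verify singlevaluedness and co-totality directly from the defining equivalence. If $\psi$ is a name of both $(x_1,y_1)$ and $(x_2,y_2)$, then $\fst\circ\psi\circ\inl$ is a name for both $x_1$ and $x_2$ under $\delta_\X$, so $x_1=x_2$ by singlevaluedness of $\delta_\X$; symmetrically $y_1=y_2$. For co-totality, given $(x,y)\in X\times Y$, choose names $\varphi\in\delta_\X^{-1}(x)$ and $\varphi'\in\delta_\Y^{-1}(y)$ (which exist by co-totality of $\delta_\X$ and $\delta_\Y$), pick any default answers $a_0\in \A_\X$ and $a_0'\in \A_\Y$ (nonempty by the naming space assumption), and define $\psi\colon \Q_\X+\Q_\Y\to \A_\X\times \A_\Y$ by $\psi(\inl q):=(\varphi(q),a_0')$ and $\psi(\inr q'):=(a_0,\varphi'(q'))$. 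Then $\fst\circ\psi\circ\inl=\varphi$ and $\snd\circ\psi\circ\inr=\varphi'$, so $\psi$ is a name of $(x,y)$.

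For the universal property I would first exhibit continuous projections and then construct the pairing. The projection $\pi_\X\colon X\times Y\to X$ is realized by $P_\X(\psi):=\fst\circ\psi\circ\inl$, which is continuous on $\B_{\X\times\Y}$ since $P_\X(\psi)(q)$ depends on $\psi$ only at $\inl q$, so the singleton list $(\inl q)$ is a certificate; symmetrically for $\pi_\Y$. Now suppose $\Z$ is a represented space and $f\colon \Z\to\X$, $g\colon\Z\to\Y$ are continuous, with continuous realizers $F$ and $G$. Define $H\colon \B_\Z\to \B_{\X\times\Y}$ by $H(\varphi)(\inl q):=(F(\varphi)(q),a_0')$ and $H(\varphi)(\inr q'):=(a_0,G(\varphi)(q'))$. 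Then $\fst\circ H(\varphi)\circ\inl = F(\varphi)$ and $\snd\circ H(\varphi)\circ\inr = G(\varphi)$, so whenever $\varphi$ is a name of $z$, $H(\varphi)$ is a name of $(f(z),g(z))$; thus $H$ realizes the set-theoretic pairing $\langle f,g\rangle(z):=(f(z),g(z))$. Continuity of $H$ follows from that of $F$ and $G$: a certificate for $H$ at $(\varphi,\inl q)$ is any certificate of $F$ at $(\varphi,q)$, and analogously for the right injection. Uniqueness is automatic at the level of functions: any $h\colon \Z\to X\times Y$ with $\pi_\X\circ h=f$ and $\pi_\Y\circ h=g$ must satisfy $h(z)=(f(z),g(z))$ by the universal property of the product of sets, so $h=\langle f,g\rangle$.

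The main obstacle is bookkeeping rather than any real difficulty: one has to line up the concrete shape of $\B_{\X\times\Y}$ with the abstract product structure and make sure the realizer $H$ genuinely witnesses continuity in the sense of Section~\ref{sec: continuity}, with certificates expressed in terms of questions in $\Q_\Z$. Since $H(\varphi)(\inl q)$ and $H(\varphi)(\inr q')$ are obtained by applying $F$ or $G$ in one coordinate and using a default in the other, continuity of $H$ reduces cleanly to continuity of $F$ and $G$ via the continuity-composition result, Theorem~\ref{resu: cont_comp}, combined with the obvious continuity of the coordinate projections. The classical argument then goes through, with Lemma~\ref{resu: rlzr_spec} available to express the realizer conditions as tightening statements if a purely diagrammatic formulation is preferred.
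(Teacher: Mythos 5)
Your proposal is correct and takes essentially the same route as the paper's formalization: it splits the claim exactly as the library does (\texttt{prod\_rep\_sing}, \texttt{prod\_rep\_sur}, then the universal property with continuous projections \texttt{fst\_cont}/\texttt{snd\_cont}), checking singlevaluedness and co-totality componentwise via a default-answer pairing of names and realizing $\langle f,g\rangle$ by the juxtaposed realizer $H$ with certificates inherited from $F$ and $G$. The only cosmetic point is that your closing appeal to Theorem~\ref{resu: cont_comp} is not needed, since the direct certificate argument you already gave for $H$ establishes its continuity.
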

Part of verifying the universal property of a product in the category of represented spaces with continuous resp.\ computable functions as morphisms is to prove the projections computable resp.\ continuous (\verb$fst_cont$ and \verb$snd_cont$).

\begin{exa}[\texttt{examples/Q\_reals.v}]\label{resu: R operations}
  Addition and multiplication of real numbers is computable (\texttt{Rplus\_cont} and \texttt{Rmult\_cont}).
  As described in more detail in Section \ref{sec: computability} this should be taken to mean that the operations are continuous and the realizers can be explicitly specified as \Coq{}-functions whose definitions contain no axioms.
  Indeed, the realizers are defined not through the more complicated operator assignment but more directly using the $\FTMF$ correspondence.
  Furthermore, their definition only uses very simple tools and the operations should therefore even be considered primitive recursive.
\end{exa}
\Incone{} additionally proves some other basic functions on product spaces computable.
Most notably it provides the possibility to glue continuous functions $f\colon \X \to \X'$ and $g\colon \Y \to \Y'$ together to obtain another continuous function $f \times g\colon \X \times \Y \to \X' \times \Y'$ (\verb$fprd_cont$).
More generally, such a construction is provided for continuously realizable multifunctions.

As another basic example of a represented space that is needed below let $I$ be any countable and non-empty set.
Set $\Q_{\mathbf I}:= \{\star\}$ and $\A_{\mathbf I} := I$.
Then the function $\delta_{\mathbf I}(\varphi) := \varphi(\star)$ makes the pair $\mathbf I:= (I, \delta_{\mathbf I})$ a represented space that is discrete in the following sense:
\begin{lem}[\texttt{cs\_id\_dscrt}]\label{resu: cs_id_dscrt}
  For any countable, non-empty set $I$ the represented space $\mathbf I$ described above is discrete in the sense that any function that has $\mathbf I$ as its domain is continuous.
Moreover, any multivalued function with $\mathbf I$ as input space is continuously realizable.
\end{lem}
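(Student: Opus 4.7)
The plan is to exploit the observation that the representation $\delta_{\mathbf I}(\varphi) = \varphi(\star)$ is essentially a bijection $\B_{\mathbf I} \cong I$: each $i \in I$ has a unique name, namely the constant function $\star \mapsto i$. Hence naming-theoretic information about $\mathbf I$ is carried by a single answer, and the natural certificate in every continuity statement will be the singleton list $(\star)$.

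For the singlevalued part, given $f\colon \mathbf I \to \mathbf Y$, I would first use the fact that $\delta_{\mathbf Y}$ is co-total together with countable choice indexed over $I$ to obtain a section $\eta\colon I \to \B_{\mathbf Y}$ satisfying $\delta_{\mathbf Y}(\eta(i)) = f(i)$ for every $i$. Then define the candidate realizer $F\colon \B_{\mathbf I} \to \B_{\mathbf Y}$ by $F(\varphi) := \eta(\varphi(\star))$. Clearly $F$ is a realizer: if $\varphi$ is a name of $i$ (so $\varphi(\star) = i$), then $F(\varphi) = \eta(i)$ is by construction a name of $f(i)$. To see $F$ is continuous, exhibit the constant Skolem-function $L(q') := (\star)$ as a modulus of continuity: if $\psi$ coincides with $\varphi$ on $(\star)$, then $\psi(\star) = \varphi(\star)$, whence $F(\psi) = \eta(\psi(\star)) = \eta(\varphi(\star)) = F(\varphi)$, and in particular $F(\psi)(q') = F(\varphi)(q')$ for all $q'$.

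For the multivalued part, given $f\colon \mathbf I \mto \mathbf Y$, apply countable choice over $I$ once more: for every $i \in \dom(f)$ the set $f(i)$ is non-empty, and co-totality of $\delta_{\mathbf Y}$ provides at least one name for any chosen $y_i \in f(i)$; combining both choices yields a partial function $\eta\colon {\subseteq I} \to \B_{\mathbf Y}$ defined on $\dom(f)$ with $\delta_{\mathbf Y}(\eta(i)) \in f(i)$. Then $F(\varphi) := \eta(\varphi(\star))$ satisfies $\delta_{\mathbf Y}(F(\varphi)) \in f(\delta_{\mathbf I}(\varphi))$ for every $\varphi$ with $\delta_{\mathbf I}(\varphi) \in \dom(f)$, so by Lemma~\ref{resu: rlzr_spec} the composition $\delta_{\mathbf Y} \circ F$ tightens $f \circ \delta_{\mathbf I}$, i.e.\ $F$ realizes $f$. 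Continuity is verified exactly as above, with the same certificate $(\star)$. The main obstacle is the use of choice to build $\eta$: because $I$ is countable, countable choice suffices, which is consistent with the set of axioms adopted in \Incone{}; beyond this, the argument is entirely mechanical thanks to the $\{\star\}$-indexed question type.
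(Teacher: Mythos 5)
Your proof is correct and follows exactly the argument the paper intends for \texttt{cs\_id\_dscrt}: since every $i\in I$ has the single name $\star\mapsto i$, a realizer is obtained by choosing (via choice over the countable set $I$, consistent with the axioms assumed in \Incone) a name of a value for each input, and the constant certificate $(\star)$ witnesses its continuity, with the multivalued case handled through Lemma~\ref{resu: rlzr_spec} just as you do. No gaps to report.
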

In particular, the natural numbers can be assigned a discrete represented space.
We denote both the set and the represented space of natural numbers by $\NN$.

Let us briefly mention a couple of additional constructions.
The \Incone{} library proves that the represented space $\mathbf 1$ constructed from the unit type as above is a terminal object in the category of represented spaces.
It defines for each pair of represented spaces $\X$ and $\Y$ a space $\X +\Y$ that is proven to be the category-theoretical sum.
It gives a separate option type construction and proves the resulting space to be isomorphic to $\X + \mathbf 1$.
There is also an elementary construction of the space of finite lists of elements from a represented space $\X$, where asking a question about an element of $x$ results in a list of answers to question for each of the elements of the list.
\subsection{The space of infinite sequences, limits and pointwise operations}\label{sec: sequences}

Let $I$ be a countable, non-empty set and let $\X$ be a represented space.
Define a represented space $\prod_{I} \X$ whose underlying set are the functions of type $I\to X$ by setting $\Q_{\prod_{I}\X}:= I \times \Q_\X$, $\A_{\prod_I}:= \A_\X$ and
\[ (x_i) \in \delta_{\prod_I\X}(\varphi) \quad \iff\quad \forall i \colon I, x_i \in\delta_\X(q \mapsto \varphi(i,q)), \]
where $(x_i)$ is short for the function $i \mapsto x_i$.

\begin{prop}[\texttt{rep\_Iprod\_sing}, \texttt{rep\_Iprod\_sur} and \texttt{cprd\_uprp\_cont}]\label{resu: cprd}
  Let $I$ be countable and non-empty.
  Then $\prod_{I}\X := (X^I, \delta_{\prod_I \X})$ is a represented space and $\X^\omega:= \prod_\NN \X$ is a countably infinite product in the category of represented spaces and continuous functions.
\end{prop}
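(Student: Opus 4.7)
The plan is to verify the three assertions in turn: singlevaluedness and co-totality of $\delta_{\prod_I \X}$, and the universal property of $\X^\omega$ as a countable product.

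First I would address singlevaluedness. Suppose $(x_i), (y_i) \in \delta_{\prod_I \X}(\varphi)$. Unfolding the definition, for every $i \in I$ the slice $q \mapsto \varphi(i,q)$ is a $\delta_\X$-name of both $x_i$ and $y_i$, so singlevaluedness of $\delta_\X$ gives $x_i = y_i$; functional extensionality then yields $(x_i) = (y_i)$. For co-totality, fix $(x_i) \in X^I$. Since $\delta_\X$ is co-total, for each $i$ there exists some $\varphi_i \in \B_\X$ with $x_i \in \delta_\X(\varphi_i)$. Countable choice on $I$ (which is available in the library's set of global axioms) produces a family $(\varphi_i)_{i\in I}$ simultaneously; the function $\varphi(i,q) := \varphi_i(q)$ is then a name of $(x_i)$, establishing surjectivity.

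For the universal property of $\X^\omega$ I would first exhibit the projections. Define $\pi_i \colon \X^\omega \to \X$ by $\pi_i((x_j)_j) := x_i$; a realizer is $P_i(\varphi) := (q \mapsto \varphi(i,q))$, which is trivially continuous with modulus $q \mapsto [(i,q)]$ (a single-element list), so $\pi_i$ is continuous. Next, given a represented space $\Y$ and a family of continuous functions $f_i\colon \Y \to \X$, I would define the tupling $\langle f_i\rangle\colon \Y \to \X^\omega$ by $y \mapsto (f_i(y))_{i\in\NN}$. To build a realizer, use countable choice to pick a continuous realizer $F_i$ of each $f_i$, and set $F(\psi)(i,q) := F_i(\psi)(q)$. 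Then for a name $\psi$ of $y$, the slice $q \mapsto F(\psi)(i,q)$ equals $F_i(\psi)$, which is a name of $f_i(y)$, so $F(\psi)$ names $\langle f_i\rangle(y)$. Continuity of $F$ is inherited pointwise: for fixed input $(i,q)$, any modulus of $F_i$ at $\psi$ and $q$ is also a modulus of $F$ at $\psi$ and $(i,q)$.

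Finally I would verify commutativity and uniqueness. For commutativity, $\pi_i \circ \langle f_j\rangle$ sends $y$ to $f_i(y)$ by definition, so $\pi_i \circ \langle f_j\rangle = f_i$. For uniqueness, if $g\colon\Y \to \X^\omega$ is continuous with $\pi_i\circ g = f_i$ for every $i$, then for each $y\in Y$ and each $i$, $g(y)_i = f_i(y) = \langle f_j\rangle(y)_i$, so $g(y) = \langle f_j\rangle(y)$ by functional extensionality, and hence $g = \langle f_j\rangle$. The main obstacle is the construction of the combined realizer $F$: it requires invoking countable choice to pick the family $(F_i)$ and then checking that the index-wise modulus extends to a modulus for $F$. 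This step is non-constructive but matches the framework set up in Section~\ref{sec: comp-cont}, and it is the reason the statement mentions the category of represented spaces with continuous (rather than computable) morphisms — without uniform effective access to the $F_i$ there is no reason for $F$ itself to be computable.
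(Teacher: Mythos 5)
Your proposal is correct and follows essentially the same route the paper takes: slice-wise names with functional extensionality for singlevaluedness, a choice principle over $I$ for surjectivity, and choice of realizers for the $f_i$ (hence only continuity, not computability, of the tupling) for the universal property, which is exactly the axiom usage the paper records for \texttt{rep\_Iprod\_sing}, \texttt{rep\_Iprod\_sur} and \texttt{cprd\_uprp\_cont}.
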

The use of the symbol $\omega$ instead of $\NN$ is to differentiate the space $\X^\omega$ of infinite sequences in $\X$ from the space of functions from the natural numbers to $\X$ that is discussed in the next section.
The proof that $\delta_{\prod_I\X}$ is singlevalued assumes functional extensionality and the proof of surjectivity needs a choice principle over $I$.
Since $I=\NN$ is by far the most common use-case and $I$ is assumed to be countable anyway, this usually boils down to the axiom of countable choice.
The proof of the universal property relies on stronger choice principles and the law of excluded middle.
Since the category of represented spaces with computable functions fails to have countably infinite products, the corresponding result is inherently inefficient.
Given a sequence of computable functions $f_n\colon \Y \to \X$ the function $F\colon \Y \to \X^\NN$ that glues them together need only be continuous.
Thus we only prove the existence of its realizer.
To obtain a computable realizer for $F$ one needs one algorithm that uniformly computes the $f_n$ and not just the existence of an algorithm for each $f_n$.
How far our use of axioms can be optimized in this case is difficult to tell at this point in time since the current proof uses a part of the library that has not yet been optimized in terms of axiom use.
Since it is more a sanity result than something that may actually be of use, optimizations here are not our highest priority.

An example of a partial function whose natural domain is a subset of the space of sequences is the limit operator.
Consider the multivalued function $\lim_\X\colon \X^\omega \mto \X$ where $x \in \lim_\X(x_n)$ if and only if there is a convergent sequence of names $(\varphi_n) \subseteq \B_\X$ and some $\varphi$ such that $\varphi$ is a name of $x$, each $\varphi_n$ is a name for $x_n$ and the sequence $(\varphi_n)$ converges to $\varphi$ in $\B_\X$, i.e.\ $\lim_{\B_\X}(\varphi_n) = \varphi$ where the limit in $\B_\X$ is taken point-wise as explained in Section \ref{sec: continuity}.
While the limit operator on Baire space is singlevalued, this need not be true for the limit operator on a general represented space, as can be seen at the example of Sierpinski space that is discussed in Section \ref{sec: closed choice}.
In most spaces that are relevant for numerical analysis, the limit operator is singlevalued but discontinuous.
It is often the case that computability of the limit operator can be recovered by restricting it to an appropriate set of efficiently convergent sequences.
\begin{exa}[\texttt{examples/Q\_reals.v}]\label{resu: R limits}
  The limit operator $\lim_\RR$ where $\RR$ is represented as in Example \ref{resu: R operations} is discontinuous (\texttt{lim\_not\_cont}).
  Its restriction to those sequences $(x_n)$ that are efficiently Cauchy in the sense that $|x_n - x_m| \leq 2^{-n} + 2^{-m}$ is computable (\texttt{lim\_eff\_hcr}).
\end{exa}
To be strict, the example file proves these properties with respect to the metric notion of convergence that is introduced later in Section~\ref{sec: continuities}.
To really obtain what is claimed here one has to additionally use that metric convergence and convergence in the represented space of real numbers are equivalent.
The major part of this equivalence is proven in Theorem~\ref{lem:lim-mlim} and the rest, namely that the representation of real numbers introduced in Example~\ref{ex: q_reals} is equivalent (in the sense introduced shortly) to the representation of real numbers as metric space, is also provided by the \Incone{} library.

A function $f\colon\X\to\Y$ between represented spaces is called sequentially continuous if it preserves limits, i.e.\ if $\lim_\X x_n = x$ implies that $\lim_\Y f(x_n) = f(x)$.
In general, sequential continuity need not imply continuity and this is not a matter of whether one works constructively or not.
From the point of view of a classical mathematician there exist sequentially continuous functions between represented spaces that are simply not continuously realizable.
The difference can be eliminated by making additional assumptions about the involved represented spaces.
One such assumption is referred to as admissibility and while we do not go into the details here, this condition is of major importance throughout different parts of computable analysis \cite{schroder2002extended}.
Most spaces encountered in practice are admissible, in particular the space of real numbers from Example~\ref{ex: q_reals}, all the representations of metric spaces considered in Section~\ref{sec: metric spaces}, and also the representations of hyper-spaces that are topic of Section~\ref{sec: closed choice} are admissible.
The relation between sequential continuity and continuous realizability is discussed in more detail in Section \ref{sec: continuities}.

The \Incone{} library proves some further lemmas about spaces of sequences that might be useful in applications and should thus not go unmentioned.
Two represented spaces $\X$ and $\Y$ are \demph{isomorphic}, in symbols $\X \simeq \Y$, if there exists a continuous bijection with continuous inverse.
The spaces are computably isomorphic if there exists a computable bijection with computable inverse.
In the special case where the underlying set of two represented spaces are identical we call their representations \demph{equivalent} if the identity function is an isomorphism.
While mathematicians are usually fairly liberal in identifying sets and equivalence of representations is a widely used concept, in our formal development isomorphy is by far the more common notion.
\begin{lem}[\texttt{cprd\_prd}]
  For any represented spaces $\X$, $\Y$ and countable, non-empty set $I$ it holds that
  $\prod_{I}(\X \times \Y) \simeq \big(\prod_{I}\X\big) \times \big(\prod_{I}\Y\big)$, the spaces are even computably isomorphic.
\end{lem}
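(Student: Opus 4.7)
The plan is to construct the isomorphism explicitly in both directions and then verify that each direction has a computable realizer given by a simple rearrangement of questions and answers. Underlying set-theoretically the bijection $\alpha\colon (X\times Y)^I \to X^I \times Y^I$ is given by $\alpha((z_i)_{i\in I}) = ((\fst z_i)_{i\in I}, (\snd z_i)_{i\in I})$, with inverse $\alpha^{-1}((x_i),(y_i)) = ((x_i,y_i))_{i\in I}$. This is an obvious bijection; the work is in providing computable realizers on the naming-space level.

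First I would unfold the relevant question and answer types. A name of $(z_i) \in (X\times Y)^I$ in $\prod_I(\X\times\Y)$ is a function $\varphi\colon I \times (\Q_\X + \Q_\Y) \to \A_\X\times\A_\Y$ with the property that for each $i$ the function $q \mapsto \varphi(i,q)$ names $z_i$ in $\X\times\Y$. A name of $((x_i),(y_i))$ in $(\prod_I\X)\times(\prod_I\Y)$ is a function $\psi\colon (I\times\Q_\X) + (I\times\Q_\Y) \to \A_\X\times\A_\Y$ such that $\fst\circ\psi\circ\inl$ names $(x_i)$ in $\prod_I\X$ and $\snd\circ\psi\circ\inr$ names $(y_i)$ in $\prod_I\Y$. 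The realizer $F$ for $\alpha$ is defined by pattern matching on the sum: $F(\varphi)(\inl(i,q_\X)) := (\fst(\varphi(i,\inl q_\X)), a_\Y^\star)$ and $F(\varphi)(\inr(i,q_\Y)) := (a_\X^\star, \snd(\varphi(i,\inr q_\Y)))$, where $a_\X^\star$ and $a_\Y^\star$ are arbitrary fixed elements of $\A_\X$ and $\A_\Y$, which exist by the non-emptiness clause of the definition of a naming space. The realizer $G$ for $\alpha^{-1}$ takes a $\psi$ of the second type and returns $G(\psi)(i,\inl q_\X) := (\fst(\psi(\inl(i,q_\X))), a_\Y^\star)$ and $G(\psi)(i,\inr q_\Y) := (a_\X^\star, \snd(\psi(\inr(i,q_\Y))))$.

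The second step is to verify the realizer condition for $F$ and $G$. This is purely a matter of unfolding: $F$ is designed so that $\fst\circ F(\varphi)\circ\inl$, evaluated at $(i,q_\X)$, recovers exactly $\fst(\varphi(i,\inl q_\X))$, which is the $\X$-component of the name of $z_i$ coming from $\varphi$; by the definition of $\delta_{\X\times\Y}$ this is a name of $\fst z_i$, so by the definition of $\delta_{\prod_I\X}$ this names $(\fst z_i)$ in $\prod_I\X$. The argument for the $\Y$-component and for $G$ are symmetric. Finally $\alpha$ and $\alpha^{-1}$ are mutual inverses at the level of the underlying sets by inspection (a point-wise check per index $i$), so they form an isomorphism of represented spaces. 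Computability is immediate: $F$ and $G$ are defined by case analysis on a sum followed by applications of $\fst$, $\snd$, and constant functions, hence are readily definable as \Coq{} functions without axioms and correspond to realizers through the $\FTMF$ correspondence.

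The main obstacle is not mathematical but notational: one must carefully track the different roles of the countable index $I$ in the question spaces of the two sides and make sure the re-association $I\times(\Q_\X+\Q_\Y) \leftrightarrow (I\times\Q_\X)+(I\times\Q_\Y)$ is implemented correctly through the pattern match, and one must supply the dummy values $a_\X^\star, a_\Y^\star$ coming from the non-emptiness assumption so that the resulting functions are genuinely total. Beyond that, the proof is a routine unfolding of the product and sequence representations introduced above.
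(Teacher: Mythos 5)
Your proposal is correct and matches the paper's approach: the paper gives no written argument beyond remarking that the realizers are defined by very elementary means and interpreted through the $\FTMF$ correspondence, which is exactly the explicit question/answer re-indexing you construct and verify. One minor simplification: the dummy answers $a_\X^\star, a_\Y^\star$ are not needed, since you can simply return the full pair $\varphi(i,\inl q_\X)$ (resp.\ $\psi(\inl(i,q_\X))$) — the product representations only ever read the relevant component anyway.
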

The realizers are defined using very limited means and interpreted using the $\FTMF$ assignment, thus they should be considered primitive recursive.

Any function $f\colon \X \to \Y$ can be extended to a function $f^I \colon \prod_I \X \to \prod_I \Y$ that applies $f$ pointwise, i.e.\ is defined by $f^I((x_i)_{i\in I}) := (f(x_i))_{i\in I}$.
\begin{lem}[\texttt{ptw\_cont}]
  Whenever $f$ is a continuous function, then also $f^I$ is continuous.
\end{lem}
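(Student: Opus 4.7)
The plan is to lift a continuous realizer of $f$ to a continuous realizer of $f^I$ in the most direct way possible, by applying it coordinatewise. Recall from the definition of $\delta_{\prod_I \X}$ that a name $\varphi \in \B_{\prod_I \X} = \A_\X^{I \times \Q_\X}$ of $(x_i)_{i \in I}$ has the property that for every $i \in I$ the function $\varphi_i := q \mapsto \varphi(i, q)$ is a name of $x_i$ in $\B_\X$. So given a continuous realizer $F\colon{\subseteq\B_\X}\to \B_\Y$ of $f$, I would define
\[ F^I(\varphi)(i, q') := F(\varphi_i)(q') \]
on the domain consisting of those $\varphi$ for which $\varphi_i \in \dom(F)$ for every $i \in I$.

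First I would check that $F^I$ realizes $f^I$. If $\varphi$ is a name of $(x_i)_{i\in I}$ then each $\varphi_i$ is a name of $x_i$, so by the realizer property of $F$ the function $q' \mapsto F(\varphi_i)(q')$ is a name of $f(x_i)$. By the definition of $\delta_{\prod_I \Y}$, this is exactly the statement that $F^I(\varphi)$ is a name of $(f(x_i))_{i \in I} = f^I((x_i))$.

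Next I would check continuity of $F^I$ on its domain. Fix $\varphi \in \dom(F^I)$ and a query $(i, q') \in I \times \Q_\Y = \Q_{\prod_I \Y}$. By continuity of $F$ applied to the input $\varphi_i \in \dom(F)$ and the query $q'$, there is a certificate $\str q = (q_1,\ldots,q_n) \in \seq \Q_\X$ such that for every $\psi \in \dom(F)$ coinciding with $\varphi_i$ on $\str q$, $F(\psi)(q') = F(\varphi_i)(q')$. I would lift this to the certificate $\str Q := ((i,q_1),\ldots,(i,q_n)) \in \seq \Q_{\prod_I \X}$. If $\varphi' \in \dom(F^I)$ coincides with $\varphi$ on $\str Q$, then $\varphi'_i$ coincides with $\varphi_i$ on $\str q$, so $F(\varphi'_i)(q') = F(\varphi_i)(q')$ and hence $F^I(\varphi')(i,q') = F^I(\varphi)(i,q')$. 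Packaging these certificates into a Skolem function $L^I\colon \Q_{\prod_I \Y} \to \seq \Q_{\prod_I \X}$ gives a modulus of continuity for $F^I$ in the sense used by \Incone.

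The construction is essentially mechanical, and I do not expect a real obstacle. The only aspect that requires a bit of care is the bookkeeping around partiality: one must ensure $F^I$ is defined precisely on those $\varphi$ for which each slice $\varphi_i$ lies in $\dom(F)$, and that the certificate argument quantifies only over such $\varphi'$. In the \Incone{} formulation this is absorbed into the multifunction-based definition of continuity, so the translation from the pointwise certificate for $F$ to the lifted certificate for $F^I$ is immediate.
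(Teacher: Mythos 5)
Your proposal is correct and is essentially the argument the library uses: the paper gives no written proof for \texttt{ptw\_cont}, and the intended realizer is exactly the coordinatewise lift $F^I(\varphi)(i,q') := F(\varphi_i)(q')$ with certificates obtained by tagging a certificate for the slice $\varphi_i$ with the index $i$. The only implicit point worth noting is that bundling the slicewise moduli $L_{\varphi_i}$ into a single Skolem function $L^I_\varphi\colon \Q_{\prod_I\Y}\to\seq\Q_{\prod_I\X}$ uses a choice principle over the countable index set $I$, which is in line with the axioms the paper assumes throughout (cf.\ the discussion around Proposition~\ref{resu: cprd}).
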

This lemma has a multivalued variant (\texttt{ptw\_hcr}).
If the realizer of $f$ can be expressed as a \Coq{}-function through the $\FTMF$ interpretation, then so can $f^I$.
The same should hold true if a realizer of $f$ can be expressed via a \Coq{} function through the operator assignment $M \mapsto F_M$ from Section~\ref{sec: computability}, although we have not carried out the details yet.

Another common situation is that an operation $*\colon \X \times \Y \to \Z$ is used to construct an operation $*^I\colon \prod_I \X \times \prod_I \Y \to \prod_I \Z$ via $(x_i) *^I (y_i) := (x_i * y_i)$.
For instance the natural operations on diverse spaces of sequences of real numbers are introduced by pointwise applying the arithmetic operations of the real numbers.
More concretely, the most common vector space structure on the space $\RR^\omega$ of sequences in the real numbers is to set $\X=\Y=\Z:=\RR$ in the above and use $+^\NN$ where $+$ is addition of real numbers.
A proof that this extension also preserves continuity can be directly obtained from the previous two lemmas.
\begin{cor}[\texttt{cptw\_op\_cont}]
   If $*$ is a continuous operation, then the corresponding pointwise operation $*^I$ is also continuous.
\end{cor}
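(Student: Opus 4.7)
The plan is to realize $*^I$ as a composition of operations each of which is already known to be continuous by the two preceding results. First I would invoke the isomorphism $\iota\colon (\prod_I \X)\times (\prod_I \Y) \simeq \prod_I(\X\times\Y)$ from Lemma \texttt{cprd\_prd}, whose forward direction $\iota$ is (computably, hence) continuous. Explicitly, $\iota$ sends the pair of sequences $((x_i)_{i\in I},(y_i)_{i\in I})$ to the single sequence $(i\mapsto (x_i,y_i))$ of pairs.

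Next I would apply Lemma \texttt{ptw\_cont} to the continuous operation $*\colon \X\times\Y\to\Z$ to obtain that its pointwise lift $*^{I}_{\mathrm{ptw}}\colon \prod_I(\X\times \Y)\to \prod_I \Z$, defined by $(i\mapsto (x_i,y_i))\mapsto (i\mapsto x_i * y_i)$, is continuous. Composing with $\iota$ yields a continuous function $(\prod_I \X)\times(\prod_I \Y)\to \prod_I \Z$, and a direct computation shows that on an input $((x_i),(y_i))$ this composition returns $(x_i * y_i)_{i\in I}$, which is precisely the definition of $*^I$. Continuity of the composition then follows from the fact that the composition of continuous functions between represented spaces is continuous.

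The only point that requires any attention is the identification step where one checks that the composite $*^{I}_{\mathrm{ptw}}\circ \iota$ is, as a function of sets, literally equal (or at least extensionally equal) to $*^I$ as defined in the statement; this is immediate from the definitions of $\iota$ and of pointwise application, so there is no real obstacle. Apart from this bookkeeping, the proof is purely a matter of plugging the two preceding lemmas together with the closure of continuity under composition.
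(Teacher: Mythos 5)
Your proposal is correct and matches the paper's intended argument: the paper explicitly says the corollary ``can be directly obtained from the previous two lemmas,'' namely \texttt{cprd\_prd} and \texttt{ptw\_cont}, which is exactly your decomposition (using the isomorphism in the direction $(\prod_I\X)\times(\prod_I\Y)\to\prod_I(\X\times\Y)$, which is fine since the isomorphism is continuous in both directions), followed by closure of continuity under composition.
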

As mentioned before, examples where these results are useful is when reasoning about pointwise addition and multiplication as operations on $\RR^\omega$.
The operation of multiplying a sequence with a scalar is also covered by first embedding the scalars into the sequences as the constant sequences and then using pointwise multiplication.

\subsection{Function spaces and infinite sequences as functions}\label{sec: function spaces}

Let $\X$ and $\Y$ be represented spaces and denote by $\Y^\X$ the collection of all continuous functions from $\X$ to $\Y$.
Recall that Section \ref{sec: universal} gave an explicit description of a continuous universal $U$.
This universal can be used to equip the space $\Y^\X$ with a representation as follows:
Let $\B_{\Y^\X}$ be the space that the universal $U$ points to for input naming spaces $\B_\X = \A_\X^{\Q_\X}$ and $\B_\Y = \A_\Y^{\Q_\Y}$.
This expands to $\Q_{\Y^\X} = \seq \A_\X \times \Q_\Y$ and $\A_{\Y^\X} = \seq \Q_\X + \A_\Y$, where $\B_{\Y^\X} = \A_{\Y^\X} ^ {\Q_{\Y^\X}}$.
Note that a function (as opposed to a partial or multifunction) is uniquely determined by each of its realizers.
Thus there exists a unique partial function $\delta_{\Y^\X}\colon \B_{\Y^\X}\mto \Y^\X$ specified by
\[ \delta_{\Y^\X}(\psi) = f \quad \iff \quad F_{U(\psi)} \text{ realizes } f. \]
Put differently, although $\delta_{\Y^\X}$ is specified as a multifunction, it is singlevalued.
Any $f \in \Y^\X$ is continuous which by definition means that there exists some continuous $F\colon{\subseteq \B_\X} \to \B_\Y$ that realizes $f$.
Because $U$ is a universal there exists some $\psi$ such that $F_{U(\psi)}$ tightens $F$ (cf.\ Theorem \ref{resu: U_universal}).
As being a realizer is preserved under tightening also $F_{U(\psi)}$ realizes $f$.
Thus $\delta_{\Y^\X}$ is surjective and co-total when considered a multifunction.

\begin{prop}[\texttt{fun\_rep\_sing} and \texttt{fun\_rep\_sur}]\label{prop:fun-rep}
  For any represented spaces $\X$ and $\Y$ the space $(\Y^\X,\delta_{\Y^\X})$ of continuous functions as defined above is a represented space.
\end{prop}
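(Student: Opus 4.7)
The plan is to verify the three requirements making $(\Y^\X, \delta_{\Y^\X})$ a represented space: that $\B_{\Y^\X}$ is a naming space, and that $\delta_{\Y^\X}$ is both singlevalued and co-total. Much of the work is already carried out in the paragraph preceding the statement, so my proof would largely be a matter of organizing those observations and filling in the small gaps.

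First I would verify that $\B_{\Y^\X} = (\seq \A_\X \times \Q_\Y) \to (\seq \Q_\X + \A_\Y)$ is a naming space. Since $\X$ and $\Y$ are represented spaces, each of $\Q_\X$, $\A_\X$, $\Q_\Y$, $\A_\Y$ is countable and non-empty. The type constructors $\seq$, $\times$ and $+$ all preserve countability and non-emptiness, so both the question type $\seq \A_\X \times \Q_\Y$ and the answer type $\seq \Q_\X + \A_\Y$ are countable and non-empty, as required.

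Next I would argue singlevaluedness. Suppose $\delta_{\Y^\X}(\psi) = f$ and $\delta_{\Y^\X}(\psi) = g$, so that $F_{U(\psi)}$ realizes both $f$ and $g$. Pick an arbitrary $x \in X$; by surjectivity of $\delta_\X$ there exists a name $\varphi$ of $x$. Since $F_{U(\psi)}$ is singlevalued (\verb$FU_sing$) and realizes $f$, the value $F_{U(\psi)}(\varphi)$ is defined and is a name of $f(x)$; by the same token it is also a name of $g(x)$. Singlevaluedness of $\delta_\Y$ then forces $f(x) = g(x)$, and functional extensionality yields $f = g$.

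For co-totality, let $f \in \Y^\X$. By definition of $\Y^\X$, $f$ is continuous, so there exists a continuous partial operator $F\colon {\subseteq \B_\X} \to \B_\Y$ realizing $f$. Theorem~\ref{resu: U_universal} provides some $\psi \in \B_{\Y^\X}$ such that $F_{U(\psi)}$ tightens $F$. Being a realizer is preserved under tightening (a straightforward consequence of Lemma~\ref{resu: rlzr_spec} combined with Lemma~\ref{lem: tight comp}), so $F_{U(\psi)}$ also realizes $f$, i.e.\ $f \in \delta_{\Y^\X}(\psi)$. Hence every $f \in \Y^\X$ has at least one name, which is exactly the co-totality of $\delta_{\Y^\X}$ as a multifunction.

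I do not expect a serious obstacle: the construction of the universal $U$ in Section~\ref{sec: universal} already encodes the key content, and the verification reduces to unfolding definitions and invoking the specification lemmas \verb$FU_sing$ and Theorem~\ref{resu: U_universal}. The only minor care-point is to make sure that the preservation of ``being a realizer'' under tightening is stated or is an immediate corollary of Lemma~\ref{resu: rlzr_spec}; if necessary I would insert a one-line observation to that effect before applying it in the surjectivity step.
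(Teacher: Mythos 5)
Your proposal is correct and follows essentially the same route as the paper: singlevaluedness comes from the fact that a function is uniquely determined by any of its realizers (via surjectivity of $\delta_\X$, singlevaluedness of $\delta_\Y$ and functional extensionality), and co-totality comes from Theorem~\ref{resu: U_universal} together with preservation of realizability under tightening. The only point the paper adds is that the formal proof of singlevaluedness also assumes proof irrelevance, since in the formalization elements of $\Y^\X$ are pairs of a function and a continuity proof---a formalization detail rather than a mathematical gap in your argument.
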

The proof of singlevaluedness assumes proof irrelevance and functional extensionality.
Here proof irrelevance is another commonly assumed axiom that asserts a \Coq{} property (\cprop) and that over the internal logic of \Coq{} is implied by the law of excluded middle.

The following theorem uses the finite products as they were constructed in Section~\ref{sec: simple types} and can be proven from the continuity properties of the universal from Theorem~\ref{resu: FU_cont}.
\begin{thm}[\texttt{eval\_cont}]\label{resu: eval_cont}
  Evaluation as operation $\Y^\X\times \X \to \Y$ is computable.
\end{thm}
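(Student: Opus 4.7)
The plan is to build a realizer for evaluation directly out of the \Coq{}-function $U$ that already powers the function-space representation, and then to read off continuity from the continuity data for $U$ packaged by Theorem~\ref{resu: FU_cont}. Unfolding the product representation, a name of a pair $(f, x) \in \Y^\X \times \X$ is some $\chi \in \B_{\Y^\X\times\X}$ whose two projections $\psi := \fst \circ \chi \circ \inl$ and $\varphi := \snd \circ \chi \circ \inr$ are, respectively, an associate of $f$ and a name of $x$. By the very definition of $\delta_{\Y^\X}$, the operator $F_{U(\psi)}$ realizes $f$, hence $F_{U(\psi)}(\varphi)$ is defined and is a name of $f(x)$. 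This is exactly what a realizer of evaluation must produce.

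To make this into a computable realizer I would define
\[ M_E(n, \chi, q') := U(\psi, n, \varphi, q'), \]
where $\psi$ and $\varphi$ are the projections above. Since the projections and $U$ are given by elementary \Coq{} definitions free of non-computational axioms, so is $M_E$, and $F_{M_E}(\chi) = F_{U(\psi)}(\varphi)$ by construction. The previous paragraph then shows that $F_{M_E}$ realizes evaluation, establishing the computability part once continuity is in hand.

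For continuity I would exhibit a modulus explicitly. Given $\chi$ and an output question $q'$, Theorem~\ref{resu: FU_cont} supplies two finite lists: $F_{U_Q(\psi)}(\varphi, q')$, a certificate $\str q \in \seq \Q_\X$ for the value $F_{U(\psi)}(\varphi)(q')$ under variation of $\varphi$; and $F_{U_S(\psi)}(\varphi, q')$, the list of state-arguments in $\seq \A_\X \times \Q_\Y$ on which the evaluation of $U$ queries $\psi$. Using the product-space question type $\Q_{\Y^\X\times\X} = \Q_{\Y^\X} + \Q_\X$, I would tag the state-entries by $\inl$ and the entries of $\str q$ by $\inr$ and concatenate, yielding a single list over $\Q_{\Y^\X\times\X}$. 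Any $\chi'$ agreeing with $\chi$ on this list induces projections $\psi'$ agreeing with $\psi$ on exactly the states queried during the evaluation of $U(\psi)$ on $\varphi$ at $q'$, and $\varphi'$ agreeing with $\varphi$ on $\str q$; running $U$ step by step with $\chi'$ therefore produces the same trace and the same final answer, so the list is a certificate for $F_{M_E}$.

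The main obstacle is bookkeeping rather than mathematical depth: one must thread the product encoding consistently through the two projection formulas, and verify that controlling both the queries into $\psi$ and the queries into $\varphi$ suffices to fix the whole run of $U$. That the right two notions to control are precisely the ones produced by $U_S$ and $U_Q$ is exactly the content of Theorem~\ref{resu: FU_cont}, so once the product-space plumbing is set up the continuity argument collapses to applying that theorem together with the certificate property.
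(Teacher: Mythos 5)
Your proposal is correct and follows essentially the route the paper indicates: feed the two product projections of a name of $(f,x)$ into the \Coq{}-function $U$ to get the realizer, and obtain its continuity from the query/state-tracking data of the universal supplied by Theorem~\ref{resu: FU_cont} (and its corollary for variation in the associate), threaded through the product-space question type. The paper gives no more detailed argument than this, so your trace-based certificate construction is a faithful elaboration of the intended proof rather than a different one.
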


The function space construction overlaps in its scope with the space of infinite sequences:
For a countable, non-empty index set $I$ and a represented space $\X$, the set underlying the space $\prod_I \X$ is the set of functions from $I$ to $\X$.
Recall what it meant for the discrete space $\mathbf I$ generated from $I$ as in Section \ref{sec: simple types} to be discrete:
Proposition \ref{resu: cs_id_dscrt} says that any function starting from $\mathbf I$ is continuous.
As a consequence, the sets underlying $\prod_I \X$ and $\X^{\mathbf I}$ are identical.
Indeed these spaces are computably isomorphic.

\begin{thm}[\texttt{sig\_iso\_fun}]\label{resu: sig_iso_fun}
  For any represented space $\X$ and any countable, non-empty set $I$ the space $\prod_{I} \X$ from the last section is computably isomorphic to the function space $\X^{\mathbf I}$, where $\mathbf I$ the discrete space constructed from $I$ as described in Section \ref{sec: simple types}.
\end{thm}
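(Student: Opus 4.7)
The plan is to exhibit mutually inverse computable maps between $\prod_I \X$ and $\X^{\mathbf I}$ and show that each is a realizer of the identity on the shared underlying set $X^I$. Because $\mathbf I$ is discrete by Lemma \ref{resu: cs_id_dscrt}, every $f\colon I \to X$ is automatically continuous as a map $\mathbf I \to \X$, so the underlying sets of the two represented spaces really coincide, and the problem reduces to showing that the identity is computable in both directions.

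For the direction $\prod_I \X \to \X^{\mathbf I}$, I would construct the realizer $\varphi \mapsto \psi_\varphi$ that turns a name $\varphi \in \A_\X^{I \times \Q_\X}$ into an associate of the corresponding function. Recall that, since $\Q_{\mathbf I} = \{\star\}$ and $\A_{\mathbf I} = I$, we have $\Q_{\X^{\mathbf I}} = \seq I \times \Q_\X$ and $\A_{\X^{\mathbf I}} = \seq \{\star\} + \A_\X$. Set
\[ \psi_\varphi(\epsilon, q) := \ask(\star), \qquad \psi_\varphi((i), q) := \answer\varphi(i, q), \]
with arbitrary values on longer input lists. Unfolding the while-loop defining $F_{U(\psi_\varphi)}$ in Section \ref{sec: universal} on the canonical name $\chi_i \in \B_{\mathbf I}$ of $i$ (the constant function with value $i$) shows that the universal first asks $\star$, is answered with $i$, and then returns $\answer \varphi(i,q)$. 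Hence $F_{U(\psi_\varphi)}(\chi_i)(q) = \varphi(i, q)$, and by definition of $\delta_{\prod_I \X}$ the map $q \mapsto \varphi(i, q)$ is a $\delta_\X$-name of the $i$-th component of the tuple represented by $\varphi$. Thus $F_{U(\psi_\varphi)}$ realizes the function $i \mapsto x_i$, so $\psi_\varphi \in \delta_{\X^{\mathbf I}}^{-1}(i \mapsto x_i)$. The assignment $\varphi \mapsto \psi_\varphi$ is primitive recursive and can be expressed via the $\FTMF$ correspondence.

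For the direction $\X^{\mathbf I} \to \prod_I \X$, I would build the realizer directly through the operator assignment $M \mapsto F_M$, using the \Coq{} function $U$ provided by the universal. Define
\[ M(n, \psi, (i, q)) := U(\psi)(n, \chi_i, q), \qquad \chi_i := \star \mapsto i. \]
Then $F_M(\psi)(i, q) = F_{U(\psi)}(\chi_i)(q)$ whenever the right-hand side is defined. If $\psi$ is an associate of $f \colon \mathbf I \to \X$ and $\chi_i$ names $i$, then $F_{U(\psi)}(\chi_i)$ is a $\delta_\X$-name of $f(i)$, so $F_M(\psi)$ is a $\delta_{\prod_I \X}$-name of $(f(i))_{i \in I}$. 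Computability of this realizer then reduces to the fact that $U$ is itself a \Coq{} function and $i \mapsto \chi_i$ is trivial; no further appeal to Theorem \ref{resu: eval_cont} is needed, although invoking it would give an alternative route via evaluation and currying.

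The main obstacle, in my view, is not conceptual but lies in the bookkeeping at the level of the universal: making sure that $\psi_\varphi$ really terminates in exactly one iteration of the $U$-loop and delivers the value $\varphi(i, q)$, and that in the reverse direction the output of $U$ on the functional input $\chi_i$ is cleanly curried into an element of $\A_\X^{I \times \Q_\X}$. Once the identity $F_{U(\psi_\varphi)}(\chi_i)(q) = \varphi(i, q)$ is verified by unfolding the definition of $U$, both compositions agree with the identity on the underlying sets $X^I$, so the two maps are inverse computable isomorphisms.
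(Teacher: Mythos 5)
Your proposal is correct and follows essentially the same route as the paper: the forward realizer (ask $\star$ on the empty list, answer $\varphi(i,q)$ once the list holds $i$) is exactly the paper's translation $T$, verified by the same one-step unfolding of the universal on the canonical name $\star\mapsto i$, and the reverse direction via $M(n,\psi,(i,q)) := U(\psi)(n,\star\mapsto i,q)$ is precisely the "variation of the realizer of evaluation" through the operator assignment that the paper invokes. The only detail you gloss over, which the paper also defers to the surrounding discussion, is that elements of $\X^{\mathbf I}$ formally carry a continuity proof, so identifying the underlying sets uses proof irrelevance.
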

\begin{proof}[sketch]
  Recall that $\A_{\X^{\mathbf I}} = \seq \Q_{\mathbf I} + \A_\X$ and that we decided to refer to the left inclusion into this sum by $?$ and to the right inclusion by $!$.
  A realizer $T\colon \B_{\prod_I\X} \to \B_{\X^{\mathbf I}}$ that translates a name of a sequence to a name of the corresponding function can be directly specified via
  \[ T(\varphi)(\str a,q') := \begin{cases} ?(\star) & \text{if } \str a = \epsilon \\ !\varphi(a_1,q') & \text{, where } \str a = (a_1, \ldots, a_{|\str a|}). \end{cases} \]
  To see that this realizer is correct we need to prove that whenever $\varphi$ is a $\prod_I\X$-name of some $(x_i)$, then $F_{U(T(\varphi))}$ is a realizer of the function $i \mapsto x_i\colon \mathbf I \to \X$.
  The former means that for any question $q$ about $x_i$, $\varphi(i, q)$ is a valid answer.
  To check that $F_{U(T(\varphi))}$ is a realizer note that $\star \mapsto i$ is the only valid name of $i$ in $\mathbf I$ and that for the correctness of $T$ it is thus enough to argue that $F_{U(T(\varphi))}(\star \mapsto i)$ on any input $q$ returns $\varphi(i,q)$.
  Now the values of $F_{U(T(\varphi))}$ are obtained by evaluating the universal.
  If the evaluation of $U(T(\varphi))$ is started on functional input $\star\mapsto i$ and input $q$ it will first call $T(\varphi)$ on input $(\epsilon, q)$ and be returned $?(\star)$.
  This will lead $U$ to evaluate $\star \mapsto i$ in $\star$ and call $T(\varphi)$ on the updated input $((i), q)$.
  As $T(\varphi)$ returns $!\varphi(i,q)$ on this input $U$ will finish its run by returning $\varphi(i,q)$ and we have proven correctness of the translation.

  The translation in the other direction, i.e.\ constructing a sequence from a continuous function proceeds by using a variation of the realizer of evaluation.
\end{proof}
Note that the first translation can be defined by very elementary means but is specific to the details of the universal $U$.
The translation in the other direction is independent of the implementation of the universal but only relies on availability of an algorithm for evaluation.
The algorithm for evaluation, however, executes the universal which requires an unbounded search and is considerably less elementary.
More specifically, the first translation can be defined as a function that is interpreted directly or as its associated multifunction, while the second translation requires the use of the more complicated operator assignment discussed in detail in Section~\ref{sec: computability}.
Indeed, the second translation need in general not have a total continuous realizer and thus one should not expect it to be possible to give a definition of a translation as elementary as the one for the first direction.

The above theorem is a good example of a result that is usually stated as an equivalence but that was formulated as an isomorphism in our formal development.
The sets underlying $\X^\omega$ and $\X^\NN$ are considered equal from a mathematical point of view, but formally the elements of the latter are pairs of a function and a proof that this function is continuous.
The set of such pairs is still in bijection with its first components because we work in a setting where different proofs of the same property are considered equal.
More specifically the equality of these proofs is what the axiom of proof irrelevance states and over the logic of \Coq{} proof irrelevance is implied by the law of excluded middle that we regularly assume anyway.
The bijection can thus be given as the function that just drops the proof in the second component of the pair and for the inverse one lifts an arbitrary function by adding the proof obtained from the fact that $\NN$ is discrete.

Theorem~\ref{resu: sig_iso_fun} above is an important building block for obtaining the results about hyper-spaces that are presented Section~\ref{sec: closed choice}.
As a more basic example of an application of this theorem let us give some additional justification for our previous choice to use the identity function as canonical representation of a given naming space.
More specifically let us prove that the structure thus obtained is the same as considering a naming space as space of functions from its question set as discrete space to its answer set as such.
\begin{cor}
  For any naming space $\B = \A^\Q$ it holds that $(\B,\mathrm{id}_\B) \simeq \mathbf A^{\mathbf Q}$.
\end{cor}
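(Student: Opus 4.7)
The plan is to factor the desired isomorphism through the sequence space construction from Section~\ref{sec: sequences}, leveraging Theorem~\ref{resu: sig_iso_fun}. Setting $I := Q$ (which is countable and non-empty since $\B$ is a naming space) and $\X := \mathbf{A}$ in that theorem immediately yields $\prod_{Q}\mathbf{A} \simeq \mathbf{A}^{\mathbf{Q}}$ (and even a computable isomorphism). It therefore suffices to establish the auxiliary isomorphism $(\B, \mathrm{id}_\B) \simeq \prod_{Q} \mathbf{A}$.

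For this auxiliary step I would begin by unfolding the definitions. The representation of $\prod_{Q}\mathbf{A}$ has question type $Q \times \{\star\}$ and answer type $A$, and a $\psi \in A^{Q \times \{\star\}}$ is a name of the function $q \mapsto \psi(q,\star)$. Consequently the underlying set of $\prod_{Q}\mathbf{A}$ is $A^{Q} = \B$, which coincides with the underlying set of $(\B,\mathrm{id}_\B)$, and the bijection implementing the isomorphism is the identity on this common set. What remains is to produce realizers in both directions.

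Both directions are just rewirings of the discrete components. In the direction from $(\B,\mathrm{id}_\B)$ to $\prod_{Q}\mathbf{A}$, send a name $\varphi \in \B$ to the name $(q,\star) \mapsto \varphi(q)$. In the opposite direction, send a name $\psi \in A^{Q \times \{\star\}}$ to the name $q \mapsto \psi(q,\star)$. Both of these can be defined as ordinary \Coq{} functions on naming spaces and lifted through the $\FTMF$ correspondence, so they are continuous (in fact primitive recursive). Verifying that each realizes the identity function is immediate by unfolding $\delta_{\prod_{Q}\mathbf{A}}$ and $\mathrm{id}_\B$. Composing the realizer pair with the isomorphism from Theorem~\ref{resu: sig_iso_fun} then delivers the corollary.

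There is no genuine mathematical obstacle here; the only care needed is bookkeeping around the fact that the question set of $\prod_{Q}\mathbf{A}$ is $Q \times \{\star\}$ rather than $Q$, so the isomorphism is not literally the identity on the naming space level but only up to the canonical bijection $Q \times \{\star\} \cong Q$. Everything else reduces to unfolding definitions, and no additional axioms beyond those already used to validate Theorem~\ref{resu: sig_iso_fun} are required.
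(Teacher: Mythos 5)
Your proposal is correct and follows essentially the same route as the paper: invoke Theorem~\ref{resu: sig_iso_fun} with $I = Q$ and $\X = \mathbf A$, then translate between the identity representation and that of $\prod_{Q}\mathbf A$ via exactly the maps $\varphi \mapsto ((q,\star)\mapsto\varphi(q))$ and $\varphi' \mapsto (q\mapsto\varphi'(q,\star))$, which are the translations $T$ and $T'$ the paper writes down. No substantive differences.
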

\begin{proof}
  By the previous theorem $\mathbf A^{\mathbf \Q} \simeq \prod_{Q}\mathbf A$.
  The equivalence of the representation of $\prod_{Q} \mathbf A$ with the identity representation is straightforward to define directly.
  To be exact, $T(\varphi)(q, \star):= \varphi(q)$ continuously translates from the identity representation to that of $\prod_{Q}\mathbf A$ and $T'(\varphi')(q) := \varphi'(q, \star)$ continuously translates back.
\end{proof}
\section{Application to continuity in metric and hyper spaces}\label{sec: metric spaces}

A map $d\colon M \times M \to \RR$ is called a \demph{pseudo-metric} on a set $M$ if it returns non-negative values, is symmetric under exchange of its arguments and fulfills $d(x,x)= 0$ and the triangle inequality
\[ d(x,z) \leq d(x,y) + d(y,z). \]
It is called a \demph{metric} if it is additionally true that $d(x,y)=0$ implies $x = y$.
A pair $(M, d)$ is called a \demph{pseudo-metric space} if $d$ is a pseudo-metric on $M$ and a \demph{metric space} if $d$ is a metric.
Every pseudo-metric space comes with a topology that is generated by the open balls with respect to the pseudo-metric and therefore with notions of continuity of functions on and convergence of sequences in this space.
The notion of convergence is of particular importance since any pseudo-metric space is first-countable and thus knowing the limits of sequences is sufficient for characterizing continuity.
The notion of a metric captures the properties that one expects a notion of distance to have in a very general sense.
Metric spaces are a widely applicable tool for talking about continuity on many spaces of practical interest and a common sight in many branches of mathematics.
As such, metric spaces have received considerable attention in their formal treatment.

A definition of continuity that does not require any knowledge of topology can be given using the $\varepsilon$-$\delta$-criterion.
A function $f\colon M \to M'$ between pseudo-metric spaces $(M,d_M)$ and $(M', d_{M'})$ is called \demph{continuous in} $x \in M$ if
\[ \forall \varepsilon > 0, \exists \delta > 0, \forall y\colon d_{M} (x, y) \leq \delta \implies d_{M'}(f(x),f(y)) \leq \varepsilon. \]
Here, $\varepsilon$ and $\delta$ are real numbers but replacing them by rational numbers results in an equivalent definition.
The function is called \demph{continuous} if it is continuous in every point of $M$.
An element $x$ of a pseudo-metric or metric space $(M, d)$ is said to be the \demph{limit} of a sequence $(x_n)$ in $M$, in symbols $\lim_{(M,d)}(x_n) = x$, if
\[ \forall \varepsilon>0, \exists N,\forall n \geq N\colon d (x, x_n) \leq \varepsilon. \]
Again, $\varepsilon$ is taken to be a real number but may equivalently be restricted to be rational.
A function $f$ between pseudo-metric spaces is said to be \demph{sequentially continuous in} $x$ if for each sequence $x_n$ in $M$ with $\lim_{(M,d_M)}(x_n) = x$ it holds that $\lim_{(M',d_{M'})}(f(x_n)) = f(x)$ and \demph{sequentially continuous} if it is sequentially continuous in every point of $M$.

Note that above for metric spaces we reuse the terms that we already used for represented spaces: We give an alternative definition of what it means to be continuous, a limit and sequentially continuous.
Since we make a clear distinction of whether the spaces we operate on are metric spaces or represented spaces, this overloading rarely leads to confusion.
Note that while the names of the concepts are identical, there are significant differences for their behavior in a metric and a represented space context respectively:
A function between metric spaces is continuous if and only if it is sequentially continuous while for represented spaces the backward implication can fail.

\subsection{Recovering continuity on Baire space from a metric structure}\label{sec: baire_metric}

Let $\B$ be a naming space, i.e.\ $\B = \A^\Q$ for some countable, non-empty sets $\Q$ and $\A$.
Since $\Q$ is countable and non-empty there exists a surjective function from $\NN$ to $\Q$.
Such a function may be understood as an enumeration $\mathbf q = (q_n)_{n \in \NN}$ of $\Q$.
For each such enumeration $\mathbf q$ define a mapping $d_{\mathbf q}\colon \B \times \B \to \RR$ by
\[ d_{\mathbf q}(\varphi, \psi) := \begin{cases} 0 & \text{if } \varphi = \psi \\ 2^{-\min\{n\in \NN \mid \varphi(q_n) \neq \psi(q_n)\}} & \text{otherwise.} \end{cases} \]
To argue that this definition is well-formed note that the second case is only assumed if it is not true that $\varphi = \psi$ but in this case there exists some $q$ such that $\varphi(q) \neq \psi(q)$ and since $\mathbf q$ is an enumeration of $\Q$ there exists some $n$ such that $q = q_n$ and thus $\varphi(q_n) \neq \psi(q_n)$.
Thus, the set that the minimum is taken over is not empty and the minimum takes a natural number value.

\begin{prop}[\texttt{dst\_pos}, \texttt{dst\_sym}, \texttt{dstxx}, \texttt{dst\_trngl}, \texttt{dst\_eq}]
  $(\B, d_{\mathbf q})$ is a metric space.
\end{prop}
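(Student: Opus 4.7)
The plan is to verify each of the five defining properties of a metric space in turn. Non-negativity, the identity $d_{\mathbf q}(\varphi,\varphi) = 0$, and symmetry are essentially bookkeeping on the case split in the definition: both branches of the definition return non-negative values; the first branch applies when the two arguments are literally equal, so in particular $d_{\mathbf q}(\varphi,\varphi) = 0$; and both the equality predicate and the set $\{n \mid \varphi(q_n) \neq \psi(q_n)\}$ are symmetric in $\varphi$ and $\psi$, so swapping arguments preserves which branch applies and leaves the returned value unchanged. I would dispatch these three properties in order.

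For the separation axiom $d_{\mathbf q}(\varphi,\psi) = 0 \implies \varphi = \psi$, I would argue by contrapositive. If $\varphi \neq \psi$, the argument already given immediately after the definition shows the set under the minimum is inhabited, so the second branch fires and $d_{\mathbf q}(\varphi,\psi) = 2^{-n}$ for some $n \in \NN$, which is strictly positive and so not equal to $0$.

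The main obstacle, and the only step with mathematical content, is the triangle inequality. I would prove the stronger ultrametric inequality
\[ d_{\mathbf q}(\varphi,\chi) \leq \max\{d_{\mathbf q}(\varphi,\psi),\, d_{\mathbf q}(\psi,\chi)\}, \]
from which the triangle inequality follows since both summands on the right are non-negative. If $\varphi = \chi$, the left-hand side is $0$ and there is nothing to prove, so assume $\varphi \neq \chi$ and let $n := \min\{k \mid \varphi(q_k) \neq \chi(q_k)\}$, so that $d_{\mathbf q}(\varphi,\chi) = 2^{-n}$. Since $\varphi(q_n) \neq \chi(q_n)$, at least one of $\varphi(q_n) \neq \psi(q_n)$ or $\psi(q_n) \neq \chi(q_n)$ must hold; without loss of generality the former. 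Then $\varphi \neq \psi$, and the index $m := \min\{k \mid \varphi(q_k) \neq \psi(q_k)\}$ satisfies $m \leq n$, giving $d_{\mathbf q}(\varphi,\psi) = 2^{-m} \geq 2^{-n} = d_{\mathbf q}(\varphi,\chi)$. The symmetric sub-case is identical. In the formal development the mild subtlety will be extracting $2^{-m} \geq 2^{-n}$ from $m \leq n$ via monotonicity of the real exponential, and managing the case split on whether one or both of the three pairs of arguments are equal before the minimum can be introduced.
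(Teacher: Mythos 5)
Your proposal is correct and takes essentially the same route as the paper: the axioms are verified one at a time (matching the lemmas \texttt{dst\_pos}, \texttt{dst\_sym}, \texttt{dstxx}, \texttt{dst\_eq}, \texttt{dst\_trngl}), with the only substantive step being the triangle inequality via comparison of minimal indices of disagreement, exactly as you do. The paper's accompanying discussion adds only formalization-specific points that your sketch is consistent with --- the case split on the undecidable equality $\varphi = \psi$ means $d_{\mathbf q}$ is a priori only a relation and a choice principle from the real-number axiomatization is used to obtain it as a function, functional extensionality is needed (e.g.\ for the inhabitation of the set under the minimum), and a \texttt{search} function handles the minima.
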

Let us briefly discuss some details specific to our \Coq{} proof of this proposition.
The discussion before the proposition implies that the definition of $d_{\mathbf q}$ branches over the undecidable proposition whether two elements of a naming space are equal and thus a priori only specifies a relation and not a function.
Now, the axiomatization of the real numbers implies a choice principle that is strong enough to move from this relation to an actual function \cite{boldo2015coquelicot}.
To be able to prove that the thus defined function is a metric it is additionally necessary to assume functional extensionality.
Another tool that is used in the proof to deal with the occurring minima somewhat efficiently is a function named \verb$search$.
There are several adaptations of such a function in other developments some of which could have been reused by assumption of rather mild additional axioms.
The $\verb$search$$ function made an earlier appearance in this paper:
It is what is behind the implementation of the search operator from Example~\ref{ex: continuous search 1}.

Recall that in Section~\ref{sec: continuity} we introduced for any naming space $\B$ the limit operator $\lim_\B$ corresponding to pointwise convergence.
We can now compare this limit operator to the one that corresponds to the metric notion of convergence and was introduced as $\lim_{(M, d)}$ in the introduction of this section.
\begin{thm}[\texttt{lim\_lim}]\label{thm: limlim}
  Let $\B$ be a naming space and $\mathbf q$ an enumeration of its question set.
  Convergence with respect to $d_{\mathbf q}$ is exactly pointwise convergence.
  I.e.\ $\lim_{(\B,d_{\mathbf q})} = \lim_\B$.
\end{thm}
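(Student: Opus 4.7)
The plan is to prove the two inclusions of the equality of multifunctions separately, each by unfolding the definitions and massaging natural-number indices against powers of two.

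For the forward implication, assume $\lim_{(\B, d_{\mathbf q})}(\varphi_n) = \varphi$ and fix a question $q \in \Q$. Since $\mathbf q$ enumerates $\Q$ there exists some $k \in \NN$ with $q_k = q$; I would then instantiate the metric convergence hypothesis with $\varepsilon := 2^{-(k+1)} > 0$ to obtain an $N$ such that $d_{\mathbf q}(\varphi_n, \varphi) \leq 2^{-(k+1)}$ for all $n \geq N$. For such $n$, either $\varphi_n = \varphi$ (immediate) or the minimal index of disagreement is strictly greater than $k$, whence in particular $\varphi_n(q_k) = \varphi(q_k)$, i.e.\ $\varphi_n(q) = \varphi(q)$. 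This witnesses pointwise convergence at $q$.

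For the backward implication, assume $\lim_\B(\varphi_n) = \varphi$ and fix $\varepsilon > 0$. The archimedean property of $\RR$ gives some $k \in \NN$ with $2^{-k} \leq \varepsilon$. Applying the pointwise hypothesis to each of $q_0,\ldots,q_k$ produces thresholds $N_0,\ldots,N_k$, and I set $N := \max_{i \leq k} N_i$. For $n \geq N$ the functions $\varphi_n$ and $\varphi$ agree on $q_0,\ldots,q_k$, so either $\varphi_n = \varphi$ (giving distance $0$) or the first disagreement index is at least $k+1$, giving $d_{\mathbf q}(\varphi_n, \varphi) \leq 2^{-(k+1)} < \varepsilon$. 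In both cases the metric convergence estimate holds.

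The mathematical content is straightforward; the real work in \Coq{} is bookkeeping. The main obstacle I expect is handling the case split $\varphi = \psi$ vs.\ $\varphi \neq \psi$ embedded in the definition of $d_{\mathbf q}$: equality of elements of $\B$ is undecidable, so the case distinction must be made non-constructively (via excluded middle) and one needs to reason about the $\min$ in the second branch without having access to decidability, presumably via the \texttt{search} function mentioned just before the theorem. Secondary technical annoyances are converting between a real $\varepsilon > 0$ and a discrete $k$ with $2^{-k} \leq \varepsilon$ (archimedeanness in the standard-library reals), and discharging the strict-vs-weak inequality mismatch $2^{-(k+1)} < 2^{-k} \leq \varepsilon$. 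None of these obstruct the argument, but each requires a small \Coq{} lemma rather than a one-line tactic.
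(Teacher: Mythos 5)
Your proof is correct and is exactly the straightforward two-inclusion argument the paper has in mind; indeed, the paper explicitly omits the details of this proof as routine, and your version fills them in as intended (including the \Coq{}-side remarks about the undecidable case split in $d_{\mathbf q}$, which match the discussion preceding the metric-space proposition). The one place where care is genuinely needed—choosing $\varepsilon = 2^{-(k+1)}$ rather than $2^{-k}$ in the forward direction, since $d_{\mathbf q}(\varphi_n,\varphi)\leq 2^{-k}$ only forces agreement on indices strictly below $k$—you handled correctly.
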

As sequential continuity is defined directly from the notions of convergence, the above theorem implies that the notion of sequential continuity for naming spaces as introduced in Section~\ref{sec: continuity} is also reproduced.
\begin{cor}
  A function between naming spaces is sequentially continuous if and only if it is sequentially continuous as function between the corresponding metric spaces.
\end{cor}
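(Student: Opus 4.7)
The plan is to reduce everything to Theorem~\ref{thm: limlim} applied on the source and target sides. Fix enumerations $\mathbf q$ of $\Q_\B$ and $\mathbf q'$ of $\Q_{\B'}$, so that both $\B$ and $\B'$ carry the pseudo-metric structure constructed in Section~\ref{sec: baire_metric}. Sequential continuity of $f\colon \B \to \B'$ as a map between naming spaces, as defined at the end of Section~\ref{sec: continuity}, is the implication
\[ \lim\nolimits_\B(\varphi_n) = \varphi \implies \lim\nolimits_{\B'}(f(\varphi_n)) = f(\varphi), \]
universally quantified over $\varphi$ and $(\varphi_n)$. Sequential continuity as a map between the pseudo-metric spaces $(\B, d_{\mathbf q})$ and $(\B', d_{\mathbf q'})$ is the same implication with $\lim_{(\B, d_{\mathbf q})}$ and $\lim_{(\B', d_{\mathbf q'})}$ in place of the naming-space limits.

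By Theorem~\ref{thm: limlim} applied to $\B$ with enumeration $\mathbf q$ we have $\lim_{(\B, d_{\mathbf q})} = \lim_\B$ as partial functions on $\B^\NN$, and by the same theorem applied to $\B'$ with enumeration $\mathbf q'$ we have $\lim_{(\B', d_{\mathbf q'})} = \lim_{\B'}$. Substituting these two equalities into either formulation of sequential continuity turns it verbatim into the other, so the two notions coincide.

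There is essentially no obstacle here; the proof is a direct rewrite using Theorem~\ref{thm: limlim}. The only point worth flagging is that the metric-side statement depends, a priori, on the choice of the enumerations $\mathbf q$ and $\mathbf q'$, whereas the naming-space formulation does not. This is, however, explained by Theorem~\ref{thm: limlim} itself: because $\lim_{(\B, d_{\mathbf q})} = \lim_\B$ for every enumeration $\mathbf q$, the induced metric notion of sequential continuity is already independent of the enumerations used, and the corollary is well-posed.
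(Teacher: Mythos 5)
Your proof is correct and follows the same route as the paper: the paper also obtains this corollary immediately from Theorem~\ref{thm: limlim}, since sequential continuity on both sides is defined directly from the respective notions of convergence, which that theorem identifies. Your remark on independence from the chosen enumerations is a fine clarification but not needed beyond what Theorem~\ref{thm: limlim} already provides.
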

It is possible to generalize the above corollary to apply not only to total but also to partial operators.
We omit the generalized statement here and instead formulate a theorem about the non-sequential version of continuity.
\begin{thm}[\texttt{cont\_cont}]\label{resu: cont_cont}
  Let $\B$ and $\B'$ be naming spaces and $\mathbf q$ and $\mathbf q'$ enumerations of their respective question sets.
  A partial operator $F\colon {\subseteq \B} \to \B'$ is continuous in the sense of Section~\ref{sec: continuity} if and only if it is continuous as a function from $(\dom(F),d_{\mathbf q})$ to $(\B', d_{\mathbf q'})$.
\end{thm}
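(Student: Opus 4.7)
The plan is to unfold both notions of continuity and translate between the combinatorial language of ``agreement on a finite list of questions'' and the analytic language of ``distance at most $2^{-n}$''. The central observation, immediate from the definition of $d_{\mathbf q}$, is that for any $n\in\NN$ one has $d_{\mathbf q}(\varphi,\psi)\leq 2^{-n}$ if and only if either $\varphi=\psi$ or $\varphi(q_i)=\psi(q_i)$ for all $i<n$; symmetrically, $d_{\mathbf q'}(F(\varphi),F(\psi))\leq 2^{-n}$ forces $F(\varphi)(q'_i)=F(\psi)(q'_i)$ for all $i<n$. Conversely, agreement on a finite list $\str q\in\seq\Q$ can always be upgraded to a metric bound by choosing any $M$ large enough that $\{q\mid q\in\str q\}\subseteq\{q_0,\ldots,q_{M-1}\}$, which exists because $\str q$ is finite and $\mathbf q$ is surjective.

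For the forward direction, assume $F$ is continuous in the sense of Section \ref{sec: continuity} and fix $\varphi\in\dom(F)$ and $\varepsilon>0$. Pick $N\in\NN$ with $2^{-N}\leq\varepsilon$ and, for each $i<N$, let $\str q_i$ be a certificate for $(\varphi, q'_i)$ provided by continuity. Concatenate these into a single list $\str q:=\str q_0\concat\cdots\concat\str q_{N-1}$ and choose $M$ so that every entry of $\str q$ occurs among $q_0,\ldots,q_{M-1}$. Set $\delta:=2^{-M}$. If $\psi\in\dom(F)$ satisfies $d_{\mathbf q}(\varphi,\psi)\leq\delta$, then by the central observation $\psi$ coincides with $\varphi$ on each $q_j$ with $j<M$, hence on every $\str q_i$; applying the certificate property yields $F(\psi)(q'_i)=F(\varphi)(q'_i)$ for all $i<N$, and therefore $d_{\mathbf q'}(F(\varphi),F(\psi))\leq 2^{-N}\leq\varepsilon$.

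For the reverse direction, assume metric continuity and fix $\varphi\in\dom(F)$ and $q'\in\Q'$. Since $\mathbf q'$ enumerates $\Q'$, pick some $i\in\NN$ with $q'_i=q'$. Apply metric continuity at $\varphi$ with $\varepsilon:=2^{-(i+1)}$ to obtain $\delta>0$, and choose $M\in\NN$ with $2^{-M}\leq\delta$. I claim $(q_0,\ldots,q_{M-1})$ is a certificate for $(\varphi,q')$: any $\psi\in\dom(F)$ coinciding with $\varphi$ on this list has $d_{\mathbf q}(\varphi,\psi)\leq 2^{-M}\leq\delta$ (the case $\varphi=\psi$ being trivial), so $d_{\mathbf q'}(F(\varphi),F(\psi))\leq 2^{-(i+1)}<2^{-i}$, which forces $F(\varphi)(q'_i)=F(\psi)(q'_i)$, i.e.\ $F(\varphi)(q')=F(\psi)(q')$.

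The principal obstacle is bookkeeping rather than conceptual: one has to move between the integer scale of indices in the enumerations and the real-valued $\varepsilon$-$\delta$ scale, and the fact that $\mathbf q$ and $\mathbf q'$ are only surjective (not injective) means indices must be existentially chosen rather than computed uniquely. The finiteness argument converting a list certificate into an initial-segment bound $M$, together with the minor case split for $\varphi=\psi$ inherent in the piecewise definition of $d_{\mathbf q}$, are the only non-trivial parts of the formalization; both reduce to elementary manipulations once the central observation above is isolated as a lemma.
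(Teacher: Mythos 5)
Your proof is correct and is exactly the straightforward argument the paper alludes to but omits: translate agreement on a finite list of questions into membership in an initial segment of the enumeration (and hence a $2^{-M}$ bound on $d_{\mathbf q}$), and conversely read off certificates from initial segments, with the $\varepsilon:=2^{-(i+1)}$ adjustment handling repetitions in the output enumeration. No gaps; this matches the approach of the formal proof \texttt{cont\_cont}.
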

We omit details of the straightforward proofs of Theorem~\ref{thm: limlim} and Theorem~\ref{resu: cont_cont}.

\begin{exa}[\texttt{examples/continuous\_search.v}]\label{ex: continuous search 3}
  The traditional notion of continuity of partial operators on Baire space $\NN^\NN$ is captured by the continuity introduced in Section \ref{sec: continuity} if all of the questions and answer sets are taken to be the natural numbers.
\end{exa}

\subsection{Comparing continuity in represented and in metric spaces}\label{sec: continuities}

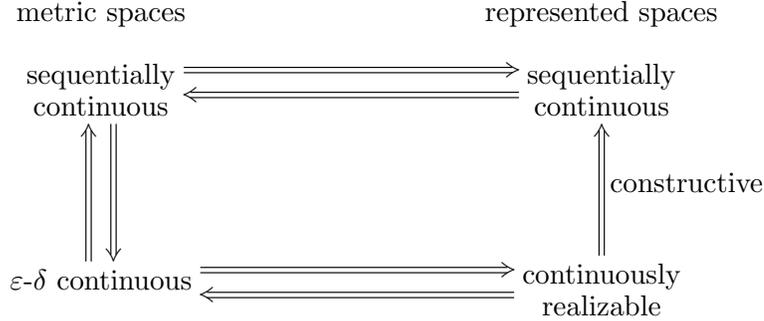
\begin{figure}
  \begin{center}
    \hfil
\xymatrix@!C@C+15ex@R1ex{\text{metric spaces} & \text{represented spaces} \\ \txt{sequentially \\  continuous}\ar@<1ex>@{=>}[ddddd]\ar@<1ex>@{=>}[r]  & \txt{sequentially \\ continuous} \ar@<1ex>@{=>}[l]\\ \\ \\ \\ \\ \varepsilon\text{-}\delta \text{ continuous} \ar@<1ex>@{=>}[r] \ar@<1ex>@{=>}[uuuuu]  & \txt{continuously \\ realizable} \ar@{=>}[uuuuu]_{\txt{constructive}} \ar@<1ex>@{=>}[l]}
\caption{Implications between different notions of continuity on metric spaces.}\label{fig: implications}
\hfil
\end{center}
\end{figure}
A metric space is called \demph{separable} if there exists a dense sequence $(r_n)_{n\in \NN}$ in it.
Here, density of a sequence can equivalently be taken to mean that any open ball in the metric space contains an element of the sequence or that for any point of the metric space there exists a subsequence of $(r_n)$ that converges to that point.
Separable metric spaces are well investigated in computable analysis \cite{weihrauch1993computability}.
To each dense sequence $\mathbf r = (r_n)_{n \in \NN}$ in a metric space $(M, d)$ assign a multifunction $\delta_{\M_{\mathbf r}}\colon \NN^\NN \mto M$ defined by
\[
x \in \delta_{\M_{\mathbf r}}(\varphi) \quad \iff \quad \forall n, d(x, r_{\varphi(n)}) \leq 2^{-n}.
\]
Clearly $\NN^\NN$ is a naming space and $\delta_{\M_{\mathbf r}}$ can be proven singlevalued and surjective (\verb$mrep_sing$ and \verb$mrep_sur$).
Thus $\delta_{\M_{\mathbf r}}$ defines a representation and we denote the corresponding represented space by $\M_{\mathbf r} := (M, \delta_{\M_{\mathbf r}})$.

Note that the construction of $\M_{\mathbf r}$ is very similar to how we chose to represent the real numbers through rational approximations in Example~\ref{ex: q_reals}.
A name of an element of a metric space produces an index of an approximation from an accuracy requirement.
The only difference is that for real numbers we decided to minimize the number of types involved by using rational numbers as both accuracy requirements and approximations while for metric spaces we use accuracy requirements of the form $2^{-n}$ to minimize the number of distinct types by matching the type of indices and setting both question and answer types to be $\NN$.

A sequence $(x_n)$ in a separable metric space $(M, d)$ converges to a limit $x$ from $M$ if and only if the same is true in the represented space $\M_{\mathbf r}$ or in symbols if and only if it holds that $\lim_{\M_{\mathbf r}}(x_n) = x$.
This remains true irrespectively of the choice of the dense sequence that is used for constructing the representation and can more concisely be formulated as $\lim_{(M,d)} = \lim_{\M_{\mathbf r}}$.
Furthermore, if $(M', d')$ is another separable metric space, then a function $f\colon M \to M'$ is continuous as a function between metric spaces if and only if it is continuously realizable as a function $f\colon \M_{\mathbf r} \to \M'_{\mathbf r'}$ and this remains true irrespectively of the choices of the dense sequences $\mathbf r$ and $\mathbf r'$.

This section describes our formal proofs of statements comparing the two continuity notions and their sequential versions (cf. Figure \ref{fig: implications}).
The proofs have been kept as constructive as possible.
Since the definition of a metric space relies on the axiomatic reals, only one of the implications is fully constructive, the others are conservative over the background theory of real numbers and do not rely on the axioms of the real numbers in an essential way.
Let us reiterate that the construction of a represented space from a separable metric space explicitly relies on the choice of a dense sequence.
I.e.\ for each dense sequence $\mathbf r$ in the metric space one obtains a represented space $\M_{\mathbf r}$.
In the following we usually drop the index $\mathbf r$ for brevity.
This is justified in a continuity setting as different choices of dense sequences lead to isomorphic represented spaces.
As always, the situation is more complicated if computability is considered and in this case one should assume in the following that two metric spaces with dense sequences are fixed.
In the formal development, the dense sequences are always present as parameters.

Let us first describe the proof of the equivalence of the notions of sequential continuity.
The main part of the proof is that the notions of limit in the metric space and the corresponding represented space coincide.
\begin{thm}[\texttt{lim\_mlim}]\label{lem:lim-mlim}
  Whenever $(M,d)$ is a separable metric space and $\M$ as above then
  \[ \lim\nolimits_{(M,d)} = \lim\nolimits_\M. \]
\end{thm}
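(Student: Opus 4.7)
The plan is to unfold both definitions and prove the two set-inclusions separately, with the nontrivial work concentrated in producing convergent sequences of names out of metric convergence.

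For the inclusion $\lim_\M \subseteq \lim_{(M,d)}$, I would assume $x \in \lim_\M(x_n)$, pick witnessing names $\varphi$ of $x$ and $\varphi_n$ of $x_n$ with $\varphi_n \to \varphi$ pointwise in $\B$, and argue directly from the representation. Given $\varepsilon > 0$, choose $k$ with $2^{1-k} \leq \varepsilon$. Pointwise convergence in $\B$ (as defined in Section~\ref{sec: continuity}) yields $N$ such that $\varphi_n(k) = \varphi(k)$ for all $n \geq N$. From the definition of $\delta_\M$ both $d(x,r_{\varphi(k)}) \leq 2^{-k}$ and $d(x_n, r_{\varphi_n(k)}) = d(x_n, r_{\varphi(k)}) \leq 2^{-k}$ hold, so the triangle inequality gives $d(x, x_n) \leq 2^{1-k} \leq \varepsilon$, proving metric convergence.

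For the converse $\lim_{(M,d)} \subseteq \lim_\M$, the main work is constructing explicit names. I would fix an arbitrary name $\varphi$ of $x$ (using that $\delta_\M$ is cototal) and define a shifted name $\tilde\varphi$ by $\tilde\varphi(k) := \varphi(k+1)$; note $\tilde\varphi$ is again a name of $x$ since $d(x, r_{\varphi(k+1)}) \leq 2^{-(k+1)} \leq 2^{-k}$. Using metric convergence, for each $k$ pick $N_k$ with $d(x, x_n) \leq 2^{-(k+1)}$ whenever $n \geq N_k$; then $d(x_n, r_{\varphi(k+1)}) \leq 2^{-k}$ by the triangle inequality. Define $\varphi_n(k)$ to be $\varphi(k+1)$ whenever $n \geq N_k$, and otherwise choose any index $m$ with $d(x_n, r_m) \leq 2^{-k}$ (such an $m$ exists by density of $\mathbf r$). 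The function $\varphi_n$ is then a name of $x_n$, and for each fixed $k$ the sequence $\varphi_n(k)$ is eventually equal to $\tilde\varphi(k)$, so $\varphi_n \to \tilde\varphi$ pointwise in $\B$, yielding $x \in \lim_\M(x_n)$.

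The main obstacle is the construction in the second direction. The naive attempt of setting $\varphi_n(k) = \varphi(k)$ whenever that value happens to be a $2^{-k}$-approximation of $x_n$ breaks in the boundary case $d(x, r_{\varphi(k)}) = 2^{-k}$, where metric convergence of $x_n$ to $x$ need not force $d(x_n, r_{\varphi(k)}) \leq 2^{-k}$ eventually. The shift $k \mapsto k+1$ circumvents this by giving a strict slack of $2^{-(k+1)}$ on both sides. A secondary subtlety is the use of choice: selecting the approximants $m$ for the early indices $n < N_k$ and the thresholds $N_k$ themselves both require countable choice, which is already in force in the surrounding development. Once the two inclusions are in place, equality of the multifunctions follows, and as a byproduct singlevaluedness of $\lim_\M$ is recovered from that of $\lim_{(M,d)}$.
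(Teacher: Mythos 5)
Your proof is correct and follows essentially the same route as the paper: the easy direction is handled directly from the representation, and for the converse you use the shifted name $k \mapsto \varphi(k+1)$ together with a modulus of convergence (your thresholds $N_k$) to build a hybrid sequence of names whose pointwise limit is a name of $x$, exactly as in the library proof. The only cosmetic difference is that for indices $n < N_k$ the paper falls back on an arbitrary pre-chosen sequence of names of the $x_n$ whereas you manufacture the needed approximants from density of $\mathbf r$; both variants rest on the same appeal to countable choice.
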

\begin{proof}[idea]
  The proof that convergence in the represented space implies convergence in the metric space is straightforward.
  The idea behind the other direction can be sketched as follows:
  If $(x_n)$ converges to $x$ in the metric space then there exists a modulus of convergence, i.e.\ some $\mu\colon \NN \to \NN$ such that
  \[
  \forall n, m \geq \mu(n)\colon d(x_m, x) \leq 2^{-n}.
  \]
  From an arbitrary sequence $(\varphi'_m)$ of names of $x_m$ and a name $\varphi'$ of $x$ an appropriate convergent sequence of names can be defined by
  \[
  \varphi_m(n) := \begin{cases}
    \varphi'(n+1), & \text{if } \mu(n+1) \leq m \\
    \varphi'_{m}(n), & \text{otherwise}
  \end{cases}
  \]
  and its limit is given by $\varphi(n) := \varphi'(n+1)$ which is clearly a name of $x$ again.
\end{proof}
The availability of a modulus of convergence as a function relies on a use of the axiom of countable choice.
This could probably be eliminated by appropriate assumptions about the values of the metric being approximable on the elements of the dense sequence (i.e.\ by working with computable metric spaces).

That the sequential notions of continuity on metric and represented spaces coincide follows immediately from the above Theorem~\ref{lem:lim-mlim}.
As the proof of each direction requires to translate limits in both directions, either of the directions is as constructive or non-constructive as the worse direction of the previous theorem.
\begin{cor}[\texttt{scnt\_mscnt}]\label{resu: scnt_mscnt}
  Let $(M,d)$ and $(M', d')$ be separable metric spaces and $\M$ and $\M'$ the derived represented spaces.
  Then $f: M \to M'$, is sequentially continuous as a function between metric spaces if and only if it is sequentially continuous as function $f\colon \M \to \M'$.
\end{cor}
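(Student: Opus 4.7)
The plan is to derive this corollary almost immediately from Theorem~\ref{lem:lim-mlim}. Sequential continuity in each framework is defined purely in terms of the relevant limit operator, so once we know that $\lim_{(M,d)}$ agrees with $\lim_\M$ (and similarly on the target side), sequential continuity of $f$ as a map between metric spaces and as a map between represented spaces are literally the same statement modulo rewriting along these equalities.

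Concretely, I would proceed by unfolding both definitions on a fixed $f\colon M \to M'$. To prove the implication from metric sequential continuity to represented-space sequential continuity, take any sequence $(x_n)$ and $x$ in $M$ with $\lim_\M(x_n) = x$. Applying the ``$\lim_\M$ implies $\lim_{(M,d)}$'' direction of Theorem~\ref{lem:lim-mlim} to $(M,d)$ yields $\lim_{(M,d)}(x_n) = x$. The hypothesis on $f$ gives $\lim_{(M',d')}(f(x_n)) = f(x)$, and the ``$\lim_{(M',d')}$ implies $\lim_{\M'}$'' direction of Theorem~\ref{lem:lim-mlim} applied to $(M',d')$ then delivers $\lim_{\M'}(f(x_n)) = f(x)$. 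The reverse implication is entirely symmetric: one translates represented-space limits to metric limits, applies the represented-space hypothesis, and translates back.

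There is essentially no mathematical obstacle here; the entire content of the corollary is encapsulated in the preceding theorem. The only point worth flagging is bookkeeping of axioms: each direction of the corollary uses \emph{both} directions of Theorem~\ref{lem:lim-mlim}, one on the source and one on the target. Consequently, as the text preceding the corollary remarks, the axiom footprint of either direction of the corollary matches that of the less constructive direction of Theorem~\ref{lem:lim-mlim}, so countable choice (used to extract a modulus of convergence in that proof) and the axiomatic reals propagate here as well. In the formal \Coq{} statement, what makes this a one-line proof is precisely that the equality of limit operators is stated as a pointwise equality of multifunctions, which can be rewritten under the sequential continuity predicate without further work.
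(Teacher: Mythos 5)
Your proof is correct and follows exactly the route the paper takes: the corollary is obtained by rewriting along the equality of limit operators from Theorem~\ref{lem:lim-mlim}, translating limits on the source and target sides, with each direction of the corollary using both directions of that theorem. Your remark on the axiom footprint matching the less constructive direction of the theorem is precisely the observation made in the paper's text.
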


Next let us state the equivalence of continuous realizability and $\varepsilon$-$\delta$-continuity.
One implication, namely proving continuous realizability from $\varepsilon$-$\delta$-continuity needs stronger assumptions and for the \Incone{} library we thus separated the proofs.
\begin{lem}[\texttt{cont\_mcont} and \texttt{mcont\_cont}]\label{resu: cont_mcont}
  Let $(M, d)$ and $(M', d')$ be two separable metric spaces.
  A function $f\colon M\to M'$ is $\varepsilon$-$\delta$-continuous if and only if $f\colon \M \to \M'$ is continuous.
\end{lem}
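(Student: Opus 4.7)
The plan is to prove the two implications separately, matching the two-lemma structure suggested by the names \texttt{cont\_mcont} and \texttt{mcont\_cont}. The easier direction is ``continuously realizable implies $\varepsilon$-$\delta$-continuous''; the harder one is the converse, which is where I expect the bulk of the work and any necessary choice principles to live.

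For \texttt{cont\_mcont}, I fix a continuous realizer $F\colon {\subseteq \NN^\NN}\to\NN^\NN$ of $f$, a point $x\in M$ and $\varepsilon>0$. Choose $n$ with $2^{-n+1}\le\varepsilon$. Starting from any name $\varphi_0$ of $x$, form the ``shifted'' name $\tilde\varphi(k):=\varphi_0(k+1)$, which still names $x$ but enjoys the sharper bound $d(x, r_{\tilde\varphi(k)})\le 2^{-(k+1)}$. Apply continuity of $F$ at $\tilde\varphi$ for question $n$ to obtain a certificate list $\str q\in\seq\NN$, set $K:=\max\str q$, and take $\delta:=2^{-(K+1)}$. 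For any $y$ with $d(x,y)\le\delta$, build a name $\psi$ of $y$ by copying $\tilde\varphi$ on $\str q$ and taking the remaining entries from any preselected name of $y$. Validity on $\str q$ follows from $d(y,r_{\tilde\varphi(k)})\le 2^{-(K+1)}+2^{-(k+1)}\le 2^{-k}$ for $k\le K$. The certificate property then forces $F(\psi)(n)=F(\tilde\varphi)(n)$, so both $f(x)$ and $f(y)$ lie within $2^{-n}$ of the same element $r'_{F(\tilde\varphi)(n)}$, giving $d'(f(x),f(y))\le 2^{-n+1}\le\varepsilon$. This argument should be fully constructive over the background theory of the reals.

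For \texttt{mcont\_cont}, the idea is to extract from $\varepsilon$-$\delta$-continuity, via the axiom of choice, a modulus $\mu\colon M\times\NN\to\NN$ with $d(x,y)\le 2^{-\mu(x,n)}\Rightarrow d'(f(x),f(y))\le 2^{-n}$, and via density a selector $\sigma\colon M'\times\NN\to\NN$ with $d'(z,r'_{\sigma(z,n)})\le 2^{-n}$. A candidate realizer is $F(\varphi)(n):=\sigma(f(r_{\varphi(k)}),n+1)$ where $k:=\mu(x,n+1)$ and $x:=\delta_\M(\varphi)$; a triangle inequality confirms that $F(\varphi)$ names $f(x)$. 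What remains, and is the main obstacle, is to show that $F$ is continuous as a partial operator on $\NN^\NN$: a certificate for $\varphi$ and question $n$ must determine $F(\varphi)(n)$ from finitely many values of $\varphi$, but the naive certificate $\{k\}$ is insufficient because a nearby name $\psi$ of a different point $y$ gives a possibly different modulus value $\mu(y,n+1)$. The fix is to upgrade the construction: search along the prefix of $\varphi$ for the least $k$ where the rational approximation $r_{\varphi(k)}$ provably lies in a ball around itself on which $f$ varies by at most $2^{-(n+1)}$ (extractable from $\mu$ applied at the dense points), and output $\sigma$ evaluated there. Then any $\psi$ agreeing with $\varphi$ on $\{0,\ldots,k\}$ passes the same test and produces the same answer, yielding a certificate. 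Alternatively, one can shortcut the direct construction by combining the $\varepsilon$-$\delta$ hypothesis with the standard metric fact that $\varepsilon$-$\delta$-continuity implies sequential continuity, then Corollary \ref{resu: scnt_mscnt} to get sequential continuity on the represented-space side, and finally an admissibility-style extraction of a continuous realizer using a minimal-modulus argument in the spirit of Lemma \ref{lem: exists minmod}; but the direct search above is cleaner and, as noted in the text preceding the lemma, will just need stronger assumptions (countable choice over $M$ and choice for density) than the first direction.
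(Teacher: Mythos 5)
Your first direction (a continuous realizer yields $\varepsilon$-$\delta$-continuity) is correct, and it is the half the paper dismisses as straightforward; the hybrid-name argument with the shifted name and the certificate at question $n$ works. The gap is in the converse, and it sits exactly where the paper says the work is. Your repaired realizer searches along the name $\varphi$ for the least $k$ such that the ball $B(r_{\varphi(k)},2^{-k})$ is contained in a modulus ball of $f$ at $r_{\varphi(k)}$ for precision $n+1$, i.e.\ such that $\mu(r_{\varphi(k)},n+1)\leq k$, and you verify the certificate property of this search; but you never argue that the search terminates on every name of every point, and with the modulus you actually constructed it need not. The $\mu$ you extract by choice from $\varepsilon$-$\delta$-continuity is an arbitrary modulus: any value above the minimal one is admissible, so the choice function may assign to the dense points $r_{\varphi(0)},r_{\varphi(1)},\dots$ modulus values that outgrow $k$, in which case the test fails for every $k$, $F(\varphi)(n)$ is undefined, and $F$ fails to realize $f$. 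Totality of the search on all names is the crux, not the certificate property.

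Closing this requires the two ingredients that the paper's proof is built around: replace the arbitrary choice-modulus by the \emph{minimal} metric modulus $m_y$ of $f$ at each point $y$ (the paper's \texttt{exists\_minmod\_met}; its existence is classical and, by the discussion after Lemma~\ref{lem: exists minmod}, cannot be made constructive), and then prove that the minimal modulus is \emph{almost self-modulating}: for instance, if $d(x,y)\leq 2^{-m_x(n+2)-1}$ then $m_y(n+1)\leq m_x(n+2)+1$. This perturbation bound is what guarantees that along any name of $x$ the test $m_{r_{\varphi(k)}}(n+1)\leq k$ eventually passes, so the operator is total on names; your certificate argument then finishes the proof. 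Note also that your suggested shortcut via sequential continuity and Corollary~\ref{resu: scnt_mscnt} does not go through as stated: sequential continuity of $f\colon \M\to\M'$ does not imply continuous realizability without an admissibility argument, which the paper explicitly does not develop, so the direct construction (with the termination argument supplied) is the route to take.
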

While the proof that continuous realizability implies $\varepsilon$-$\delta$-continuity is straightforward, the proof of the other implication required some work and we sketch some of the details.

Interestingly, the tools needed for this proof are in spirit fairly close to those that were used to prove the existence of associates in Section \ref{sec: universal}, more specifically we also use minimal moduli.
Call a function $m\colon \NN \to \NN$ a \demph{metric modulus of continuity} of $f$ in $x \in M$ if
\[ \forall y\in M\colon d(x, y) \leq 2^{-m(n)} \implies d'(f(x),f(y)) \leq 2^{-n} \]
and call such a modulus \demph{minimal} if it is minimal in the obvious way.
This notion generalizes the one used to prove the existence of associates in Section~\ref{sec: universal} if the naming spaces are equipped with the metric space structures that were introduced in Section~\ref{sec: baire_metric} and reasonable assumptions about the enumeration used for this are made.
\begin{lem}[\texttt{exists\_minmod\_met}]
  For any continuous function $f$ between metric spaces and any argument $x$ for $f$ there exists a minimal modulus of $f$ in $x$.
\end{lem}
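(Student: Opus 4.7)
The plan is to construct $m(n)$ as the least natural number $k$ such that the ``$2^{-k}$ implies $2^{-n}$'' condition holds, and then verify that the resulting function is simultaneously a modulus and pointwise-minimal among all moduli. So for each $n\in\NN$ define the predicate
\[ P_n(k) \quad :\iff \quad \forall y \in M,\ d(x,y) \leq 2^{-k} \implies d'(f(x),f(y)) \leq 2^{-n}. \]
The idea is to set $m(n) := \min\{k \in \NN \mid P_n(k)\}$ and show that this $m$ has the required properties.

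First I would verify that for each fixed $n$ the set $\{k \mid P_n(k)\}$ is non-empty. By $\varepsilon$-$\delta$-continuity of $f$ at $x$ applied to $\varepsilon := 2^{-n}$, there exists some real $\delta > 0$ with $d(x,y)\leq \delta \implies d'(f(x),f(y))\leq 2^{-n}$. Picking any $k\in\NN$ with $2^{-k}\leq \delta$ (which exists by the archimedean property of the reals) gives $P_n(k)$. With non-emptiness in hand, I would invoke the least-number principle on $\NN$ to obtain the minimum $m(n)$, and package this into a function $m\colon\NN\to\NN$. By construction $P_n(m(n))$ holds for every $n$, so $m$ is a metric modulus of continuity of $f$ in $x$.

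For minimality, suppose $m'$ is any other metric modulus of $f$ in $x$. Then by definition $P_n(m'(n))$ holds, so $m'(n)$ lies in the set over which the minimum defining $m(n)$ is taken, and hence $m(n) \leq m'(n)$. This pointwise inequality is the ``obvious'' notion of minimality.

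The main obstacle is that the predicate $P_n(k)$ is not decidable in general: it involves a universal quantifier over all $y \in M$ together with a real-number comparison. Consequently the least-number principle cannot be applied constructively and one has to resort to the classical minimization principle for subsets of $\NN$, just as in Lemma~\ref{lem: exists minmod} for operators on naming spaces. A secondary subtlety is the step from a positive real $\delta$ to a dyadic threshold $2^{-k}$, which relies on the archimedean property (and hence on the axiomatic reals), but this is mild compared to the use of the least-number principle. In the \Coq{} development this proof is therefore expected to depend on excluded middle (or equivalent classical axioms) together with the usual background theory of the reals, in line with the other results in this section.
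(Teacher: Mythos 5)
Your proof is correct and is essentially the argument underlying the paper's \texttt{exists\_minmod\_met}: fix $n$, observe via $\varepsilon$-$\delta$-continuity at $x$ and the archimedean property that the (upward-closed) set of admissible $k$ is non-empty, take its least element classically, and note that pointwise minimality among moduli is immediate from the minimization. Your closing remarks on the non-constructive ingredients (classical minimization over an undecidable predicate, plus the axiomatic reals) match the paper's own discussion of why this lemma, like Lemma~\ref{lem: exists minmod}, cannot be constructive.
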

This lemma implies its version for naming spaces that we have earlier proven in Section~\ref{sec: universal}.
As we noted in the discussion following Lemma~\ref{lem: exists minmod} this weaker version cannot be proven constructively.
It follows that the same remains true for the lemma above.

The key idea behind the rest of the proof is to also use some notion of being self-modulating.
This is slightly complicated by the fact the notion that made sense for moduli of continuity on naming spaces does not translate to the metric setting.
Moreover, in general the minimal modulus of continuity fails to have some of the nice properties that we relied on for naming spaces.
The function assigning to each $x$ the minimal modulus function of $f$ in $x$ is usually not even continuous.
This is, for instance, not possible if $(M, d)$ is connected as the function takes values in a totally disconnected space and functions from connected to totally disconnected spaces can only be continuous if they are constant.
One might suspect that the awkward typing is the cause for this, and that quantifying the continuity of $f$ in $x$ by use of a function of type $\RR\to \RR$ instead would help, but it does not.
It is known that also in this case returning the minimal modulus of continuity need not be a continuous operation.
A construction of a continuous such assignment, while possible in general, takes considerably more effort \cite{guthrie1983continuous, enayat2000delta}.
Thus, for our proof that $\varepsilon$-$\delta$-continuity implies continuous realizability we rely on a notion of being almost self-modulating instead: The value of the minimal modulus on slightly disturbed input from the metric space can be bounded in terms of a shift of the minimal modulus in the original value.

\subsection{Sierpi\'{n}ski space and closed choice on the naturals}\label{sec: closed choice}

As a somewhat orthogonal class of examples of represented spaces that are far from being metrizable let us consider certain hyper spaces.
A hyper space is a space whose underlying set consists of subsets of a given represented space and depending on the application very different ways to represent such sets of subsets can be appropriate.
An important tool for introducing representations of hyper spaces is Sierpinski space $\Sp$.
The base set of $\Sp$ is the two point set $\{\top, \bot\}$ equipped with the total representation $\delta_\Sp$ with names of type $\NN \to \bool$ specified by
\[ 
\delta_\Sp(\varphi) = \top\quad \iff \quad \exists n \in \NN\ \varphi(n) = \true.
\]
The elements of Sierpinski space should be understood as symbols for termination and divergence.
More specifically: From any kind of computational process for which the notion of an elementary computational step is meaningful one may obtain a sequence of Booleans by taking the $n$-th element to be the truth-value of the statement ``the process finishes within the first $n$ basic steps''.
This way one obtains a name of $\top$ if the process terminates and $\bot$ otherwise.

The final topology of this representation is $\{\emptyset,\{\top\},\Sp\}$ and as a non-Hausdorff space this space is not metrizable.
Without going into details let us mention that also for Sierpinski space there is a tight connection between continuous realizability and topological continuity.
To connect Sierpinski space to hyperspaces consider for a subset $U \subseteq \X$ its characteristic function $\chi_U\colon \X \to \Sp$ defined by
\[
\chi_U(x) := \begin{cases} \top\ \text{ if } x \in U, \\ \bot \ \text{otherwise.}  \end{cases}
\]
For illustration let us assume that we can also safely equip $\X$ with the final topology of the representation without changing the notion of continuity.
Recall that a function is topologically continuous if and only if preimages of open sets are open.
From the topology of Sierpinski space specified above it is evident that in this setting a subset of $\X$ is open if and only if its characteristic function is continuous.
Inspired by this we may generalize and follow sources such as  \cite{pauly2016topological} in identifying the space $\Os(\X)$ of open subsets of $\X$ with the space of continuous functions $\Sp^\X$ as defined in Section \ref{sec: function spaces}.
Similarly, the space $\AA(\X)$ of closed subsets of $\X$ can be introduced as the space of complements of opens.

If $\X$ is taken to be a concrete space it is often possible to give simpler descriptions of $\Os(\X)$ and $\AA(\X)$.
For instance if $\X := \NN$ is the discrete represented space of natural numbers one may make use of a space of infinite sequences as introduced in Section \ref{sec: sequences} and in particular of the special case $I = \NN$ and $\X = \Sp$ of Lemma \ref{resu: sig_iso_fun} which guarantees that $\Os(\NN) = \Sp^\NN \simeq \prod_\NN\Sp = \Sp^\omega$.
There exists a fully concrete description of $\Os(\NN)$ that is often used for reasoning about this space in computable analysis:
The enumeration representation, where a name of an open set enumerates its elements.
We call the corresponding space $\mathcal O_\NN$.
The representation of the corresponding concrete space $\mathcal{A}_\NN$ of closed subsets encodes sets as functions $\varphi\colon \NN \to \NN$ via
\[
\delta_{\mathcal{A}_\NN}(\varphi) := \NN\setminus\{ n \in \NN\mid \exists m\colon \varphi(m) = n+1\}. 
\]
That is: the information that a name $\varphi$ specifies about a closed set is an enumeration of its complement.

We provide a formal proof that the enumeration representations of the open and closed subsets of the natural numbers capture the abstract structure of these spaces through the exponential in the category of represented spaces.
\begin{thm}[\texttt{AN\_iso\_Anat}, \texttt{ON\_iso\_Onat} and \texttt{clsd\_iso\_open}]\label{resu: ON_iso_Onat}
  $\AA(\NN) \simeq \Os(\NN)$ and the concrete spaces of open and closed sets above are correct, i.e.\ $\AA(\NN)\simeq\mathcal A_\NN$ and $\Os(\NN) \simeq \mathcal O_\NN$.
\end{thm}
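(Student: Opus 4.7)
The plan is to reduce the three claimed isomorphisms to a single computation. I first observe that $\AA(\NN)\simeq\Os(\NN)$ and $\mathcal A_\NN\simeq\mathcal O_\NN$ are essentially trivial from the definitions: since $\AA(\X)$ is introduced as the space of complements of opens, set-theoretic complementation $U\mapsto\NN\setminus U$ acts as the identity on names when translating between $\AA(\NN)$ and $\Os(\NN)$; in exactly the same way, a $\mathcal A_\NN$-name of $A$ is literally an $\mathcal O_\NN$-name of $\NN\setminus A$. Combining these observations, it suffices to prove $\Os(\NN)\simeq\mathcal O_\NN$, from which $\AA(\NN)\simeq\mathcal A_\NN$ follows by composition with the two complementation isomorphisms.

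For $\Os(\NN)\simeq\mathcal O_\NN$ I would unfold the abstract side first: by definition $\Os(\NN)=\Sp^\NN$, and since $\NN$ is the discrete represented space on the countable set $\NN$, Theorem~\ref{resu: sig_iso_fun} with $I=\NN$ and $\X=\Sp$ yields a computable isomorphism $\Sp^\NN\simeq\prod_\NN\Sp=\Sp^\omega$. Expanding the product representation of Proposition~\ref{resu: cprd} together with the Sierpi\'nski representation $\delta_\Sp$, a $\Sp^\omega$-name of a set $U\subseteq\NN$ becomes a function $\varphi\colon\NN\times\NN\to\bool$ with $n\in U\iff\exists m\colon\varphi(n,m)=\true$. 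It therefore suffices to build a computable isomorphism between this and the enumeration representation $\mathcal O_\NN$, whose names are functions $\psi\colon\NN\to\NN$ with $U=\{n\mid\exists m\colon\psi(m)=n+1\}$.

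I would realize both directions by plain first-order recursive definitions, using a computable bijection $\langle\cdot,\cdot\rangle\colon\NN\times\NN\to\NN$. In one direction, given $\varphi$, set
\[
\psi(k):=\begin{cases} n+1 & \text{if } k=\langle n,m\rangle\text{ and }\varphi(n,m)=\true,\\ 0 & \text{otherwise,}\end{cases}
\]
which dovetails the witnesses and produces an enumeration of the same set. In the other direction, given $\psi$, set $\varphi(n,m):=\true$ iff $\psi(m)=n+1$. Correctness in both cases is a one-line quantifier manipulation, and both translations are expressible as total \Coq{} functions on the naming spaces and interpretable through the $\FTMF$ assignment, so they are continuous (in fact computable) realizers of the identity map on subsets of $\NN$.

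The main obstacle will be purely bureaucratic: carefully chasing the composition of Theorem~\ref{resu: sig_iso_fun} with Proposition~\ref{resu: cprd} so that $\Sp^\omega$-names genuinely present themselves as functions $\NN\times\NN\to\bool$ in the form used above, and keeping the two off-by-one conventions (the $n+1$ shift in $\mathcal O_\NN$ and the symmetric shift in $\mathcal A_\NN$) in sync when composing with the complement isomorphisms. Once these unfoldings are in place, no further mathematical content is needed.
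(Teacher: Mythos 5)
Your proposal is correct and follows essentially the same route as the paper: the complement isomorphisms are realized by the identity on names, $\Os(\NN)$ is replaced by $\Sp^\omega$ via Theorem~\ref{resu: sig_iso_fun} with $I=\NN$ and $\X=\Sp$, and the translations between $\Sp^\omega$ and the enumeration representation $\mathcal O_\NN$ are given by elementary total functions built from a standard pairing function (you merely write out explicitly the dovetailing realizers the paper only alludes to).
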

The first of these isomorphies may look surprising but is evident on closer inspection: the isomorphism is taking the complement and it is realized by the identity function.
The isomorphism of $\Os(\NN)$ and $\mathcal O_\NN$ is proven by first replacing $\Os(\NN)$ by $\Sp^\omega$ as described above.
The realizers for the isomorphisms between $\Sp^\omega$ and $\mathcal O_\NN$ can be defined as functions directly by relying on the Cantor paring function or any standard pairing function.
We chose to use the standard pairing function provided by the mathematical components library.

As an application let us consider choice operators.
To implement a choice operator one has to select an element of a closed set.
As closed sets are represented as complements of open sets and thus by providing positive information about their complement, a realistic implementation of a choice operator is rarely possible.
Such an implementation still solves a problem that can be described by fairly basic means and thus asserting the existence of a solution is popular for classifying unsolvability of other tasks according to their logical strength.
More details about this can be found in survey articles about Weihrauch reducibility such as \cite{brattka2017weihrauch}, more concrete examples of classifications where the problems discussed here make an appearance can be found for instance in \cite{Pauly2018}.

Let us give a precise formulation of what it means to select an element from a closed subset.
For a fixed represented space $\X$ let closed choice on $\X$ be the task of finding a realizer of the multifunction $C_\X$ defined by
\[ C_\X\colon \AA(\X) \mto \X, \quad a \in C_\X(A) \iff a \in A. \]
Or in words: $a$ is an acceptable return value of $C_\X$ on input $A$ if and only if $a$ is an element of $A$.
This in particular means that the domain of $C_\X$ are the non-empty subsets of $\X$.
This is relevant as it means that a potential realizer of $C_\X$ can behave arbitrarily when handed a name of the empty set.
The realizer need not produce any name of an element of $\X$ in this case and may even be undefined.

Consider the special case where $\X =\NN$.
While the argument space of the multivalued function $C_\NN$ is $\AA(\NN)$ we may use the same definition to obtain a multifunction $C_\NN'\colon \mathcal A_\NN \mto \NN$.
A mathematician may even consider it pointless to give this function a new name as $\AA(\NN) \simeq \AA_\NN$ and isomorphic spaces are regularly identified.
Indeed, for the question of whether $C_\NN$ has a continuous realizer the space $\AA(\NN)$ may be substituted with $\mathcal A_\NN$ for this exact reason.
\begin{lem}[\texttt{CN\_CN'\_hcr}]\label{lem:cn-cn'-hcr}
  $C_\NN$ has a continuous realizer if and only if $C_\NN'$ does.
\end{lem}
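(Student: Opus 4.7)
The plan is to exploit the isomorphism $\AA(\NN) \simeq \mathcal A_\NN$ supplied by Theorem~\ref{resu: ON_iso_Onat}. Both represented spaces have the same underlying set (the closed subsets of $\NN$), and the witnessing isomorphism $\iota\colon \AA(\NN)\to \mathcal A_\NN$ and its inverse are the identity on the underlying set. Consequently $C_\NN$ and $C_\NN'$ specify exactly the same input/output relation modulo the change of representation: for every closed $A\subseteq \NN$ we have $C_\NN'(\iota(A)) = A = C_\NN(A)$, so $C_\NN = C_\NN'\circ \iota$ and, symmetrically, $C_\NN' = C_\NN\circ \iota^{-1}$ as multifunctions.

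For the forward direction I would assume a continuous realizer $F$ of $C_\NN'$ and let $G$ be a continuous realizer of $\iota$, whose existence is part of Theorem~\ref{resu: ON_iso_Onat}. By Theorem~\ref{resu: cont_comp} the composition $F\circ G$ is continuous, and by Lemma~\ref{resu: rlzr_spec} it realizes $\delta_{\mathcal A_\NN}\circ F \circ G$, whose tightening of $C_\NN'\circ \iota \circ \delta_{\AA(\NN)}$ can be obtained by combining the realizer properties of $F$ and $G$ via Lemma~\ref{lem: tight comp}. Because $C_\NN' \circ \iota = C_\NN$ as multifunctions, this exhibits $F\circ G$ as a continuous realizer of $C_\NN$. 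The backward direction is symmetric, composing a realizer of $C_\NN$ with a continuous realizer of $\iota^{-1}$.

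The only mildly subtle point I foresee is checking that the set-level equality $C_\NN' \circ \iota = C_\NN$ upgrades cleanly to multifunction equality so that the composition of realizers truly realizes the intended target. Here the fact that $\iota$ is a (total, singlevalued) function rather than a genuinely multivalued map removes any domain-condition complication in the multifunction composition, and the coincidence of the underlying sets is immediate from how $\mathcal A_\NN$ and $\AA(\NN)$ describe the same collection of closed subsets in Theorem~\ref{resu: ON_iso_Onat}. Everything else reduces to bookkeeping with the preservation of continuous realizability under composition.
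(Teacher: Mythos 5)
Your proof is correct and is essentially the route the paper intends: identify $C_\NN$ with $C_\NN'$ modulo the isomorphism $\AA(\NN)\simeq\mathcal A_\NN$ of Theorem~\ref{resu: ON_iso_Onat} and transport a continuous realizer across it using closure of continuous realizability under composition (Theorem~\ref{resu: cont_comp} together with the realizer calculus of Lemmas~\ref{resu: rlzr_spec} and~\ref{lem: tight comp}). One cosmetic slip: the condition you want from Lemma~\ref{resu: rlzr_spec} is that $\delta_\NN\circ F\circ G$ tightens $C_\NN\circ\delta_{\AA(\NN)}$ (obtained by chaining the two realizer facts), not the formula as written, but the surrounding argument is the right one.
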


On the concrete space $\A_\NN$ a standard argument can be used to see that a continuous realizer cannot exist.
\begin{thm}[\texttt{CN'\_not\_cont}]\label{lem:cn'-not-cont}
  $C_\NN'$ does not have a continuous realizer.
\end{thm}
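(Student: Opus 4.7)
The plan is to run the standard diagonal/certificate argument against a hypothetical continuous realizer of $C_\NN'$. Unfolding definitions, the output space $\NN$ is the discrete represented space with $\Q_\NN=\{\star\}$ and $\A_\NN=\NN$, so up to identifying $\NN^{\{\star\}}$ with $\NN$ a realizer of $C_\NN'$ is a continuous partial operator $F\colon{\subseteq}\NN^\NN \to \NN$ such that for every $\varphi\in\dom F$ with $\delta_{\mathcal A_\NN}(\varphi)$ non-empty, the value $F(\varphi)$ lies in $\delta_{\mathcal A_\NN}(\varphi) = \NN\setminus\{n\mid \exists m,\ \varphi(m)=n+1\}$. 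Assume, for contradiction, that such an $F$ exists.

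First I would evaluate $F$ on the constant-zero input $\varphi_0(n):=0$. Since the predicate $\exists m,\ \varphi_0(m)=n+1$ is false for every $n$, the name $\varphi_0$ codes the closed set $\NN$, which is non-empty; hence $\varphi_0\in\dom F$ and $F$ returns some fixed $k:=F(\varphi_0)\in\NN$. Next, invoking continuity of $F$ in the sense of Section~\ref{sec: continuity} at the single question $\star$, I would extract a certificate $\mathbf q=(q_1,\dots,q_N)\in\seq\NN$ such that every $\psi\in\dom F$ coinciding with $\varphi_0$ on $\mathbf q$ also satisfies $F(\psi)=k$.

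Now I would build the offending input $\varphi'$ by picking any $m\in\NN$ not appearing in $\mathbf q$ (possible since $\mathbf q$ is finite) and setting
\[
\varphi'(j) := \begin{cases} k+1 & \text{if } j = m, \\ 0 & \text{otherwise.} \end{cases}
\]
Then $\varphi'$ coincides with $\varphi_0$ on every entry of $\mathbf q$, and the set of values taken by $\varphi'$ minus one is exactly $\{k\}$, so $\delta_{\mathcal A_\NN}(\varphi')=\NN\setminus\{k\}$. This set is non-empty, so $\varphi'\in\dom F$; the certificate then forces $F(\varphi')=k$, while the realizer specification forces $F(\varphi')\in\NN\setminus\{k\}$. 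This contradiction finishes the proof.

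There is no real obstacle here beyond bookkeeping: the only delicate point is to apply the continuity property at the right $\varphi_0$ and $q'=\star$ and to be careful that the modified input $\varphi'$ both stays in the domain (i.e.\ codes a non-empty set) and still agrees with $\varphi_0$ on the finite certificate, which is why $m$ must be chosen outside $\mathbf q$. Combined with Lemma~\ref{lem:cn-cn'-hcr} this also yields the discontinuity of $C_\NN$ promised in Theorem~\ref{resu: CN_not_cont}.
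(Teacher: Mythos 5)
Your proof is correct and is essentially the paper's argument: both diagonalize against a continuity certificate by modifying a name outside the certificate so that the coded (still non-empty) closed set excludes the value the realizer was forced to output. The only cosmetic difference is that you start from the constant-zero name of $\NN$ and remove $k=F(\varphi_0)$, whereas the paper starts from a name of $\{0\}$ and removes $0$.
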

\begin{proof}[outline]
  The proof proceeds by contradiction.
  Assume that to the contrary $C_\NN'$ is continuous and that $F$ is a continuous realizer.
  Pick any name $\varphi$ of the one point set $\{0\}$.
  As $F$ is a realizer, it has to return a name of $0$ on input $\varphi$, i.e.\ $F(\varphi)(\star) = 0$.
  Since $F$ is continuous there is a list $\str q \subseteq \NN$ such that $F(\varphi)(\star) = F(\psi)(\star)$ for all $\psi\colon \NN \to \NN$ that coincide with $\varphi$ on $\str q$.
  Consider the name $\varphi'$ of the non-empty set $A := \NN \setminus \left ( \{n \ | \ \exists m \in \str q, \varphi(m) = n+1 \} \cup \{0\} \right )$ defined by 
  \[
  \varphi'(n) := \begin{cases}
    \varphi(n), & \text{if } n \in \str q \\
    1, & \text{otherwise.}
  \end{cases}
  \]
  On one hand, $F(\varphi')(\star) \in A$ since $F$ is a realizer.
  On the other hand $F(\varphi')(\star) = F(\varphi)(\star) = 0$ as $\varphi$ and $\varphi'$ coincide on $\str q$.
  Now since $0 \notin A$ we arrive at a contradiction and this completes the proof.
\end{proof}

Combining the theorem with the lemma we conclude that closed choice on the natural numbers does not have a continuous realizer:
\begin{cor}[\texttt{CN\_not\_cont}]\label{resu: CN_not_cont}
  Closed choice on the natural numbers does not have a continuous solution.
\end{cor}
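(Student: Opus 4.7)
The plan is to derive the corollary as an immediate consequence of Lemma~\ref{lem:cn-cn'-hcr} and Theorem~\ref{lem:cn'-not-cont}. A continuous solution in the sense of the statement means a continuous realizer of the multifunction $C_\NN\colon\AA(\NN)\mto\NN$. By Lemma~\ref{lem:cn-cn'-hcr}, the existence of a continuous realizer of $C_\NN$ is equivalent to the existence of a continuous realizer of $C_\NN'\colon\mathcal A_\NN\mto\NN$, obtained by transferring through the isomorphism $\AA(\NN)\simeq\mathcal A_\NN$ from Theorem~\ref{resu: ON_iso_Onat}. Theorem~\ref{lem:cn'-not-cont} states that $C_\NN'$ has no continuous realizer. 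Chaining the two implications, $C_\NN$ cannot have a continuous realizer either.

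Concretely I would proceed by contradiction: assume $F$ is a continuous realizer for $C_\NN$. The proof of Lemma~\ref{lem:cn-cn'-hcr} provides a translator $T$ between names in $\mathcal A_\NN$ and names in $\AA(\NN)$ that is itself continuous (it is built from the realizers of the isomorphism in Theorem~\ref{resu: ON_iso_Onat}). Composing $F$ with $T$ yields a continuous realizer of $C_\NN'$, using the fact that the composition of continuous operators between naming spaces is continuous (Theorem~\ref{resu: cont_comp}) and that realizability is preserved under composition as noted following Lemma~\ref{resu: rlzr_spec}. This contradicts Theorem~\ref{lem:cn'-not-cont}, completing the proof.

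There is no real obstacle here since all the work has already been done in the two preceding results; the only care needed is to make the invocation of Lemma~\ref{lem:cn-cn'-hcr} explicit and to observe that ``continuous solution'' in the statement is to be read as ``continuous realizer of the multifunction $C_\NN$'', which matches the notion of continuous realizability used throughout Section~\ref{sec: represented spaces}.
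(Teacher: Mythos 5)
Your proposal is correct and matches the paper's own argument exactly: the corollary is obtained by combining Lemma~\ref{lem:cn-cn'-hcr} with Theorem~\ref{lem:cn'-not-cont}, and your reading of ``continuous solution'' as ``continuous realizer of $C_\NN$'' is the intended one. The extra detail about composing with the translators from the isomorphism is just an unpacking of the lemma's proof and does not change the route.
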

Clearly the existence of a continuous solution also rules out the possibility that a computable solution exists.
\section{Conclusion}\label{sec:conclusion}
The \Incone{} library provides general tools for enriching abstract mathematical structures of interest with computational structures and for comparing different such encodings.
It can be used to investigate reasonability of encodings in a fully formal setting, to provide computational content for mathematical statements or to prove it impossible to do so.
We feel that the examples from this paper showcase these capabilities well.
Moreover, they involve many of the most prominent features of \Incone{}:
The abstract definition of the space of open subsets is based on the library's function space construction and our proof of isomorphy relies on the fact that in this specific case the corresponding space of functions is isomorphic to a space of sequences.
We believe the \Incone{} library to be reasonably accessible for the computable analysis community and hope that its combination with methods from that community \cite{DBLP:journals/corr/MullerPP016} could help to make parts of it more accessible to the numerical analysis community.

Let us make an effort to outline some relations to other approaches.
For a given representation, i.e.\ singlevalued and co-total (surjective) multifunction $\delta\colon \B \mto X$, one may define a partial equivalence relation on Baire space via
\[ \varphi \sim \psi \iff \exists x\in X\colon x \in \delta(\varphi) \land x \in \delta(\psi). \]
If $\delta$ is considered a partial function the right-hand side can be replaced by $\delta(\varphi) = \delta(\psi)$ if the equality is interpreted in the appropriate way.
The resulting structure $(\B, \sim)$ is what is called a setoid in constructive mathematics.
Conversely, for each such setoid one may take $X$ to be the set of equivalence classes with respect to $\sim$.
This makes the quotient mapping a representation and one may check that the constructions are ``inverse'' in an appropriate sense.
Other notions for representations can be translated accordingly.
For instance, a function that preserves equivalence induces a function on the equivalence classes and realizes this function with respect to the quotient mappings as representations.
If the function does not preserve the equivalence relation, it induces a multifunction on the equivalence classes that it still realizes.

While mathematicians take quotients without thinking twice, in a type theoretic setting quotients are a problematic operation.
Our proof assistant of choice, \Coq{}, does not support quotient types and there are reasons why it refrains from doing so.
As setoids allow to still reason about the quotient without really taking it, they are a very popular tool in the \Coq{} community and constructive mathematics in general.
The specification of an appropriate type $X$ and a representation that acts as quotient mapping is additional information that has to be provided by the user.
This extra information is exclusively used for specification purposes and we aim to give the greatest possible amount of freedom in how the user wants to present this information.
For instance our choice to treat representations as relations and not as functions can be justified by this.
This comes with the drawback that definability of a function on the description of the quotient need no longer be related to it being realistically implementable.
For instance consider Sierpinski space:
In this case the description of the quotient is a discrete two element set which means that all self-maps are definable.
However, the one that switches $\top$ and $\bot$ is not implementable on the level of names.
As the information from the description of the quotient is computationally irrelevant, another way of introducing it is via an axiomatization similar to that of the real numbers from \Coq{}'s standard library that was used in this paper.
Ideally such an axiomatization should avoid stating unrealistic assumptions as computationally relevant facts but the example of the reals shows that even if it does, this can be worked around.

The idea behind the constructions provided by \Incone{} is to support a user in conveniently obtaining descriptions of quotient types by piecing them together from basic building blocks.
As basic spaces \Incone{} defines discrete spaces, separable metric spaces with a distinguished dense sequence and Sierpinski space.
One may combine these basic spaces via building products, sums, infinite sequences and spaces of functions.
The space $\Os(\NN)$ is an example for such a combination as it is constructed as a space of functions between elementary defined spaces.
This particular space, sometimes also called $\mathcal{P}(\omega)$, is a common sight in work on constructive or computational analysis.
This is even though as a space it is very different from the objects that are actually of interest in these fields.
However, many spaces of practical importance can be identified with a subspace of this space:
Any quasi-Polish space in the sense of de Brecht and therefore in particular any Polish space embeds isomorphically into $\Os(\NN)$ as a subspace of low descriptive complexity \cite{DEBRECHT2013356}.
This covers all of the spaces considered in Section~\ref{sec: metric spaces}.
Let us give a concrete example by outlining of how to isomorphically embed the real numbers into $\Os(\NN) \simeq \Sp^\NN$.
First enumerate the open rational intervals and then send a real number $x$ to the function that on input of a natural number $n$ returns $\top$ if $x$ is contained in the $n$-th interval and $\bot$ if this is not the case.
As checking inclusion of a real number into an open interval is semi-decidable relative to availability of approximations to the real number, we obtain an element of $\Os(\NN)$ that continuously depends on the real number.
This can be checked to lead to a continuous, and even computable, embedding of the reals into $\Os(\NN)$.
Of course, this is nothing but the identification of a real number with a Dedekind cut.

For future directions in the development of \Incone{} we plan to work on additional applications.
One particularly fitting extension of the contents of this paper would be a proof that $C([0,1]) \simeq \RR^{[0,1]}$.
This statement is called the Computable Weierstra\ss\ Theorem \cite{pour1975simple}:
Here, $C([0,1])$ are the continuous real-valued functions on the unit interval represented as separable metric space with supremum norm and the rational polynomials as dense sequence.
The function-space $\RR^{[0,1]}$ has the same underlying set but uses a different representation that the theorem states to be computably equivalent.
Other possibilities include:
\begin{itemize}
\item Results about ODE solving \cite{ImmlerH12,MakarovS13, kawamura2018parameterized}.
  This may be done by providing an interface with \corn, parts of it could also be done separately by relying on libraries such as Coq-Interval \cite{melquiond2008proving}.
\item Duality theory for spaces of summable sequences ($\ell_p$-spaces) which provide a pool of examples where sub-spaces of exponentials can be treated complexity theoretically \cite{article,DBLP:conf/lics/SchroderS17}.
  Additionally it constitutes a step towards capturing popular methods for solving partial differential equations \cite{BRATTKA2006858,Selivanova:2008:CSS:1480533.1480596,BCF17a}.
\item A characterization of continuity via preimages of open sets, general considerations about admissibility, discreteness, compactness and many other similar results \cite{pauly2016topological,schroeder_phd}.
\end{itemize}
There are also still some gaps in the \Incone{} library that we plan to fill.
Currently there is no complete proof that the category of represented spaces is Cartesian closed as the corresponding universal property is missing.
The library proves a restricted case by providing a duality operator, but a full proof would be desirable.
Further steps would be to use a formalization of a model of computation to make it possible to talk about computability without moving to the meta level and using the Prop/Type distinction.
This would have the additional advantage of opening the possibility to do complexity theory.
Even without reliance on a model of computation it should be possible to add the capability to do qualitative complexity theory in terms of tracking the decrease in accuracy of approximations \cite{kawamura_complexity_2012,Fre:2017:GSA:3082405.3082461,DBLP:journals/corr/abs-1711-10530,tamc2019}.
As the \Incone{} library keeps close to recent work about complexity theory for computable analysis, this should be possible with relatively low effort.
A full treatment of step-counting complexity might become available in the not too distant future due to recent progress on the formalization of models of computation \cite{ForsterSmolka:2018:Computability-JAR,MaximilianWuttke} and methods from implicit complexity theory \cite{feree2018formal}.
Another way to gain insight into such efficiency considerations would be to capture the trace of the basic feasible functionals on the operators on Baire space \cite{MEHLHORN1976147, Kapron1996ANC, Kapron2018TypetwoPA}.

\bibliography{cc}{}

\newcommand{\etalchar}[1]{$^{#1}$}
\begin{thebibliography}{CFGW04}

\bibitem[AB14]{avigad_brattka_2014}
Jeremy Avigad and Vasco Brattka.
\newblock {\em Computability and analysis: the legacy of {A}lan {T}uring}, pages
  1--47.
\newblock Lecture Notes in Logic. Cambridge University Press, 2014.

\bibitem[ACK{\etalchar{+}}20]{affeldt:hal-02463336}
Reynald Affeldt, Cyril Cohen, Marie Kerjean, Assia Mahboubi, Damien Rouhling,
  and Kazuhiko Sakaguchi.
\newblock {Formalizing functional analysis structures in dependent type
  theory}.
\newblock working paper or preprint, January 2020.

\bibitem[ACR18]{JFR8124}
Reynald Affeldt, Cyril Cohen, and Damien Rouhling.
\newblock Formalization techniques for asymptotic reasoning in classical
  analysis.
\newblock {\em Journal of Formalized Reasoning}, 11(1):43--76, 2018.

\bibitem[ASBZ13]{10.1007/978-3-642-39053-1_1}
Klaus Ambos-Spies, Ulrike Brandt, and Martin Ziegler.
\newblock Real benefit of promises and advice.
\newblock In Paola Bonizzoni, Vasco Brattka, and Benedikt L{\"o}we, editors,
  {\em The Nature of Computation. Logic, Algorithms, Applications}, pages
  1--11, Berlin, Heidelberg, 2013. Springer Berlin Heidelberg.

\bibitem[Bau00]{Bauer:2000:RAC:933370}
Andrej Bauer.
\newblock {\em The Realizability Approach to Computable Analysis and Topology}.
\newblock PhD thesis, Carnegie Mellon University, Pittsburgh, PA, USA, 2000.
\newblock AAI3002721.

\bibitem[BB12]{bishop2012constructive}
Errett Bishop and Douglas Bridges.
\newblock {\em Constructive analysis}, volume 279.
\newblock Springer Science \& Business Media, 2012.

\bibitem[BCC{\etalchar{+}}06]{balluchi2006ariadne}
Andrea Balluchi, Alberto Casagrande, Pieter Collins, Alberto Ferrari, Tiziano
  Villa, and Alberto~L Sangiovanni-Vincentelli.
\newblock Ariadne: a framework for reachability analysis of hybrid automata.
\newblock In {\em In: Proceedings of the International Syposium on Mathematical
  Theory of Networks and Systems.(2006}. Citeseer, 2006.

\bibitem[BCF{\etalchar{+}}17]{BCF17a}
Sylvie Boldo, Fran{\c c}ois Cl{\'e}ment, Florian Faissole, Vincent Martin, and
  Micaela Mayero.
\newblock {A {C}oq Formal Proof of the {L}ax--{M}ilgram theorem}.
\newblock In {\em {Proceedings of the 6th ACM SIGPLAN Conference on Certified
  Programs and Proofs}}, CPP 2017, pages 79--89, Paris, France, January 2017.
  ACM.

\bibitem[BDBP12]{brattka2012closed}
Vasco Brattka, Matthew De~Brecht, and Arno Pauly.
\newblock Closed choice and a uniform low basis theorem.
\newblock {\em Annals of Pure and Applied Logic}, 163(8):986--1008, 2012.

\bibitem[BG11]{brattka2011effective}
Vasco Brattka and Guido Gherardi.
\newblock Effective choice and boundedness principles in computable analysis.
\newblock {\em Bulletin of Symbolic Logic}, 17(1):73--117, 2011.

\bibitem[BGP17]{brattka2017weihrauch}
Vasco Brattka, Guido Gherardi, and Arno Pauly.
\newblock Weihrauch complexity in computable analysis.
\newblock {\em arXiv preprint arXiv:1707.03202}, 2017.

\bibitem[BKMP16]{brattka_et_al:DR:2016:5686}
Vasco Brattka, Akitoshi Kawamura, Alberto Marcone, and Arno Pauly.
\newblock {Measuring the Complexity of Computational Content (Dagstuhl Seminar
  15392)}.
\newblock {\em Dagstuhl Reports}, 5(9):77--104, 2016.

\bibitem[BLM15]{boldo2015coquelicot}
Sylvie Boldo, Catherine Lelay, and Guillaume Melquiond.
\newblock Coquelicot: A user-friendly library of real analysis for {C}oq.
\newblock {\em Mathematics in Computer Science}, 9(1):41--62, 2015.

\bibitem[BLM16]{BLM16}
Sylvie Boldo, Catherine Lelay, and Guillaume Melquiond.
\newblock Formalization of real analysis: {A} survey of proof assistants and
  libraries.
\newblock {\em Mathematical Structures in Computer Science}, 26(7):1196--1233,
  2016.

\bibitem[BM17]{BolMel17}
Sylvie Boldo and Guillaume Melquiond.
\newblock {\em Computer Arithmetic and Formal Proofs}.
\newblock ISTE Press -- Elsevier, 2017.

\bibitem[Bro75]{MR0532661}
L.~E.~J. Brouwer.
\newblock {\em Collected works. {V}ol. 1}.
\newblock North-Holland Publishing Co., Amsterdam-Oxford; American Elsevier
  Publishing Co., Inc., New York, 1975.
\newblock Philosophy and foundations of mathematics, Edited by A. Heyting.

\bibitem[BY06]{BRATTKA2006858}
Vasco Brattka and Atsushi Yoshikawa.
\newblock Towards computability of elliptic boundary value problems in
  variational formulation.
\newblock {\em Journal of Complexity}, 22(6):858 -- 880, 2006.
\newblock Computability and Complexity in Analysis.

\bibitem[CFGW04]{cruz2004c}
Lu{\'\i}s Cruz-Filipe, Herman Geuvers, and Freek Wiedijk.
\newblock {C-CoRN}, the constructive {C}oq repository at {N}ijmegen.
\newblock In {\em International Conference on Mathematical Knowledge
  Management}, pages 88--103. Springer, 2004.

\bibitem[dB13]{DEBRECHT2013356}
Matthew de~Brecht.
\newblock Quasi-polish spaces.
\newblock {\em Annals of Pure and Applied Logic}, 164(3):356 -- 381, 2013.

\bibitem[Dei92]{deimling1992multivalued}
K.~Deimling.
\newblock {\em Multivalued Differential Equations}.
\newblock De Gruyter series in nonlinear analysis and applications. W. de
  Gruyter, 1992.

\bibitem[EM46]{10.2307/2371832}
Samuel Eilenberg and Deane Montgomery.
\newblock Fixed point theorems for multi-valued transformations.
\newblock {\em American Journal of Mathematics}, 68(2):214--222, 1946.

\bibitem[Ena00]{enayat2000delta}
Ali Enayat.
\newblock $\delta$ as a continuous function of x and $\varepsilon$.
\newblock {\em The American Mathematical Monthly}, 107(2):151--155, 2000.

\bibitem[EX16]{ESCARDO2016770}
Mart{\'\i}n Escard{\'o} and Chuangjie Xu.
\newblock A constructive manifestation of the {K}leene-{K}reisel continuous
  functionals.
\newblock {\em Annals of Pure and Applied Logic}, 167(9):770 -- 793, 2016.
\newblock Fourth Workshop on Formal Topology (4WFTop).

\bibitem[F{\'e}r17]{Fre:2017:GSA:3082405.3082461}
Hugo F{\'e}r{\'e}e.
\newblock Game semantics approach to higher-order complexity.
\newblock {\em J. Comput. Syst. Sci.}, 87(C):1--15, August 2017.

\bibitem[FHM{\etalchar{+}}18]{feree2018formal}
Hugo F{\'e}r{\'e}e, Samuel Hym, Micaela Mayero, Jean-Yves Moyen, and David
  Nowak.
\newblock Formal proof of polynomial-time complexity with
  quasi-interpretations.
\newblock In {\em Proceedings of the 7th ACM SIGPLAN International Conference
  on Certified Programs and Proofs}, pages 146--157. ACM, 2018.

\bibitem[FS18]{ForsterSmolka:2018:Computability-JAR}
Yannick Forster and Gert Smolka.
\newblock {C}all-by-{V}alue {L}ambda {C}alculus as a {M}odel of {C}omputation
  in {C}oq.
\newblock {\em Journal of Automated Reasoning}, 2018.

\bibitem[GMT16]{gonthier:inria-00258384}
Georges Gonthier, Assia Mahboubi, and Enrico Tassi.
\newblock {A Small Scale Reflection Extension for the Coq system}.
\newblock Research Report RR-6455, {Inria Saclay Ile de France}, 2016.

\bibitem[Grz57]{MR0089809}
A.~Grzegorczyk.
\newblock On the definitions of computable real continuous functions.
\newblock {\em Fund. Math.}, 44:61--71, 1957.

\bibitem[Gut83]{guthrie1983continuous}
Joe~A Guthrie.
\newblock A continuous modulus of continuity.
\newblock {\em The American Mathematical Monthly}, 90(2):126--127, 1983.

\bibitem[IH12]{ImmlerH12}
Fabian Immler and Johannes H{\"{o}}lzl.
\newblock Numerical analysis of ordinary differential equations in
  {I}sabelle/{HOL}.
\newblock In {\em ITP}, volume 7406 of {\em LNCS}, pages 377--392, 2012.

\bibitem[KC96]{Kapron1996ANC}
Bruce~M. Kapron and Stephen~A. Cook.
\newblock A new characterization of type-2 feasibility.
\newblock {\em SIAM J. Comput.}, 25:117--132, 1996.

\bibitem[KC12]{kawamura_complexity_2012}
Akitoshi Kawamura and Stephen Cook.
\newblock Complexity theory for operators in analysis.
\newblock {\em ACM Transactions in Computation Theory}, 4(2):Article 5, 2012.

\bibitem[Kle59]{kleene1959constructivity}
S.C. Kleene.
\newblock Countable functionals.
\newblock {\em Constructivity in Mathematics: proceedings of the colloquium
  held at Amsterdam}, 1959.

\bibitem[Ko91]{MR1137517}
Ker-I Ko.
\newblock {\em Complexity theory of real functions}.
\newblock Progress in Theoretical Computer Science. Birkh{\"a}user Boston Inc.,
  Boston, MA, 1991.

\bibitem[Kon08]{konecny2008aern}
Michal Konecn{\`y}.
\newblock {AERN-Real: Arbitrary-precision interval arithmetic for approximating
  exact real numbers}, 2008.

\bibitem[Kre59]{kreisel1959interpretation}
Georg Kreisel.
\newblock Interpretation of analysis by means of constructive functionals of
  finite type, {C}onstructivity in {M}athematics, 1959.

\bibitem[KS18]{Kapron2018TypetwoPA}
Bruce~M. Kapron and Florian Steinberg.
\newblock Type-two polynomial-time and restricted lookahead.
\newblock In {\em LICS}, 2018.

\bibitem[KST18]{kawamura2018parameterized}
Akitoshi Kawamura, Florian Steinberg, and Holger Thies.
\newblock Parameterized complexity for uniform operators on multidimensional
  analytic functions and {ODE} solving.
\newblock In {\em International Workshop on Logic, Language, Information, and
  Computation}, pages 223--236. Springer, 2018.

\bibitem[KST19]{tamc2019}
Akitoshi Kawamura, Florian Steinberg, and Holger Thies.
\newblock Second-order linear-time computability with applications in
  computable analysis.
\newblock {\em 15th Annual Conference on Theory and Applications of Models of
  Computation}, 2019.
\newblock extended abstract accepted for presentation at TAMC 2019.

\bibitem[KW85]{kreitz1985theory}
Christoph Kreitz and Klaus Weihrauch.
\newblock Theory of representations.
\newblock {\em Theoretical computer science}, 38:35--53, 1985.

\bibitem[Lac58]{MR0105357}
Daniel Lacombe.
\newblock Sur les possibilit\'es d'extension de la notion de fonction
  r\'ecursive aux fonctions d'une ou plusieurs variables r\'eelles.
\newblock In {\em Le raisonnement en math\'ematiques et en sciences
  exp\'erimentales}, Colloques Internationaux du Centre National de la
  Recherche Scientifique, LXX, pages 67--75. Editions du Centre National de la
  Recherche Scientifique, Paris, 1958.

\bibitem[LN15]{longley2015higher}
John Longley and Dag Normann.
\newblock {\em Higher-order computability}, volume 100.
\newblock Springer, 2015.

\bibitem[Meh76]{MEHLHORN1976147}
Kurt Mehlhorn.
\newblock Polynomial and abstract subrecursive classes.
\newblock {\em Journal of Computer and System Sciences}, 12(2):147 -- 178,
  1976.

\bibitem[Mel08]{melquiond2008proving}
Guillaume Melquiond.
\newblock Proving bounds on real-valued functions with computations.
\newblock In {\em International Joint Conference on Automated Reasoning}, pages
  2--17. Springer, 2008.

\bibitem[MPPZ16]{DBLP:journals/corr/MullerPP016}
Norbert~Th. M{\"{u}}ller, Sewon Park, Norbert Preining, and Martin Ziegler.
\newblock On formal verification in imperative multivalued programming over
  continuous data types.
\newblock {\em CoRR}, abs/1608.05787, 2016.

\bibitem[MS13]{MakarovS13}
Evgeny Makarov and Bas Spitters.
\newblock The {P}icard algorithm for ordinary differential equations in {C}oq.
\newblock In {\em {ITP}}, volume 7998 of {\em Lecture Notes in Computer
  Science}, pages 463--468. Springer, 2013.

\bibitem[M{\"{u}}l01]{zbMATH01746043}
Norbert~Th. M{\"{u}}ller.
\newblock {The {iRRAM}: Exact arithmetic in {C++}.}
\newblock In {\em {Computability and complexity in analysis. 4th international
  workshop, CCA 2000. Swansea, GB, September 17--19, 2000. Selected papers}},
  pages 222--252. Berlin: Springer, 2001.

\bibitem[NS20]{DBLP:journals/corr/abs-1711-10530}
Eike Neumann and Florian Steinberg.
\newblock Parametrised second-order complexity theory with applications to the
  study of interval computation.
\newblock {\em Theoret. Comput. Sci.}, 806:281--304, 2020.

\bibitem[O'C05]{10.1007/11541868_16}
Russell O'Connor.
\newblock Essential incompleteness of arithmetic verified by {C}oq.
\newblock In Joe Hurd and Tom Melham, editors, {\em Theorem Proving in Higher
  Order Logics}, pages 245--260, Berlin, Heidelberg, 2005. Springer Berlin
  Heidelberg.

\bibitem[O'C09]{conorPhD}
Russell O'Connor.
\newblock {\em Incompleteness \& Completeness: Formalizing Logic and Analysis
  in Type Theory}.
\newblock PhD thesis, Radboud Universiteit Nijmegen, 2009.

\bibitem[Pau16]{pauly2016topological}
Arno Pauly.
\newblock On the topological aspects of the theory of represented spaces.
\newblock {\em Computability}, 5(2):159--180, 2016.

\bibitem[Pau17]{pauly2012multi}
Arno Pauly.
\newblock Many-one reductions and the category of multivalued functions.
\newblock volume~27, pages 376--404. Cambridge University Press, 2017.

\bibitem[PEC75]{pour1975simple}
Marian~Boykan Pour-El and Jerome Caldwell.
\newblock On a simple definition of computable function of a real variable-with
  applications to functions of a complex variable.
\newblock {\em Mathematical Logic Quarterly}, 21(1):1--19, 1975.

\bibitem[PER89]{pour-el}
Marian~B. Pour-El and J.~Ian Richards.
\newblock {\em Computability in Analysis and Physics}, volume Volume 1 of {\em
  Perspectives in Mathematical Logic}.
\newblock Springer-Verlag, Berlin, 1989.

\bibitem[PS18]{Pauly2018}
Arno Pauly and Florian Steinberg.
\newblock Comparing representations for function spaces in computable analysis.
\newblock {\em Theory of Computing Systems}, 62(3):557--582, Apr 2018.

\bibitem[PZ13]{Pauly2013RelativeCA}
Arno Pauly and Martin Ziegler.
\newblock Relative computability and uniform continuity of relations.
\newblock {\em J. Logic \& Analysis}, 5, 2013.

\bibitem[Sch02a]{schroeder_phd}
Matthias Schr{\"o}der.
\newblock {\em Admissible Representations for Continuous Computations}.
\newblock PhD thesis, FernUniversit{\"a}t Hagen, 2002.

\bibitem[Sch02b]{schroder2002extended}
Matthias Schr{\"o}der.
\newblock Extended admissibility.
\newblock {\em Theoretical computer science}, 284(2):519--538, 2002.

\bibitem[Sch04]{article}
Matthias Schr\"{o}der.
\newblock Spaces allowing type-2 complexity theory revisited.
\newblock {\em Math. Log. Q.}, 50:443--459, 09 2004.

\bibitem[Sel94]{SELMAN1994357}
Alan~L. Selman.
\newblock A taxonomy of complexity classes of functions.
\newblock {\em Journal of Computer and System Sciences}, 48(2):357 -- 381,
  1994.

\bibitem[Soa78]{soare1978recursively}
Robert~I Soare.
\newblock Recursively enumerable sets and degrees.
\newblock {\em Bulletin of the American Mathematical Society},
  84(6):1149--1181, 1978.

\bibitem[SS08]{Selivanova:2008:CSS:1480533.1480596}
Svetlana Selivanova and Victor Selivanov.
\newblock Computing solutions of symmetric hyperbolic systems of pde's.
\newblock {\em Electron. Notes Theor. Comput. Sci.}, 221:243--255, December
  2008.

\bibitem[SS17]{DBLP:conf/lics/SchroderS17}
Matthias Schr{\"{o}}der and Florian Steinberg.
\newblock Bounded time computation on metric spaces and {B}anach spaces.
\newblock In {\em 32nd Annual {ACM/IEEE} Symposium on Logic in Computer
  Science, {LICS} 2017, Reykjavik, Iceland, June 20-23, 2017}, pages 1--12.
  {IEEE} Computer Society, 2017.

\bibitem[Ste19a]{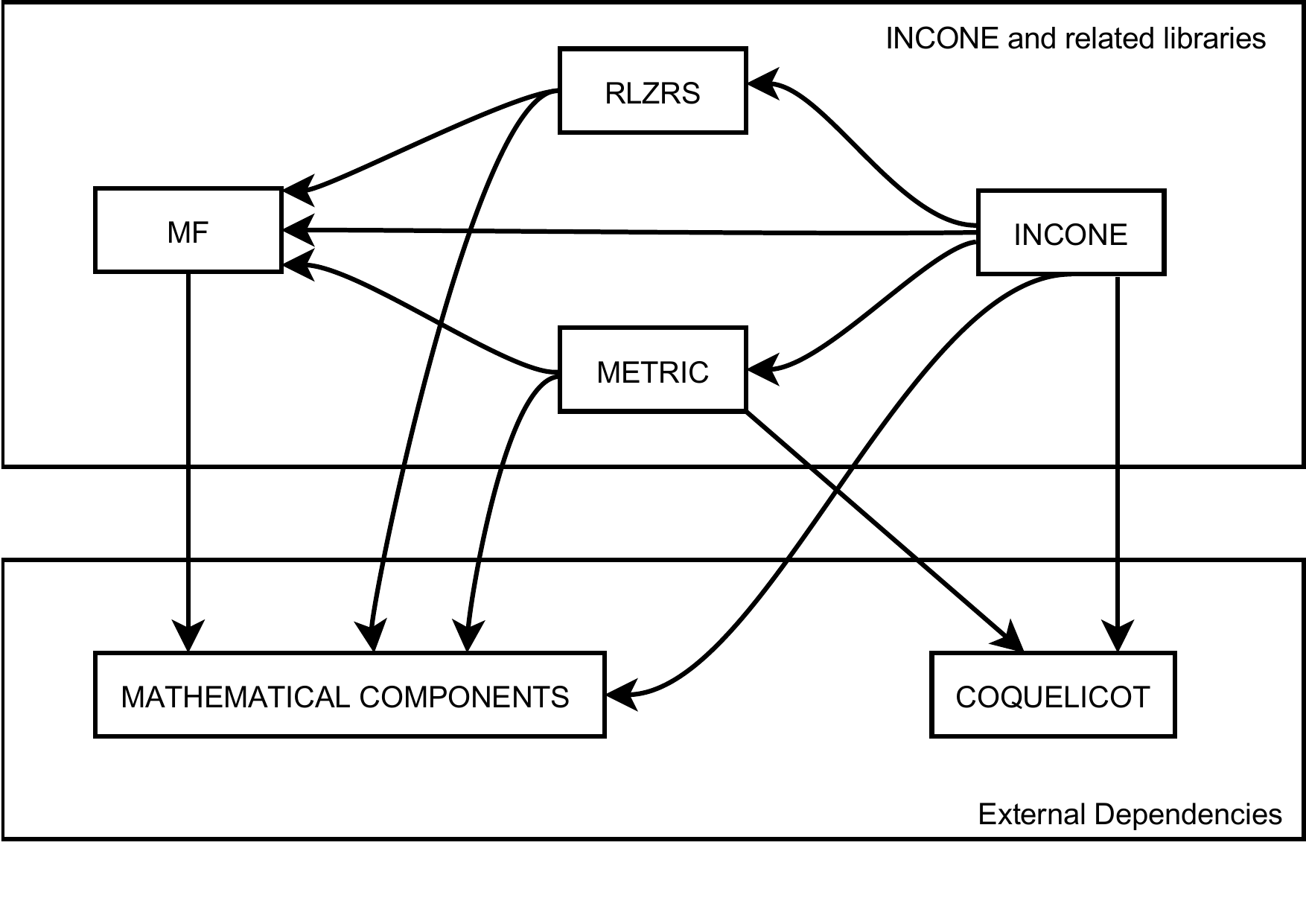}
Florian Steinberg.
\newblock The \textsc{Incone}{} library.
\newblock \url{https://github.com/FlorianSteinberg/incone}, 2019.
\newblock release v1.0.

\bibitem[Ste19b]{metric}
Florian Steinberg.
\newblock The \textsc{Metric}{} library.
\newblock \burl{https://github.com/FlorianSteinberg/metric}, 2019.
\newblock release v1.0.

\bibitem[Ste19c]{mf}
Florian Steinberg.
\newblock The \textsc{Mf}{} library.
\newblock \burl{https://github.com/FlorianSteinberg/mf}, 2019.
\newblock release v1.0.

\bibitem[Ste19d]{rlzrs}
Florian Steinberg.
\newblock The \textsc{Rlzrs}{} library.
\newblock \burl{https://github.com/FlorianSteinberg/rlzrs}, 2019.
\newblock release v1.0.

\bibitem[Tur36]{turing_computable_1936}
A.~M. Turing.
\newblock On computable numbers, with an application to the
  {Entscheidungsproblem}.
\newblock {\em Proceedings of the London Mathematical Society}, 2(1):230--265,
  1936.

\bibitem[Tur38]{turing1938computable}
Alan~Mathison Turing.
\newblock On computable numbers, with an application to the
  {E}ntscheidungsproblem. a correction.
\newblock {\em Proceedings of the London Mathematical Society}, 2(1):544--546,
  1938.

\bibitem[TvD88]{MR966421}
A.~S. Troelstra and D.~van Dalen.
\newblock {\em Constructivism in mathematics. {V}ol. {II}}, volume 123 of {\em
  Studies in Logic and the Foundations of Mathematics}.
\newblock North-Holland Publishing Co., Amsterdam, 1988.

\bibitem[vO11]{vanoosten2011}
Jaap van Oosten.
\newblock Partial combinatory algebras of functions.
\newblock {\em Notre Dame J. Formal Logic}, 52(4):431--448, 07 2011.

\bibitem[Wei93]{weihrauch1993computability}
Klaus Weihrauch.
\newblock Computability on computable metric spaces.
\newblock {\em Theoretical Computer Science}, 113(2):191--210, 1993.

\bibitem[Wei00]{weihrauch_computable_2000}
Klaus Weihrauch.
\newblock {\em Computable {Analysis}}.
\newblock Springer, Berlin/Heidelberg, 2000.

\bibitem[Wut18]{MaximilianWuttke}
Maximilian Wuttke.
\newblock Verified programming of {T}uring {M}achines in {C}oq.
\newblock Master's thesis, Saarland University, 2018.

\end{thebibliography}
\appendix

\section{An overview of the \Incone{} library}\label{app: incone}
The goal of the \Incone{} library is to provide a \Coq{} formalization of some of the most important concepts and basic results from computable analysis.
The library uses notations that allow to state theorems similar to how they would be formulated in natural language and therefore should be readable for a mathematician familiar with computable analysis without having a deep knowledge of \Coq{}.
For example, Lemma~\ref{lem: tight comp} corresponds to the following statement in the library.
\begin{verbatim}
  Lemma tight_comp R S T (f f': T ->> R) (g g': S ->> T):
  f \tightens f' -> g \tightens g' -> (f \o g) \tightens (f' \o g').
\end{verbatim}
Here, \verb$tight_comp$ is the name of the lemma and the symbols following this name until the colon at the end of the line are parameters of its statement.
These parameters can alternatively be thought of as universally quantified.
The phrase \verb$(f f': T ->> R)$ thus translates to the mathematical language used in the paper as ``for all $f, f'\colon T \mto R$'', or ``let $f$ and $f'$ be multifunctions from $T$ to $R$''.
That $R$ and $T$ are sets can be omitted as it can be inferred from their use.
The notation for multifunctions may seem unfortunate at first but the closer match of a notation \verb$=>$ seemed unwise to use as it is commonly overloaded with a number of different meanings.
The colon is followed by the body of the lemma, where \verb$->$ is the logical implication and \verb$_ \o _$ and \verb$_ \tightens _$ are the notations for composition and the tightening order on multivalued functions.
Many of the infix notations in the library start in a \verb$\$ to avoid blocking too many keywords.

Most statements in this paper and many additional results can be found in the files of the \Incone{} library.
In the paper, we labeled each theorem and lemma with the names of the relevant statements in the library.
This should make it easy to locate and inspect them in \Incone{}.
The structure of this paper and the formal development of the concepts are quite similar.
For most parts making the connection is about as complicated as indicated by the example above.
The most important exception where one needs additional information for the translation is that the library uses the expression ``continuity space'' for what is usually referred to as ``represented space''.
We decided to use this notion since represented spaces are tied to computability theory, which the \Incone{} library avoids except on the meta-level.
This has another important consequence for the translation between the results of the paper and the formal development: The formal versions will only prove continuity where the paper claims computability.
The user can verify the computability claim by inspecting the definitions of the realizers.
Whenever a result proves a function with name \verb$fun_name$ continuous, there will be a function whose name contains the string \verb$fun_name_rlzr$.
It can be checked that this definition does not rely on any assumptions, so that it certifies computability.
How to check this manually is described in more detail below.

The smaller differences between the formal and the informal development of concepts include the definition of the space of functions $\Y^\X$:
While in this paper we chose the underlying set to be the continuous functions, in the formal development we use the co-domain of the function-space representation for the definition.
In \Incone{} the space of functions is denoted by \texttt{cs\_fun} with the notation \verb$_ c-> _$ and that its underlying set contains exactly the continuous functions is proven separately (\verb$ass_cont$).
The reason for this choice is to minimize the strength of the axioms that are automatically assumed whenever spaces of functions are used.
The proof that every continuous function has an associate requires a considerably stronger background theory than most of the other basic results one would be interested in.
Another design choice for the sake of minimizing the use of axioms is that \Incone{} does not use the \Coq-internal notion of equality for equality of multivalued functions.
Instead, it defines this equality as an equivalence relation \verb$_ =~= _$.
This avoids the need to assume the axiom of propositional extensionality that, while being theoretically justified and fairly commonly used, is known for having caused inconsistencies in the past.
As such equivalence relations are a common sight in constructive mathematics, \Coq{} provides the necessary infrastructure to make most operations equally effortless for equalities and equivalence relations.
Thus this detail can mostly be ignored by the user.

The \Incone{} library has some sublibraries that are of separate interest and have fewer dependencies.
These parts can be obtained individually and we give an overview over their purpose in Section \ref{sec: sublibs}.
Before we do so, let us sketch what steps need to be taken to get an installation of \Incone{} up and running and verify the correctness of the results described in this paper.
We will assume that the user has successfully installed \Coq{} on his machine, instructions of how to do so are available online.

\subsection{How to get started}
The paper describes release version $1.0$ of \Incone{}.
As \Incone{} is an ongoing project, newer versions will become available and for a description of the newest version, the GitHub repository can be consulted \cite{incone}.
For readers only interested in verifying the results of the paper, we give instructions how to install and use version $1.0$ of the library here.
This version of \Incone{} has been tested with version 8.9.0 of the \Coq{} proof assistant.
Additionally, the Coquelicot library (version 3.0) and the ssreflect package from math-comp (version 1.7.0) needs to be installed.
To install \Incone{}, the libraries \Mf{}, \Rlzrs{}, \Metric{} and \Incone{} have to be downloaded and installed in this order.
They can be found in their respective GitHub repositories \cite{mf,metric,rlzrs,incone}.
Alternatively, the paper's project page \url{https://holgerthies.github.io/continuity} offers a package containing all libraries.

All of the claims the paper makes about the incone library can be checked by installing the library, opening a new
\Coq{} file with the following includes in the preamble.
\begin{verbatim}
From mf Require Import choice_mf.
From rlzrs Require Import all_rlzrs.
From metric Require Import all_metric Qmetric.
From incone Require Import all_cs classical_func 
                           classical_cont classical_mach
                           Duop Q_reals baire_metric.
\end{verbatim}

Each result in the paper states the name of the corresponding lemma in \Incone{} in brackets after the number of the result.
The formal lemmas and theorems can be inspected via \verb$Check lem_name$, definitions can be detailed using \verb$Print def_name$ and notations by \verb$Locate "not_name"$.
In the case where multiple results share a name, \verb$Locate "lem_name"$ lists all lemmas with that name and unique identifiers and \verb$Print Assumptions lem_name$ shows the axioms that the result assumes.
To list all results involving a concept \Coq’s search function can be used via \verb$Search _ (concept)$ and \verb$Search "phrase"$ might be useful where \verb$phrase$ is an expected substring of a lemma's name.
\Coq{}'s search function can also be used with multiple arguments and the underscore above is due to the first argument being treated differently.
Omitting it will lead to only functions returning \verb$concept$ being listed as a result and a warning that this may not be the intended behavior.

For example, one of the theorems from the paper has the name \verb$cont_comp$.
Typing
\begin{verbatim}
Check cont_comp.
\end{verbatim}
prints the following statement:
\begin{verbatim}
cont_comp
     : forall (X Y Z : cs) (f : Y -> Z) (g : X -> Y), 
        f \is_continuous -> g \is_continuous -> (f \o_f g) \is_continuous
\end{verbatim}
Here, \verb$cs$ is a short notation for \verb$continuity_space$, which is the \Incone{} equivalent of a represented space.
The notation \verb$_ \o_f _$ is for composition of functions, the shorter \verb$_ \o _$ is preserved for composition of multivalued functions.
The rest of the notations should be self-explanatory but it may be worth noting that the use of a $\backslash$ for non-prefix notations is to avoid blocking too many keywords.
The user may first use \verb$Locate "_ \is_continuous"$ to find that this is a notation for the definition \verb$continuous$.
Then the details of the definition can be printed using the command \verb$Print continuous$.
The above may or may not be the desired theorem from the paper.
Typing
\begin{verbatim}
Locate cont_comp.
\end{verbatim}
shows that there are several lemmas with this name and the location of the files that contain them.

For users that are not only interested in verifying the results from the paper, the examples folder is a good starting point.
Most of the the examples are documented via comments in the files and contain some rudimentary proofs that can be used to get familiar with using the library.
A reader familiar with \Coq{} will notice right away that these proofs heavily use the ssreflect proof language.
This is true for the whole of the development and a comprehensible overview over this language and its advantages can be found in \cite{gonthier:inria-00258384}.
\subsection{The structure of \Incone{} and its sub-libraries}\label{sec: sublibs}
Some concepts described in this paper are expected to have applications that are unrelated to computable analysis and the most important ones have been exported to small libraries that can be installed separately and have less dependencies.
For instance, due to its potential applications outside of \Incone{} the development of a convenient environment for manipulation of multifunctions was exported and can be obtained separately as the \Mf{}-library \cite{mf}.
Already the \Incone{} library uses multifunctions for several different purposes:
Through the \Rlzrs{} library for the formulation of realizability, but also for dealing with partiality issues in \Coq{} and in their traditional role in computable analysis as formalization of computational tasks.

Let us list all the packages that we decided to export from \Incone{} with a short description of their contents.
\begin{itemize}
\item \demph{\Mf{}:} The \Mf{} library contains statements from the theory of multivalued functions.
  The formalization closely follows the description in the first part of Section~\ref{sec: operators}.
  Several useful notations are introduced, for example \verb$_ ->> _$ for multivalued functions and \verb$_ \tightens _$ for the tightening relation.
  An exhaustive list of the important concepts can be found in the preamble of the file \verb$mf.v$ in the library.
\item \demph{\Rlzrs{}:} A library for basic realizability using the question and answer structure described in the beginning of Section~\ref{sec: represented spaces}.
  It defines basic notations such as \verb$_ \is_name_of _$ or \verb$_ \realizes _$ and heavily relies on the \Mf{} library.
\item \demph{\Metric{}:} 
  A classical formalization of metric spaces.
  This library contains results that are independent from the theory of represented spaces.
  It also defines the metric notion of a limit, an efficient limit and proves some statements about continuity.
  The layout of \Metric{} roughly follows that of the Coquelicot \cite{boldo2015coquelicot} library.
  However, the definitions are kept close to the classical mathematical treatment and are thus most similar to the metric spaces that can be found in \Coq{}'s standard library.
\end{itemize}
\begin{figure}
  \centering
  \includegraphics[width=0.9\textwidth]{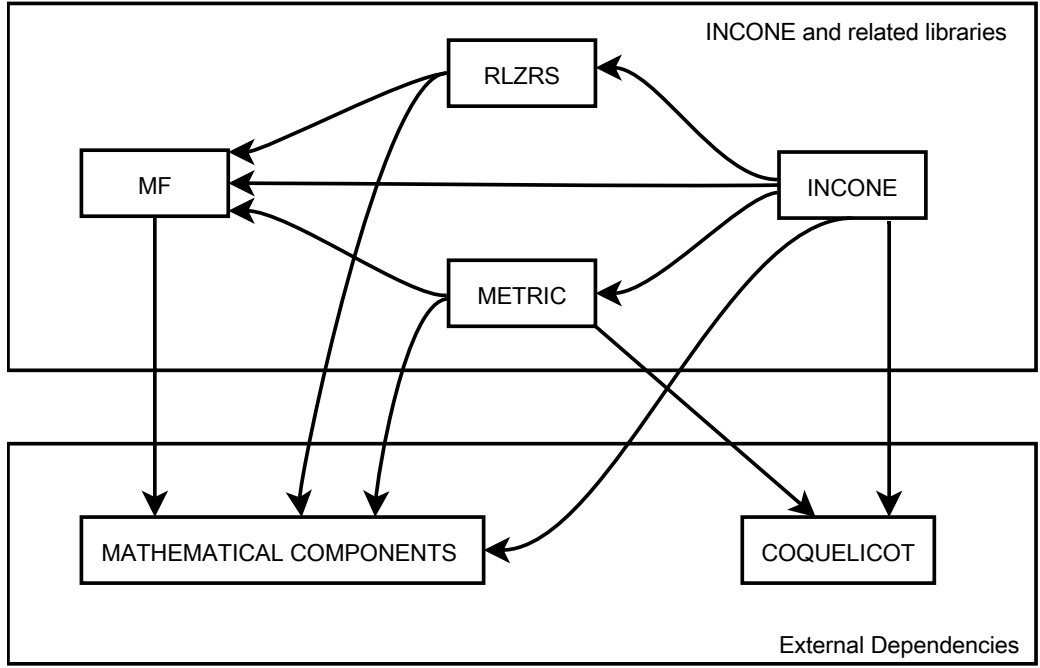}
  \caption{Internal and external dependencies of the incone library.}\label{fig: dependencies}
\end{figure}
The libraries above can, and have to be, installed separately but should be considered part of the \Incone{} system.
However, \Incone{} also has some external dependencies.
As mentioned before it uses the ssreflect proof language.
This language originates from the mathematical components development and \Incone{} also uses other parts of that library \cite{gonthier:inria-00258384}.
Another external dependency that \Incone{} shares with \Metric{} is Coquelicot, a popular formalization of some facts from analysis that is conservative over the axiomatization of the real numbers in the standard library \cite{boldo2015coquelicot}.
See Figure~\ref{fig: dependencies} for an overview of \Incone{}, its sublibraries and dependencies.

The \Mf{} and \Rlzrs{} libraries are fairly manageable and their few files can be navigated by hand and are well documented.
The \Metric{} library is less accessible and heavily interleaved with both the Coquelicot library and the development of metric spaces in \Coq{}'s standard library.
Thus, we take some space to sketch some of its more important features and give additional justification of its existence.

\subsection{The library for metric spaces}

Metric spaces are available in \Coq{}'s standard library and there are several other developments that include a treatment of metric spaces or related concepts.
The original intention of the \Metric{} library was to provide an interface between the the metric spaces from the standard library and a similar concept provided by Coquelicot.
Coquelicot is a popular library for doing analysis in \Coq{} and is conservative over the axiomatization of the real numbers in the standard library.
The concept from Coquelicot that we are referring to is named a ``uniform space'' in Coquelicot.
However, these spaces are more restrictive than what a mathematician would expect to be called a uniform space and instead closely resemble pseudo-metric spaces.
For Coquelicot the choice of pseudo-metric spaces over metric spaces is due to neither \Coq{}, nor the axioms of the real numbers, implying functional extensionality.
This makes it challenging to define a metric on any kind of space of functions.
A pseudo-metric can often be defined in a straightforward manner.
The definitions of limits and continuity used in Coquelicot rely on filters instead of sequences.
As the ``uniform spaces'' are Coquelicot's most general structure and are first-countable, the definitions via filters are equivalent to those using sequences.
However, it is not clear whether the equivalence of the derived concepts can be proven in the setting that Coquelicot works in.

The \Metric{} library provides interfaces with both the standard library of \Coq{} (\verb$MS2M_S$, \verb$M_S2MS$, \verb$Uncv_lim$, \verb$cont_limin$, etc.) and the Coquelicot library (\verb$US2MS$, \verb$MS2US$, \verb$cntp_cntp$, etc.) so that it is possible to reuse results proven there (for instance the lemmas \verb$limD$, \verb$limM$, \verb$R_cmplt$ that assert the limit to commute with additional multiplication and the real numbers to be complete are proven this way).
The definitions of continuity that the metric library uses are as stated in this paper in the chapter about metric spaces.
While these definitions quantify over real numbers $\varepsilon$ and $\delta$, restricting the quantifiers to only reach over a discrete subset of the real numbers often leads to equivalent definitions.
For an easy back and forth between the different convergence statements \Incone{} provides a line of lemmas that contain the phrase ``tpmn'' (for ``two to the power minus $n$'') in their name (\verb$tpmnP$, \verb$lim_tpmn$, \verb$dns_tpmn$, etc.).
For instance \verb$tpmnP$ proves that $2^{-n} \leq 2^{-m}$ on real numbers reflects $m \leq n$ on natural numbers and \verb$lim_tpmn$ says that in the definition of the limit one may replace $\varepsilon$ by $2^{-n}$ and thereby quantification over $\RR$ by quantification over $\NN$.
A similar set of lemmas is provided for replacement of $\varepsilon$ and $\delta$ with rational numbers.
For rational numbers \Incone{} also provides a constructive instantiation of the restriction of the up function (\verb$upQ$, \verb$limQ$, \verb$archimedQ$, etc.).
The function \verb$upQ$, for instance, is useful for recovering computational content from proofs in the standard library, as it may in some cases be used as substitute for the non-computable function \verb$up$ defined on all reals.

Finally, the \Metric{} library proves some results that are not proven in Coquelicot, most of these rely on the axioms that our development assumes over the assumptions made by Coquelicot.
It also defines an interface for handling sub-spaces as continuity of partial functions is important for our purposes.
In the case of metric spaces both continuity and its sequential variant can be recovered from analogous point-wise notions while for represented spaces this is only the case for sequential continuity.
The availability of point-wise notions is not always an advantage and can introduce subtle problems in the treatment of sub-spaces.
Even in the most well-behaved cases there is a difference between a function being continuous in each point of the subset and the restriction of the function being continuous.
For instance, the characteristic function of the closed unit interval has a continuous restriction but is not continuous in either end-point.
This can easily lead to confusion and as a result some of the statements of important theorems about continuous functions on the real numbers from the standard library differ slightly from what a mathematician would expect them to say.
Examples for this include the mean value and the intermediate value theorem.
\end{document}